\newtheorem{theorem}{Theorem}[section]
\newtheorem{lemma}[theorem]{Lemma}
\newtheorem{claim}[theorem]{Claim}
\newtheorem{corollary}[theorem]{Corollary}
\newtheorem{proposition}[theorem]{Proposition}
\theoremstyle{definition}
\newtheorem{definition}[theorem]{Definition}
\newcommand{\acc}{\textbf{A}\xspace}
\newcommand{\rej}{\textbf{R}\xspace}
\newcommand{\calR}{\mathcal{R}}
\newcommand{\calS}{\mathcal{S}}
\newcommand{\calW}{\mathcal{W}}
\newcommand{\calL}{\mathcal{L}}
\newcommand{\calH}{\mathcal{H}}
\DeclareMathOperator{\Bin}{Bin}
\DeclareMathOperator*{\E}{\mathbb{E}}
\newcommand{\gs}[1]{}
\newcommand{\bt}[1]{}
\newcommand{\full}[1]{}
\title{Wisdom of the Crowd Voting: Truthful Aggregation of Voter Information and Preferences}
\author{%
  Grant Schoenebeck\thanks{This material is based upon work supported by the National Science Foundation under Grant No. 2007256 }\\
  School of Information\\
  University of Michigan \\
  \texttt{schoeneb@umich.edu} \\
  \and
  Biaoshuai Tao \\
  John Hopcroft Center for Computer Science \\
  Shanghai Jiao Tong University \\
  \texttt{bstao@sjtu.edu.cn} \\
}
\date{}
\begin{document}
\maketitle

\begin{abstract}
    We consider two-alternative elections where voters' preferences depend on a state variable that is not directly observable.  
    Each voter receives a private signal that is correlated to the state variable.   Voters may be ``contingent'' with different preferences in different states; or predetermined with the same preference in every state.  In this setting, even if every voter is a contingent voter, agents voting according to their private information need not result in the adoption of the universally preferred alternative, because the signals can be systematically biased.
    
    We present an easy-to-deploy mechanism that elicits and aggregates the private signals from the voters, and outputs the alternative that is favored by the majority.  In particular, voters truthfully reporting their signals forms a strong Bayes Nash equilibrium (where no coalition of voters can deviate and receive a better outcome).
\end{abstract}

\section{Introduction}
\label{sect:intro}
Social choice theory studies how to aggregate participants' heterogeneous opinions/preferences and output a collective decision from a set of  alternatives.  
Typically, though not always, it is assumed that each participant has a clear preference over the alternatives, e.g., a preference order over all the alternatives, a valuation for each alternative, etc.
However, even with only two alternatives, this is not the typical case for \emph{all} participants. In addition to the participants who have clear, \emph{predetermined} preferences for one alternative over the other, typically, there are also \emph{contingent} participants who only have partial information on which alternative is ``preferable for them'' and yet would like to select the  alternative that is ``preferable for them.''

A standard example would be an election with two candidates $a$ and $b$ coming from political parties $A$ and $B$ respectively.  Voters are normally partitioned into three types: some partisans for Party $A$ prefer candidate $a$ based on his support for the platform of Party $A$; other partisans for Party $B$ prefer candidate $b$ based on her support for the platform of Party $B$;  finally, there are swing voters who are largely indifferent between the parties' platforms and would like to elect whichever candidate can make more progress on non-partisan issues.  However, swing voters do not have perfect information about which candidate is better suited for addressing the non-partisan needs of the community.  Instead, each voter has a hunch of which candidate will perform better on the non-partisan problems facing the community based on both public information and their private experiences and beliefs.

Additional examples where participants have preferences, but may or may not know what is ``preferable for them'', abound.  In votes for corporate strategies, for hiring decisions, and for policy decisions typically some participants would like to select according to some truth they are collectively trying to discern (e.g., impact on future profits, suitability for the position, efficacy of policy, etc.) while others may have predetermined preferences, for example, because of the way they are uniquely affected (e.g., prominence of their position in future corporate strategy, vision/skills of the job candidate, who in particular the policy benefits/harms).  

\subsection{Informal Setting}
In this paper, we consider a two-alternative social choice setting where voters' preferences may depend on a state variable.  In the above example, the state is which candidate will make more progress on non-partisan issues. We consider binary state variables in the main body of this paper.
In general, the state need not be binary.  For example, we could generalize the above example so which candidate would preform better on non-partisan issues is on a scale of 1 to 10 (where 1 indicates candidate $a$ is much better and 10 indicates candidate $b$ is much better).  Predetermined voters' preferences would still not depend on the state.  However, contingent voters' may have different thresholds on the state where they would transfer their support from candidate $a$ to candidate $b$.
In Appendix~\ref{append:nonbinary}, we discuss the non-binary setting and extend our results to this setting.

The state variable is not directly observable in the election phase, as the performance of a new government official, a new hired employee, a new policy, etc., may not be revealed until many years after the vote.
Instead, each voter receives a signal that is correlated to the hidden state which models the information voters received from difference sources.

Our goal is to select the \emph{majority wish}, the alternative that would be preferred by the majority if they knew the state of the world.  If some type of predetermined voters forms a majority, this is rather easy.  However, in the case where the predetermined voters of neither alternative forms a majority, the mechanism needs to aggregate the information and preferences of the contingent voters and selects the alternative which, for a majority of voters, is ``preferable for them''.

\subsection{Imperfectly Informed Voters}
\paragraph{Social Choice}
Social choice theory with imperfectly informed voters dates back to Condorcet's jury theorem in 1785~\citep{Condorcet1785}, and has also been widely studied~\citep{Miller1986,Young1988,ladha1992,nitzan2017collective}.
In Condorcet's setting, there are two alternatives, one of which is ``correct'', and each voter votes for the correct alternative with probability $p$.
Condorcet's jury theorem states that the probability that the majority voting scheme outputs the correct alternative goes to $1$ as the number of voters increases when $p>0.5$, and, conversely, this probability goes to $0$ when $p<0.5$.

Two unfortunate limitations for Condorcet's jury theorem are
1) It fails to output the correct alternative in the case voters' beliefs are aligned to the incorrect alternative (i.e., $p<0.5$), and
2) It assumes voters vote truthfully and disregards voters' potential strategic behaviors.  However, even in the case all voters have the same preference for the correct alternative, voting truthfully still may not be a Nash equilibrium~\citep{Austen1996}.

\gs{Here we talk about the "correct" alternative in the Feddersen paper.  Is is the correct alternative, or the majority wish?  Seems like we should be more careful.}  
\bt{\citet{Feddersen1994} consider the majority. However, I think these two notions can be used interchangeably in the context of Condorcet's jury theorem. We can view the ``correct'' alternative as the majority wish (i.e., all the agents prefer the correct alternative). It seems to me that this is exactly what \citet{Austen1996} is doing.}

To circumvent the first limitation, \citet{Feddersen1994} consider the scenario where voters play a Nash equilibrium strategy profile, while the voting rule is still the majority scheme.
\citet{Feddersen1994} show that when the number of voters tends to infinity, the probability that a majority voting scheme outputs the correct alternative approaches to $1$ if voters play the equilibrium strategy profile, while this probability is bounded away from $1$ if voters play the truthful strategy profile instead.
Feddersen and Pesendorfer's model assigns each voter a preference parameter $x\in[-1,1]$ describing his/her alignment to the two alternatives.
In their model, the unique (Bayes) Nash equilibrium is characterized by two thresholds $x_0,x_1\in[-1,1]$ with $x_0 < x_1$ such that voters with preferences below $x_0$ always vote for one alternative, voters with preferences above $x_1$ always vote for the other, and voters with preferences between $x_0$ and $x_1$ vote truthfully.
Although Feddersen and Pesendorfer's solution guarantees that the correct alternative is output with high probability, it requires sophisticated voters.
The voters need to calculate the values of $x_0$ and $x_1$ to decide their actions.  The values $x_0$ and $x_1$ are each the zero point of a continuous monotone function involving a complicated Riemann integral. This is usually too demanding for voters in practice, especially those who do not have a mathematical background.  Moreover, it needs to be common knowledge that all agents can and will perform this computation.  

In this paper, we take a different approach.
Instead of asking voters to play the Nash equilibrium for majority voting, we seek to design a more sophisticated voting scheme, or a mechanism, than the majority voting scheme, such that voters are incentivized to vote \emph{truthfully} under the mechanism, while guaranteeing the correct alternative is output with high probability.
Our social choice mechanism thus elicits truthful information from the voters and then aggregates it.

\paragraph{Information Aggregation}
The information aggregation literature considers how to obtain a ``correct answer'' by aggregating individuals' partial information---the crowd's wisdom.
The straightforward procedure of outputting the answer that is believed to be correct by the majority does not always work~\citep{chen2004eliminating,simmons2011intuitive}.  An example where this fails is when the crowd has a strong prior belief for the incorrect answer while novel specialized knowledge is only shared among a minority of the agents.  It is also known that further calibration based on collecting participants' confidences (the posterior of their beliefs) does not always solve this problem~\citep{hertwig2012tapping,prelec2017solution}.
In a seminal work by~\citet{prelec2017solution}, a new ``surprisingly popular'' approach was proposed: the participants' predictions over the other remaining participants' reported answers are collected, and the answer that is reported by more participants than predicted is output (we will review this in Section~\ref{sect:prelec_et_al}).  They justified this approach both theoretically and through experimentation.  In particular, they demonstrate the viability of approaches that require agents to predict reports of the other agents.
\citet{hosseinisurprisingly} empirically extend the surprisingly popular approach to the non-binary setting, where the goal is to learn the correct \emph{ranking over many options} instead of the correct answer in two options.

The work by \citet{prelec2017solution} and \citet{hosseinisurprisingly} does not fit into the social choice context in two aspects.
Firstly, the objective for an information aggregation mechanism is to output the correct answer.
Participants who collaboratively contribute their knowledge/information do not have preferences on which answer is finally selected.
This is fundamentally different from the social choice setting where the whole point of a social choice mechanism is to select an alternative favored by the majority.
Secondly, as agents care about the outcome, agents may be strategic and manipulate their reports in order to make their preferred alternatives win,
while \citet{prelec2017solution} and \citet{hosseinisurprisingly} do not put the problem in a game theory setting.

\subsection{Our Results}
In this paper, we study the social choice problem in a game theory setting with the existence of imperfectly informed voters who only have partial information regarding which alternative is more favorable.
For various settings with two alternatives, we propose a mechanism that aggregates participants' private information and outputs the alternative favored by more than half of the participants---\emph{the majority wish}.
Our mechanisms are truthful, in the sense that the truthful strategy profile forms a \emph{strong Bayes Nash Equilibrium}.

\full{Will have to switch the following paragraph back a bit for the full version.}
Our main wisdom-of-the-crowd-voting mechanism, presented in the paper, applies to the case of two worlds/states, where each agent receives a binary signal.  
We show it has strong truthfulness and aggregation properties, even for relatively small numbers of agents (Section~\ref{sect:twoworldstwosignals}).  
This result requires that the type distribution of agents is a common knowledge.   
In Sect.~\ref{sect:unknowndistribution}, we show that this common knowledge assumption is necessary to attain a strongly truthful mechanism that outputs the majority wish with high probability.
Our results for non-binary worlds and signals are deferred to Appendix~\ref{append:nonbinary}.
Specifically, we extend our results to the setting of more than two worlds in the limit as the number of agents grows in Appendix~\ref{sect:nonbinaryWorlds}.  
Finally, we show how to extend our results to the case of more than two signals in Appendix~\ref{sect:nonbinary-signals}.

Our mechanism can easily be implemented using a simple questionnaire that elicits voters' information and preferences.  The questions in the questionnaire are friendly to those voters who do not have relevant backgrounds in mathematics, game theory, etc., and rely on the same notions, require participants to predict the responses of other agents, which is empirically validated by the surprisingly popular method from \citet{prelec2017solution}.

We ensure our mechanisms have a group truthfulness property by employing a ``median trick.''  Intuitively, by the Median Voter Theorem~\citep{median1,median2}, the median voter's vote (in a binary choice) is favored by the majority.  By a careful design, our mechanism ensures that the voters who are ``below'' the median have a conflict of interest to the voters who are ``above'' the median, which makes sure less than half of the voters have an incentive to deviate and those voters can only change the outcome in the unfavorable direction by the property of median.

% We use a novel ``median trick'' that ensures our mechanisms have a group truthfulness property.  Intuitively, by the \emph{median voter theorem}~\citep{median1,median2}, the median voter's vote is favored by the majority; by careful design, our mechanism ensures that the voters who are ``below'' the median have a conflict of interest to the voters who are ``above'' the median, which makes sure less than half of the voters have an incentive to deviate and those voters can only change the outcome in the unfavorable direction by the property of median.

From a high level, our work can be understood as a revelation principal applied to plurality voting.\footnote{Loosely speaking, the revelation principal states that any outcome that can be implemented in equilibrium can also be truthfully implemented in equilibrium by having the mechanism play the equilibrium strategy on behalf of the truthful agents.}  However this view is not entirely accurate.  First, our equilibrium concept is strong Bayes Nash equilibrium while we only know that plurality voting implements the majority wish outcome in (Bayesian) Nash equilibrium~\citep{Feddersen1994}.  Second, the revelation principal requires that agents report all their knowledge.  In our case, this would include the entire prior, which is not realistic.  In contrast, our mechanisms  only require that agents report a preference and a prediction of other agents' preferences.  Such reporting requirements have previously been shown to be pragmatic~\citep{john2012measuring,prelec2017solution,hosseinisurprisingly}.  Third, our setting is different than prior work~\citep{Feddersen1994}, and this makes our results incomparable.  In particular, we deal with a discrete state space.  This difference also allows us to achieve some of our results not just in the limit, but for finite sets of agents.  

Appendix~\ref{append:comparison} contains an additional comparison of the results with those in Reference~\cite{Feddersen1994}.

\subsection{Additional Related Work} 
\label{sec:related-work}
Our work is additionally related to the recent work on incentive compatible machine learning~\citep{perote2004strategy,dekel2010incentive,chen2018strategyproof}.  In these settings, the ``social choice'' being made is a machine learning predictor where agents benefit from their point having small error with respect to the chosen predictor.  As in our setting, the information of the optimal model is distributed among the agents.  Unlike our model, the private information and the preferences of the agents essentially coincide.  

\smallskip

Information elicitation without verification, sometimes call peer prediction, is another very related line of research which shares some of the intuitions and techniques from information aggregation. The information elicitation literature has been well established in the past decades,
starting from \citet{prelec2004bayesian}'s Baysian Truth Serum and \citet{MRZ05}'s peer-prediction method.  These mechanisms cleverly design payments to the agents to guarantee that the truthful reporting of received information forms a Nash equilibrium.

A mass of recent work (see~\citet{faltings2017game} for a survey) is dedicated to designing information elicitation mechanisms that work in more general settings (such as, supporting a small number of agents~\citep{dasgupta2013crowdsourced,zhang2014elicitability,fangyi2020wine}, allowing agents having information with different levels of sophistication~\citep{2016arXiv160607042G,kong2018Eliciting}), or achieving better truthful guarantees (such as, strict Nash equilibrium~\citep{SchoenebeckY2019robust}, informed Nash equilibrium~\citep{shnayder2016informed}, or even dominant strategy equilibrium~\citep{Kong2019,kong2020}) sometimes by studying more restrictive settings (e.g. multiple similar questions being asked simultaneously~\citep{dasgupta2013crowdsourced}).
Indeed, following Bayesian Truth Serum, many of these mechanism require the agents to predict other agents' reports~\citep{zhang2014elicitability,KongS2018Monotone,kong2018Eliciting,Kong2019,kong2020aaai,fangyi2020wine}

However, all these mechanisms rely on \emph{payments} to the agents to incentivize truth-telling.  In our social choice setting, on the other hand, we need to incentivize truth-telling solely based on choosing the winning alternative.

\bt{Address the following remark for the EC reviewer:

The introduction is somewhat confusing. While in the end I understand that
you are working on a probabilistic setting, I think it might be better
position the work along epistemic social choice (and a few papers were
cited afterwards including Condorcet's work), and then introduce the
motivating example (I like the example but currently it sounds like the
authors are discovering something that is overlooked by the community---it
is never overlooked, just there are more papers on other aspects of social
choice). See the recent survey for more work in this direction (and this
paper is related to the branch on strategic voters discussed there).

Shmuel Nitzan and Jacob Paroush. Collective Decision Making and Jury
Theorems. The Oxford Handbook of Law and Economics: Volume 1: Methodology
and Concepts. 2017
}
\gs{I added the reference to the main file.  We already say that it is widely studied.}

\section{Model and Preliminaries}
\label{sect:prelim_main}
In this paper, we will define our model and present our main result with two states and two signals.  The extension to general numbers of states and signals is discussed in Appendix~\ref{append:nonbinary}.  

Suppose a department of $T$ faculty members, or agents, need to decide whether or not to hire a new faculty candidate.
In our model, those $T$ agents are voting for two \emph{alternatives}, \acc and \rej (corresponding to ``accept'' and ``reject'').
There is a set of $2$ possible \emph{worlds} (or \emph{states}) $\calW=\{L,H\}$, which describes the underlying quality of the candidate. 
Here, $L$ stands for ``low quality'' where more agents prefer \rej, and $H$ stands for ``high quality'' where more agents prefer \acc.
Agents do not know which world is the actual world that they are in.
They have a common prior belief on the likelihood of each world.
In the candidate hiring example, the CV of the candidate is given to those $T$ faculty members before any individual interviews, and a prior belief is formed.
Let $W$ be the actual world which is viewed as a random variable.
Let $(P_L,P_H)=(\Pr(W=L),\Pr(W=H))$ be the prior over worlds.
Each agent knows the values of $P_L$ and $P_H$ as prior beliefs.
We further assume $P_L,P_H>0$.

An individual interview for this candidate is held for each of the $T$ agents. 
Each agent $t$ receives a \emph{signal}, represented by the random variable  $S_t$,  from the set $\calS=\{\ell,h\}$.
Given $W=L$ or $W=H$, the signals agents receive have the same distribution and are conditionally independent.
Let $P_{\ell L}=\Pr(S_t=\ell \mid W=L)$ be the probability that signal $\ell$ will be received (by an arbitrary agent $t$) if the actual world is $L$.
Let $P_{hL},P_{\ell H}$ and $P_{hH}$ have similar meanings.
The set of values $\{P_{\ell L},P_{\ell H},P_{hL},P_{hH}\}$ is known by all the agents.
Naturally, signals are positively correlated to the worlds:
\begin{equation}\label{eqn:postive_correlation_binary}
    P_{\ell L}>P_{\ell H}\qquad\mbox{and}\qquad P_{hH}>P_{hL}.
\end{equation}
However, signals may be systematically biased. For example, it is possible that $\ell$ is more likely to be received in both worlds: $P_{\ell L}>P_{hL}$ and $P_{\ell H}>P_{hH}$.

%Different agents may have different bars on the quality of the candidate.
%For example, if a theory candidate is interviewed in a computer science department, the theory faculty members may want to accept this candidate at a lower quality compared with the AI, software and hardware faculty members.
%To capture this, 
Each agent $t$ is assigned a \emph{utility function} $v_t:\calW\times\{\acc,\rej\}\to\{0,1,\ldots,B\}$.
Naturally, voters receive higher utilities for \acc in world $H$ and for \rej in world $L$:
\begin{equation}\label{eqn:ut_natural}
    v_t(H,\acc)>v_t(L,\acc)\qquad\mbox{and}\qquad v_t(H,\rej)<v_t(L,\rej).
\end{equation}
Since we can always rescale agents' utilities, for simplicity, we assume without loss of generality that agents' utilities are integers and bounded by $B\in\mathbb{Z}^+$.
Endowed with their prior beliefs, upon receiving their signals, agents will have posterior beliefs about the distribution of $W$ and react to the mechanism in a way maximizing their expected utilities accordingly.

We assume $v_t(L,\acc)\neq v_t(L,\rej)$ and $v_t(H,\acc)\neq v_t(H,\rej)$ for each agent $t$, so that agents always strictly prefer one alternative over the other.  
Let $F$ be the set of the \emph{candidate-friendly} agents $t$ who always prefer $\acc$: $v_t(H,\acc)>v_t(L,\acc)>v_t(L,\rej)>v_t(H,\rej)$. 
Let $U$ be the set of the \emph{candidate-unfriendly} agents $t$ who always prefer $\rej$: $v_t(L,\rej)>v_t(H,\rej)>v_t(H,\acc)>v_t(L,\acc)$.
Let $C$ be the set of the \emph{contingent} agents $t$ whose preference depends on the actual world: $v_t(L,\rej)>v_t(L,\acc)$ and $v_t(H,\acc)>v_t(H,\rej)$.

Let $\alpha_F=\frac{|F|}{|T|},\alpha_U=\frac{|U|}{|T|}$ and $\alpha_C=\frac{|C|}{|T|}$ be the fractions of the three types of agents.
Since the numbers of theory, AI, software, hardware faculty members are known to everyone, we assume that the values of $\alpha_F,\alpha_U$ and $\alpha_C$ are common knowledge.
Admittedly, this assumption may not apply to some specific scenarios.
In Sect.~\ref{sect:unknowndistribution}, we discuss the model where agents have only partial information on $\alpha_F,\alpha_U$ and $\alpha_C$, and present a strong impossibility result for this model.

The goal is to output the \emph{majority wish}, the alternative that is preferred by at least half of the agents conditioned on the true state.
We assume $T$ is an odd number to avoid ties.
Clearly, \acc should be output if $\alpha_F>\frac12$, \rej should be output if $\alpha_U>\frac12$.
In the case $\alpha_F,\alpha_U<\frac12$, $\acc$ should be output if the actual world is $H$ and $\rej$ should be output if the actual world is $L$.

Our results will sometimes require $T$, the number of agents, to be sufficiently large, and it may be helpful to think of $T\rightarrow\infty$.  However, we will always assume that the parameters of the model: $B$, $\{P_L,P_H\}$,  $\{P_{\ell L},P_{\ell H},P_{hL},P_{hH}\}$, and $\{\alpha_F,\alpha_U,\alpha_C\}$, do not depend on $T$ in any way.

The traditional social choice setting with agents having predetermined preferences can be viewed as a special case of our model, by setting $|C|=0$ (i.e., there is no contingent agent).

For the ease of comprehension, we have used the faculty candidate hiring as a running example for this paper.  This can be replaced by any example from most practical scenarios where different types of imperfectly informed voters are voting between two alternatives, including all the examples we mentioned in Section~\ref{sect:intro}.

In our election example in the second paragraph of Sect.~\ref{sect:intro}, \acc and \rej can represent candidates $a$ and $b$ respectively.
Correspondingly, for this example, $L$ and $H$ can represent ``$a$ is better suited'' and ``$b$ is better suited'' respectively.  $F$ and $U$ represent voters aligned to party $A$ and $B$ respectively, while $C$ represents those swing voters whose preferences depend on the signals (in this case, the signals correspond to their private experiences and beliefs, which may be based on information they obtained from their favorite TV programs, newspapers, etc).

As a remark, in the main body of this paper, we discuss the scenario with two worlds and two signals.
This captures many essential ideas behind our mechanism for general cases, and we view the result in this section as the main contribution of this paper.
In general, we can have the set of worlds be $\calW=\{1,\ldots,N\}$ which describes the quantitative quality, and we can have the set of signals be $\calS=\{1,\ldots,M\}$ which are positively correlated to the worlds (see Appendix~\ref{append:nonbinary}).
However, in many scenarios, it is reasonable and also more practical to assume that the worlds and the signals can only be either ``good'' or ``bad''. 
This setting is the simplest, cleanest, and, in many ways, the most intuitive.  Additionally, it is much easier for different agents to be consistent in distinguishing good and bad than to similarly evaluate the quality using numerical scale from $1$ to $M$.
In addition, numerical ranking causes more subjective systematic bias, and the heterogeneity of the bias among the agents makes agents' reports more noisy.
Finally, our mechanism can be much easier to implement under the setting with $N=2$ and $M=2$, which makes our mechanism more appealing in practice.

\subsection{Strategy and \texorpdfstring{$\varepsilon$}{epsilon}-strong Bayes Nash equilibrium}
A \emph{mechanism} collects a \emph{report} from each agent, and then outputs an alternative which is either \acc or \rej.
The mechanism specifies the content of the report by specifying questions for the agents.
Examples of those questions include asking each agent for the signal he/she receives, asking each agent to predict the other agents' reports, etc.

Let $\mathcal{R}$ be the space of all possible reports, which depends on the design of the mechanism.
A \emph{pure strategy} of an agent is given by a function $\sigma:\calS\to\mathcal{R}$ that maps a signal received by this agent to a report.
In a \emph{mixed strategy}, $\sigma$ can be a random function.

An agent's strategy is \emph{truthful} if it always specifies the correct answer to each question in the report, to the best of the agent's knowledge after receiving the signal.
For example, if the mechanism asks for the agent's signal, an agent playing the truthful strategy should report the signal he/she receives; if the mechanism asks the agents to predict the fraction of agents who will receive signal $m$, an agent playing the truthful strategy should report his/her posterior belief on this computed by the Bayes rule (Section~\ref{sect:posterior_update_main} discusses the computation of posterior beliefs). 

Given a strategy profile $\Sigma=(\sigma_1,\ldots,\sigma_T)$, let $u_t(\Sigma)$ be the expected utility of agent $t$, where the expectation is taken over the sampling of agents' signals.
Notice that we use $u_t$ to denote the \emph{ex-ante} utility (as defined just now) and we have used $v_t$ to denote the \emph{ex-post} utility (see the fourth paragraph in Section~\ref{sect:prelim_main}). 
Most parts of this paper will focus on the \emph{ex-ante} utility, especially when we are talking about any equilibrium solution concept.

Since we are in a social choice setting with a potentially large number of agents, a single agent's behavior may not have much effect.  Thus, instead of the typical Bayes Nash Equilibrium, we consider a much stronger goal---the strong Bayes Nash equilibrium.

\begin{definition}\label{def:SBNE}
A strategy profile $(\sigma_1,\ldots,\sigma_T)$ is an \emph{$\varepsilon$-strong Bayes Nash equilibrium} if there does not exist a subset of agents $D$ and a strategy profile $(\sigma_1',\ldots,\sigma_T')$ such that
\begin{enumerate}
    \item $\sigma_t=\sigma_t'$ for each $t\notin D$,
    \item $u_t(\sigma_1',\ldots,\sigma_T')\geq u_t(\sigma_1,\ldots,\sigma_T)$ for each $t\in D$, and
    \item there exist $t\in D$ such that $u_t(\sigma_1',\ldots,\sigma_T')> u_t(\sigma_1,\ldots,\sigma_T)+\varepsilon$.
\end{enumerate}
\end{definition}

\gs{I put in more things discribing the equilibrium}
When a strategy profile $(\sigma_1,\ldots,\sigma_n)$ is not an $\varepsilon$-strong Bayes Nash equilibrium, we will call the subset of the agents $D$ in Definition~\ref{def:SBNE} the \emph{deviating agents} or the \emph{deviating coalition}. By Definition~\ref{def:SBNE}, every agent of the deviating coalition must be at least as well off, and some must be strictly better off by at least $\epsilon$.
Notice that a $0$-strong Bayes Nash equilibrium is the conventional strong Bayes Nash equilibrium.  The larger $\varepsilon$ is, the harder it is to find a deviating coalition, and so the larger the set of $\varepsilon$-strong Bayes Nash equilibria.

\subsection{Posterior Update by Bayes Rule}
\label{sect:posterior_update_main}

\gs{Why do we ever have $\mathcal{M}$ in the $\lambda$ notation?  Isn't the mechanism always clear?}

Upon receiving a signal $S_t\in\{\ell,h\}$, agent $t$ updates his/her posterior beliefs (about the probability that (s)he is in world $L$ or $H$, the fraction of agents that will receive signal $\ell$ or $h$, etc.) based on Bayes rule.
Let $T_{m'm}$ be the probability that an agent who receives signal $m\in\{\ell,h\}$ believes that another agent will receive signal $m'\in\{\ell,h\}$.

Suppose agent $t$ receives signal $S_t=m\in\{\ell,h\}$.
(S)he believes that the actual world is $n\in\{L,H\}$ with probability
$$\Pr\left(W=n\mid S_t=m\right)=\frac{\Pr(W=n)\Pr(S_t=m\mid W=n)}{\Pr(S_t=m)}=\frac{P_nP_{mn}}{P_LP_{mL}+P_HP_{mH}}.$$
Then, $T_{m'm}$ can be computed as follows:
\begin{align}\label{eqn:Tmm'_main}
    T_{m'm}&=\Pr\left(W=L\mid S_t=m\right)\cdot P_{m'L}+\Pr\left(W=H\mid S_t=m\right)\cdot P_{m'H}\nonumber\\
    &=\frac{P_LP_{mL}}{P_LP_{mL}+P_HP_{mH}}\cdot P_{m'L}+\frac{P_HP_{mH}}{P_LP_{mL}+P_HP_{mH}}\cdot P_{m'H}.
\end{align}

%We let use $T_{m' m}$ to denote the probability that this agent who receives signal $m$ believes that another agent will receive signal $m'$.  Thus:
%\begin{align}\label{eqn:Tmm'_main}
%    T_{m'm}&=P_{m'L}\cdot\Pr\left(W=L\mid S_t=m\right)+P_{m'H}\cdot\Pr\left(W=H\mid S_t=m\right)\nonumber\\
%  &= (1- \Pr\left(W=H\mid S_t=m\right)) \cdot P_{m'L}\cdot+  \Pr\left(W=H\mid S_t=m\right) \cdot P_{m'H}.
%\end{align}

% This is from before
%
% Upon receiving a signal $S_t\in\{\ell,h\}$, agent $t$ updates his/her posterior beliefs (about the probability that (s)he is in world $L$ or $H$, the fraction of agents that will receive signal $\ell$ or $h$, etc.) based on Bayes rule.
% Below we consider an arbitrary $m\in\calS=\{\ell,h\}$ and an arbitrary $n\in\calW=\{L,H\}$.

% \gs{When are these used?  are they in the best place?}

% Suppose agent $t$ receives signal $S_t=m$.
% He believes that the actual world is $n$ with probability
% $$\Pr\left(W=n\mid S_t=m\right)=\frac{\Pr(W=n)\Pr(S_t=m\mid W=n)}{\Pr(S_t=m)}=\frac{P_nP_{mn}}{P_LP_{mL}+P_HP_{mH}}.$$

% Furthermore, we let use $T_{m' m}$ to denote the probability that this agent who receives signal $m$ believes that another agent will receive signal $m'$.  Thus:
% \begin{align}\label{eqn:Tmm'_main}
%     T_{m'm}&=P_{m'L}\cdot\Pr\left(W=L\mid S_t=m\right)+P_{m'H}\cdot\Pr\left(W=H\mid S_t=m\right)\nonumber\\
%     &=\frac{P_LP_{mL}}{P_LP_{mL}+P_HP_{mH}}\cdot P_{m'L}+\frac{P_HP_{mH}}{P_LP_{mL}+P_HP_{mH}}\cdot P_{m'H}.
% \end{align}

Given a strategy profile $\Sigma=\{\sigma_1,\ldots,\sigma_T\}$ and a mechanism $\mathcal{M}$, let $\lambda_n^\mathcal{\acc, M}(\Sigma)$ be the probability that alternative \acc is announced as the winner given the actual world is $n$, then $\lambda_n^\mathcal{\rej, M}(\Sigma)=1-\lambda_n^\mathcal{\acc, M}(\Sigma)$ is the probability that alternative \rej wins given the actual world is $n$.
We will omit the superscript $\mathcal{M}$ when it is clear what mechanism we are discussing.

All the agents' \emph{ex-ante} utilities depend exclusively on $\lambda_L^\acc(\Sigma),\lambda_H^\acc(\Sigma)$ (or $\lambda_L^\rej(\Sigma),\lambda_H^\rej(\Sigma)$), and each agent $t$'s utility is given by
\begin{equation}\label{eqn:utsigma1_main}
    u_t(\Sigma)=P_L\left(\lambda_L^\acc(\Sigma)v_t(L,\acc)+\lambda_L^\rej(\Sigma)v_t(L,\rej)\right)+P_H\left(\lambda_H^\acc(\Sigma)v_t(H,\acc)+\lambda_H^\rej(\Sigma)v_t(H,\rej)\right),
\end{equation}
which can also be rewritten as (by noticing $\lambda_L^\acc(\Sigma)+\lambda_L^\rej(\Sigma)=1$ and $\lambda_H^\acc(\Sigma)+\lambda_H^\rej(\Sigma)=1$)
\begin{equation}\label{eqn:utsigma2_main}
    u_t(\Sigma)=P_Lv_t(L,\acc)+P_Hv_t(H,\rej)+P_L\lambda_L^\rej(\Sigma)(v_t(L,\rej)-v_t(L,\acc))+P_H\lambda_H^\acc(\Sigma)(v_t(H,\acc)-v_t(H,\rej)),
\end{equation}
and
\begin{equation}\label{eqn:utsigma3_main}
     u_t(\Sigma)=P_Lv_t(L,\rej)+P_Hv_t(H,\rej)+P_L\lambda_L^\acc(\Sigma)(v_t(L,\acc)-v_t(L,\rej)) 
     +P_H\lambda_H^\acc(\Sigma)(v_t(H,\acc)-v_t(H,\rej)).
\end{equation}
We will always use $\Sigma^\ast=\{\sigma_1^\ast,\ldots,\sigma_T^\ast\}$ to denote the truthful strategy profile.

Table~\ref{tab:notation_main} lists all the frequently used notations.
\begin{table}[ht]
    \centering
    \begin{tabular}{ll}
    \hline
    notation & meaning\\
    \hline
        $T$ & the total number of agents\\
        $\calW=\{L,H\}$ & the set of all worlds \\
        $\calS=\{\ell,h\}$ & the set of all signals\\
        $P_n$ & the prior belief for the probability the actual world is $n$\\
        $P_{mn}$ & the probability of receiving signal $m$ under world $n$\\
        $T_{m'm}$ & the posterior belief for another agent to receive $m'$ given that signal $m$ is received\\
        $v_t(n,\acc),v_t(n,\rej)$ & the (\emph{ex-post}) utility for agent $t$ for alternative \acc, \rej if the actual world is $n$\\
        $B$ & the upper bound for all agents' (\emph{ex-post}) utilities\\
        $u_t(\Sigma)$ & the (\emph{ex-ante}) expected utility for agent $t$ given strategy profile $\Sigma$\\
        $F,C,U$ & candidate-friendly agents, contingent agents, candidate-unfriendly agents\\
        $\alpha_F,\alpha_C,\alpha_U$ & fractions of the three types of agents\\
        $\lambda_n^\acc(\Sigma),\lambda_n^\rej(\Sigma)$ & the probability a given mechanism outputs \acc, \rej for strategy profile $\Sigma$\\
        $\Sigma^\ast$ & the truthful strategy profile\\
        $c$ & will be defined in (\ref{eqn:C})\\
        $I(\Sigma)$ & under the assumptions $\alpha_F<0.5$ and $\alpha_U<0.5$, the probability a given mechanism\\
        & outputs the alternative \emph{not} favored by the majority given the strategy profile $\Sigma$;\\
        & will be defined in (\ref{eqn:errorrate})\\
        \hline
    \end{tabular}
    \caption{Table of notations.}
    \label{tab:notation_main}
\end{table}

\section{The Wisdom-of-the-Crowd-Voting Mechanism}
\label{sect:twoworldstwosignals}
We will first review Prelec et al.'s \emph{Surprisingly Popular} algorithm \citep{prelec2017solution}, which works under a setting similar to ours but with non-strategic agents.
Some part of the intuition behind our mechanism is based on Prelec et al.'s work.

\subsection{Prelec et al.'s Surprisingly Popular Algorithm}
\label{sect:prelec_et_al}
%In this section, we review the Surprisingly Popular algorithm proposed by \citet{prelec2017solution}, which works under the setting similar to ours but with non-strategic agents.
For the purpose of this paper, we will describe the algorithm with two worlds and two signals.
The algorithm asks each agent $t$ the signal (s)he receives, and his/her belief on the fraction of agents who have received signal $\ell$ (or signal $h$).
In our notation, each agent reports the realization of $S_t$ and, assuming $S_t=m\in\{\ell,h\}$, the value of $T_{\ell m}$ (or $T_{hm}$, which equals to $1-T_{\ell m}$).
Since agents are assumed to be non-strategic, those who receive signal $\ell$ will report $(\ell,T_{\ell\ell})$ and those who receive signal $h$ will report $(h,T_{\ell h})$.
The algorithm then computes the fraction of agents who report signal $\ell$, and the average value of all the reported $T_{\ell m}$'s.
If the former is greater than the latter, $\ell$ is considered as being ``surprisingly popular'' and the algorithm will conclude that $L$ is the actual world.
Otherwise, $h$ will be considered as being ``surprisingly popular'' and $H$ will be concluded as being the actual world.

The correctness of this algorithm is based on the following simple yet important observation in Theorem~\ref{thm:keyInequality}.
In particular, the average of agents' reported predictions (those $T_{\ell m}$'s) will be between $T_{\ell h}$ and $T_{\ell\ell}$.
When the number of agents $T$ is sufficiently large, the actual fraction of agents who receive signal $\ell$ will be either approximately $P_{\ell L}$ (if $L$ is the actual world) or approximately $P_{\ell H}$ (if $H$ is the actual world).
Theorem~\ref{thm:keyInequality} then implies the correctness of the Surprisingly Popular algorithm.

\begin{theorem}\label{thm:keyInequality}
    $P_{\ell H}<T_{\ell h}<T_{\ell\ell}<P_{\ell L}$ and $P_{hH}>T_{hh}>T_{h \ell}>P_{h L}$
\end{theorem}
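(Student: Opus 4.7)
The plan is to exploit the structural fact, immediate from equation (\ref{eqn:Tmm'_main}), that $T_{m'm}$ is a convex combination of $P_{m'L}$ and $P_{m'H}$ with weights equal to the posterior beliefs $\Pr(W=L\mid S_t=m)$ and $\Pr(W=H\mid S_t=m)$. Since the paper assumes each $P_n>0$ and each $P_{mn}>0$, both weights are strictly positive. By the positive correlation assumption (\ref{eqn:postive_correlation_binary}), $P_{\ell L}>P_{\ell H}$, so $T_{\ell m}$ strictly lies between $P_{\ell H}$ and $P_{\ell L}$ for every $m\in\{\ell,h\}$. This immediately yields the outer inequalities $P_{\ell H}<T_{\ell h}$ and $T_{\ell\ell}<P_{\ell L}$, and symmetrically $P_{hL}<T_{h\ell}$ and $T_{hh}<P_{hH}$.

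The main work is the middle inequality $T_{\ell h}<T_{\ell\ell}$. Since both values are convex combinations of the same two endpoints $P_{\ell H}<P_{\ell L}$, it suffices to show that $T_{\ell\ell}$ places more weight on the larger endpoint $P_{\ell L}$ than $T_{\ell h}$ does; equivalently, that the posterior odds ratio
\[
\rho_m \;=\; \frac{\Pr(W=H\mid S_t=m)}{\Pr(W=L\mid S_t=m)} \;=\; \frac{P_H P_{mH}}{P_L P_{mL}}
\]
is strictly smaller when $m=\ell$ than when $m=h$. Writing $T_{\ell m}=\tfrac{1}{1+\rho_m}P_{\ell L}+\tfrac{\rho_m}{1+\rho_m}P_{\ell H}$, the function $\rho\mapsto T_{\ell m}(\rho)$ is strictly decreasing in $\rho$ (as $P_{\ell L}>P_{\ell H}$), so $\rho_\ell<\rho_h$ will give $T_{\ell h}<T_{\ell\ell}$. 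The inequality $\rho_\ell<\rho_h$ reduces after clearing denominators to $P_{\ell H}P_{hL}<P_{\ell L}P_{hH}$, which is immediate from the two positive correlation inequalities $P_{\ell L}>P_{\ell H}$ and $P_{hH}>P_{hL}$ together with positivity of all entries.

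The second chain $P_{hH}>T_{hh}>T_{h\ell}>P_{hL}$ is handled by exactly the same argument with the roles of $\ell$ and $h$ interchanged: now $T_{hm}$ is a convex combination of $P_{hL}<P_{hH}$, and $\rho\mapsto T_{hm}(\rho)$ is strictly \emph{increasing} in $\rho_m$, so the larger value of $\rho_h$ relative to $\rho_\ell$ gives $T_{hh}>T_{h\ell}$; the outer bounds follow from strict convex combination as before.

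The only subtlety is the cross-multiplication step reducing the monotonicity of $\rho_m$ to the positive correlation condition, and it is only a one-line calculation, so I do not anticipate any real obstacle. No additional assumption beyond (\ref{eqn:postive_correlation_binary}) and positivity of the prior and the signal conditionals is needed.
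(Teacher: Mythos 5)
Your proof is correct and follows essentially the same route as the paper's: both express $T_{\ell m}$ as a convex combination of $P_{\ell L}$ and $P_{\ell H}$ with posterior weights, obtain the outer inequalities from strict positivity of those weights, and get the middle inequality by showing that signal $\ell$ shifts the posterior weight toward $P_{\ell L}$ relative to signal $h$ (you compare the odds ratios $\rho_\ell<\rho_h$ by one cross-multiplication, while the paper compares each posterior $\Pr(W=L\mid S_t=m)$ to the prior $P_L/(P_L+P_H)$ --- an equivalent calculation). The only caveat is that the strict positivity of every $P_{mn}$, which your argument invokes and which the paper's own proof also implicitly needs for the strict outer inequalities (e.g.\ $P_{hL}=0$ would force $T_{\ell h}=P_{\ell H}$), is not actually stated as an assumption in the model.
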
   
The intuition behind the theorem is straightforward. 
The inequality $P_{\ell H}<P_{\ell L}$ is by the positive correlation~(\ref{eqn:postive_correlation_binary}).
The inequality $T_{\ell h}<T_{\ell\ell}$ is also intuitive: the positive correlation between the signals and worlds implies the positive correlation between two agents' received signals.
Finally, (\ref{eqn:Tmm'_main}) implies each of $T_{\ell h}$ and $T_{\ell\ell}$ is a weighted average of $P_{\ell H}$ and $P_{\ell L}$, so the value is between $P_{\ell H}$ and $P_{\ell L}$.
This concludes the first inequality chain, and the second can be shown similarly.
\begin{proof}
We will only show the first inequality chain.  The second chain follows directly from the first by noticing each term in the second chain is 1 minus a term in the first.  

By (\ref{eqn:Tmm'_main}) and $P_{\ell H}<P_{\ell L}$ in (\ref{eqn:postive_correlation_binary}), we have $$T_{\ell\ell}=\frac{P_LP_{\ell L}^2+P_HP_{\ell H}^2}{P_LP_{\ell L}+P_HP_{\ell H}}<\frac{P_LP_{\ell L}^2+P_HP_{\ell H}P_{\ell L}}{P_LP_{\ell L}+P_HP_{\ell H}}=P_{\ell L}$$
and 
$$T_{\ell h}=\frac{P_LP_{\ell L}P_{hL}+P_HP_{\ell H}P_{hH}}{P_LP_{hL}+P_HP_{hH}}>\frac{P_LP_{\ell H}P_{hL}+P_HP_{\ell H}P_{hH}}{P_LP_{hL}+P_HP_{hH}}=P_{\ell H}.$$
Finally, to show $T_{\ell\ell}>T_{\ell h}$, it suffices to show that 
$$\pi_1:=\frac{P_LP_{\ell L}}{P_LP_{\ell L}+P_HP_{\ell H}}>\pi_2:=\frac{P_LP_{hL}}{P_LP_{hL}+P_HP_{hH}},$$ 
since $T_{\ell\ell}=\pi_1P_{\ell L}+(1-\pi_1)P_{\ell H}$, $T_{\ell h}=\pi_2P_{\ell L}+(1-\pi_2)P_{\ell H}$ and $P_{\ell L}>P_{\ell H}$. %and $P_{hL}<P_{hH}$.
Simple calculations show this:
$$\pi_1>\frac{P_L}{P_L+P_H}>\pi_2,$$
where the first inequality is due to $P_{\ell L}>P_{\ell H}$ and the second inequality is due to $P_{hH}>P_{hL}$.
\end{proof}

Throughout this section, we use $c$ to denote the following constant.
\begin{equation}\label{eqn:C}
    c=\frac13\min\left\{T_{\ell h}-P_{\ell H},T_{\ell\ell}-T_{\ell h},P_{\ell L}-T_{\ell\ell}, P_{hH}-T_{hh},T_{hh}-T_{\ell h}, T_{\ell h}-P_{\ell H}\right\}
\end{equation}

\subsection{The Wisdom-of-the-Crowd-Voting Mechanism}
\label{subsec:mech}
%Our mechanism is shown in Mechanism~\ref{mechanism:M=N=2}. 

At Step~3 of the mechanism, we only elicit predictions from those contingent agents, and those candidate-friendly (candidate-unfriendly resp.) agents' predictions are treated as $0$ ($1$ resp.).
In Sect.~\ref{sect:alternative_mechanism}, we discuss an alternative mechanism where we elicit all the predictions and then take the median.
The alternative mechanism shares the same theoretical properties, and we discuss the advantages and disadvantages of this alternative.

\gs{I change the mechanism so that the predictions are changed as well.  This make it more "strategy-proof"}
\bt{Sure. Do you think it is a good idea that we also remark that we do not need to change the predictions? In my opinion, changing the predictions make the proof goes easier. However, it will make our mechanism appears stranger to agents without background. An agent can understand why his signal is changed to his favored candidate, but he may not see why having an extreme prediction helps.}
\bt{After some tries, I decided not to convert the predictions. My first reason is that, if we convert the predictions, we need may be two more sentences for the proof of Theorem~\ref{thm:msw} (it is slightly harder to argue that the median still falls into the middle interval), and that will make our paper exceeds the page limit. For the second reason, I still do not like the idea of this conversion, for the reasons mentioned in my previous comment.}

Mechanism~\ref{mechanism:M=N=2} may look too obscure to be implemented in practice.
However, very simple and understandable questionnaires implementing the mechanism can be designed.
In our running example of faculty candidate hiring, the questionnaire corresponding to our mechanism could look like:

%\noindent\fbox{%
    %\parbox{\textwidth}{%
%An example for the questionnaire:
\begin{enumerate}
    \item Choose one of the followings:
    \begin{enumerate}
        \item I definitely want to accept this candidate, regardless of my colleagues' inputs.
        \item I definitely want to reject this candidate, regardless of my colleagues' inputs.
        \item After talking to the candidate, I am more inclined to accept him/her than before.
        \item After talking to the candidate, I am more inclined to reject him/her than before.
    \end{enumerate}
    \item If your answer is (c) or (d) in the first question, what percentage of the faculty members do you believe will choose (a) or (c) in the first question?
\end{enumerate}
%}%
%}

\begin{algorithm}[H]
\caption{The Wisdom-of-the-Crowd-Voting Mechanism}
\label{mechanism:M=N=2}
\begin{algorithmic}[1]
    \STATE Each agent $t$ reports his/her type ($F$, $U$ or $C$) to the mechanism, and if he/she is of type $C$, the signal (s)he receives (either $\ell$ or $h$), denoted by $\bar{s}_i\in\{\ell,h\}$. %his/her type ($F$, $U$ or $C$), his/her posterior belief of the fraction of agents who will report signal $\ell$, denoted by $\bar{\delta}_t$. %As a result, the report space is $\calR=\{1,2\}\times\{0.5,1.5,2.5\}\times[0,1]$. Let $(\bar{s}_t,\bar{\theta}_t,\bar{\delta}_t)$ be $t$'s report.
    \STATE If agent $t$ reports type $F$, his reported signal will be automatically treated as $\bar{s}_t=h$; %and his reported prediction as $\bar{\delta}_t = 0$; 
    if agent $t$ reports type $U$, his reported signal will be automatically treated as $\bar{s}_t=\ell$. %and his reported prediction as $\bar{\delta}_t = 1$. 
    %\emph{The prediction $\bar{\delta}_t$ in the previous step should be made with this treatment being considered, and the mechanism makes this clear to the agents.}
    \STATE For each agent $t$ of type $C$, ask him/her to predict the fraction of agents who will report signal $h$. Let $\bar{\delta}_t$ be $t$'s prediction. \emph{The prediction $\bar{\delta}_t$ should be made with the type $F$ and type $C$ agents' predictions defined in the previous step being considered, and the mechanism makes this clear to the agents.} For each agent $t$ of type $F$, $\bar{\delta}_t$ is set to $0$, and for each agent $t$ of type $U$, $\bar{\delta}_t$ is set to $1$.
    \STATE Compute the \emph{median} of those $\bar{\delta}_t$, denoted by $\bar\delta$.
    \STATE If more than half of the agents report type $F$, announce \acc being the winning alternative; if more than half of the agents reports type $U$, announce \rej being the winning alternative.
    \STATE If the fraction of the agents reporting $\bar{s}_t=h$ is more than the median $\bar\delta$, announce \acc being the winning alternative; otherwise, announce \rej being the winning alternative.
\end{algorithmic}
\end{algorithm}

\bt{the figure doesn't look good to me...} \gs{you need to store it in a format that does not lose information, such as eps or pdf}
\iffalse
\begin{figure}
    \centering
    \includegraphics[width=\textwidth]{questionnaire.jpg}
    \caption{A questionnaire for the faculty candidate hiring example.}
    \label{fig:questionnaire}
\end{figure}
\fi

In the first question above, faculty members of type $F$ (type $U$ resp.) will choose (a) ((b) resp.).
Faculty members of type $C$ will choose either (c) or (d) depending on the signals they have received.
If a faculty member receives signal $h$, (s)he believes world $H$ is more likely than before.
Notice that, it is still possible that (s)he believes the probability of world $H$ being the actual world is less than $50\%$ and (s)he still prefers rejecting the candidate based on his/her private information (for example, his/her prior belief may be only $10\%$ for world $H$, and his/her posterior belief for this increases to $30\%$ upon receiving a signal $h$), so the description that (s)he is ``more inclined to accept the candidate than before'' accurately implements Mechanism~\ref{mechanism:M=N=2}.
The same holds for those receiving signal $\ell$.

\subsection{Main Theoretical Results for Our Mechanism}
We first show that our mechanism indeed achieves (with an exponentially small failure probability) the goal of outputting the alternative favored by the majority, assuming agents are truth-telling.  

%Given a strategy profile $\Sigma$, we define $I(\Sigma) \equiv P_1\lambda_1(\Sigma)+P_2(1-\lambda_2(\Sigma))$ to be the error rate of a strategy: the probability the mechanism selects the alternative that is not preferred by the majority.

\begin{theorem}\label{thm:msw}
If all the agents play the truthful strategy $\Sigma^*$, then, with probability at least $1-2\exp(-2c^2\alpha_{C}T)$ (refer to Table~\ref{tab:notation_main} for notations), our mechanism outputs an alternative that is favored by more than half of the agents.
\end{theorem}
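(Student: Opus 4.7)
The plan is to split on whether $\alpha_F$ or $\alpha_U$ exceeds $1/2$. If $\alpha_F>1/2$, then under truthful play more than half the agents report type $F$, so Step~5 announces \acc, which is the majority wish by definition; the case $\alpha_U>1/2$ is symmetric, and both cases succeed with probability $1$. The remaining work is the regime $\alpha_F,\alpha_U<1/2$, where the majority wish is \acc in world $H$ (since $\alpha_F+\alpha_C=1-\alpha_U>1/2$) and \rej in world $L$ (since $\alpha_U+\alpha_C=1-\alpha_F>1/2$), and Step~6 decides the outcome.

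In this main regime, I first identify the median $\bar\delta$. Under truthful play the $|F|$ agents have prediction $0$ and the $|U|$ agents have prediction $1$, while a contingent agent with signal $m\in\{\ell,h\}$ truthfully reports the expected fraction of agents who will report $h$, which equals $\alpha_F+\alpha_C T_{hm}$ (type-$F$ agents are auto-counted as $h$, type-$U$ as $\ell$, and every other contingent agent reports $h$ with conditional probability $T_{hm}$ by~(\ref{eqn:Tmm'_main})). Since $\alpha_F,\alpha_U<1/2$, the sorted list of $T$ predictions contains fewer than $T/2$ zeros and fewer than $T/2$ ones, so $\bar\delta$ must equal either $\alpha_F+\alpha_C T_{h\ell}$ or $\alpha_F+\alpha_C T_{hh}$. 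By the inequality chain $P_{hH}>T_{hh}>T_{h\ell}>P_{hL}$ from Theorem~\ref{thm:keyInequality} together with the definition of $c$ in~(\ref{eqn:C}), both candidate medians lie in $[\alpha_F+\alpha_C(P_{hL}+3c),\ \alpha_F+\alpha_C(P_{hH}-3c)]$.

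Next I concentrate the actual fraction of $h$-reports. Only the $\alpha_C T$ contingent agents contribute randomness, and conditional on world $w\in\{L,H\}$ their signals are i.i.d.\ Bernoulli$(P_{hw})$. A two-sided Hoeffding bound applied to these $\alpha_C T$ draws gives, with probability at least $1-2\exp(-2c^2\alpha_C T)$, an empirical $h$-rate within $c$ of $P_{hw}$; hence the overall $h$-reporting rate is within $\alpha_C c$ of $\alpha_F+\alpha_C P_{hw}$. Chaining with the previous paragraph, in world $L$ the overall rate is at most $\alpha_F+\alpha_C(P_{hL}+c)<\alpha_F+\alpha_C(P_{hL}+3c)\leq\bar\delta$, so Step~6 outputs \rej; in world $H$ the rate is at least $\alpha_F+\alpha_C(P_{hH}-c)>\alpha_F+\alpha_C(P_{hH}-3c)\geq\bar\delta$, so Step~6 outputs \acc. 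In either world this is the majority wish, giving the claimed bound.

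The main obstacle is the bookkeeping: correctly identifying the truthful contingent prediction as $\alpha_F+\alpha_C T_{hm}$ rather than $T_{hm}$, using $\alpha_F<1/2$ and $\alpha_U<1/2$ to force $\bar\delta$ into the interior of the prediction list, and verifying that the $3c$-margin inherited from~(\ref{eqn:C}) comfortably swamps the $c$-slack from Hoeffding after rescaling by $\alpha_C$. Everything else is a direct consequence of Theorem~\ref{thm:keyInequality} and a single application of concentration.
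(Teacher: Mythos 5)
Your proposal is correct and follows essentially the same route as the paper's proof: the same three-way case split on $\alpha_F,\alpha_U$, the same identification of the truthful predictions $\alpha_F+\alpha_C T_{h\ell}$ and $\alpha_F+\alpha_C T_{hh}$ forcing the median into that range, and the same Hoeffding/Chernoff concentration of the $h$-reporting rate followed by a comparison against $\bar\delta$ using the margin built into the definition of $c$. Your version is marginally more explicit in pinning $\bar\delta$ to one of the two contingent prediction values and in tracking the $3c$ versus $c$ slack, but the argument is the same.
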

\begin{proof}
Suppose all the agents report truthfully.
Step~5 of the mechanism guarantees that the majority wish will be announced with probability $1$ if either $\alpha_F>0.5$ or $\alpha_U>0.5$.
It remains to consider the case where we have both $\alpha_F<0.5$ and $\alpha_U<0.5$.\footnote{Recall that we have assume $T$ is an odd number, so we cannot have $\alpha_F=0.5$ or $\alpha_U=0.5$.}
In this case, \acc is favored by the majority if the actual world is $H$, and \rej is favored by the majority if the actual world is $L$.

If a contingent agent $t$ receives signal $h$, (s)he will believe that a $T_{hh}$ fraction of agents receive $h$ (before the treatment at Step~2), and (s)he will report $\bar{\delta}_t=\alpha_{C}T_{hh}+\alpha_{F}$ (after considering the treatment at Step~2).
Similarly, a contingent agent receiving signal $\ell$ will report $\bar{\delta}_t=\alpha_{C}T_{h\ell}+\alpha_{F}$.
We have $\bar{\delta}_t=0$ for each candidate-friendly agent and $\bar{\delta}_t=1$ for each candidate-unfriendly agent.
Since we are considering the case $\alpha_F,\alpha_U<0.5$, the median $\bar\delta$ is in the interval $[\alpha_{C}T_{h\ell}+\alpha_{F},\alpha_{C}T_{hh}+\alpha_{F}]$ (note that $T_{h\ell}<T_{hh}$ by Theorem~\ref{thm:keyInequality}).

Suppose the actual world is $L$.
The expected fraction of the agents receiving signal $h$ would be $P_{hL}$, and the expected fraction of the agents reporting signal $h$ (after the treatment at Step~2) would be $\alpha_{C}\cdot P_{hL}+\alpha_{F}$. 
By a Chernoff bound, with probability at least $1-2\exp(-2c^2\alpha_{C}T)$, the fraction of agents reporting signal $h$ is in the interval $[\alpha_{C}\cdot (P_{hL}-c)+\alpha_{F},\alpha_{C}\cdot
(P_{hL}+c)+\alpha_{F}]$, which is less than $\alpha_{C}\cdot T_{h\ell}+\alpha_{F}\leq\bar{\delta}$ by Theorem~\ref{thm:keyInequality} and (\ref{eqn:C}).
Step~6 of the mechanism indicates that \rej will be announced.
The analysis for the case where $H$ is the actual world is similar.
\end{proof}

Next, we show that the truthful strategy profile is an $\varepsilon$-strong Bayes Nash equilibrium of our mechanism for some exponentially small $\varepsilon$.

\begin{theorem}\label{thm:SBNE2}
The truthful strategy profile is an $\varepsilon$-strong Bayes Nash equilibrium, where $\varepsilon=(2B^2+4B)\exp(-2c^2\alpha_{C}T)$.
\end{theorem}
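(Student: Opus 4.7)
The plan is to proceed by contradiction. Suppose some coalition $D$ and strategy profile $\Sigma'$ (equal to $\Sigma^*$ off $D$) violate the equilibrium condition: every $t \in D$ has $u_t(\Sigma') \ge u_t(\Sigma^*)$, while some $t^* \in D$ has $u_{t^*}(\Sigma') > u_{t^*}(\Sigma^*) + \varepsilon$. Abbreviate $\rho = 2\exp(-2c^2\alpha_C T)$, so $\varepsilon = B(B+2)\rho$. Setting
\[
\Delta_L = \lambda_L^{\acc}(\Sigma')-\lambda_L^{\acc}(\Sigma^*), \qquad \Delta_H = \lambda_H^{\acc}(\Sigma')-\lambda_H^{\acc}(\Sigma^*),
\]
equation~\eqref{eqn:utsigma3_main} gives
\[
u_t(\Sigma')-u_t(\Sigma^*) = P_L \Delta_L \bigl(v_t(L,\acc)-v_t(L,\rej)\bigr) + P_H \Delta_H \bigl(v_t(H,\acc)-v_t(H,\rej)\bigr),
\]
whose coefficients have absolute value at most $B$, with sign pattern $(+,+)$ for $F$-agents, $(-,-)$ for $U$-agents, and $(-,+)$ for $C$-agents. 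Consequently $|u_t(\Sigma')-u_t(\Sigma^*)| \le B(P_L|\Delta_L|+P_H|\Delta_H|) \le B\max(|\Delta_L|,|\Delta_H|)$, so it suffices to prove $\max(|\Delta_L|,|\Delta_H|)\le(B+2)\rho$ to force a contradiction with the $>\varepsilon$ strict gain at $t^*$.

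I would first dispose of the degenerate subcases $\alpha_F > 1/2$ and $\alpha_U > 1/2$: in these, Step~5 of the mechanism deterministically outputs the majority wish under $\Sigma^*$, so any shift in $\lambda$ must be produced by defectors from the majority type, and by the sign pattern above such defectors are strictly worse off, contradicting condition~2 of Definition~\ref{def:SBNE}. Henceforth assume the main case $\alpha_F,\alpha_U<1/2$. Theorem~\ref{thm:msw} immediately gives $\lambda_L^{\acc}(\Sigma^*)\le\rho$ and $\lambda_H^{\acc}(\Sigma^*)\ge 1-\rho$, so automatically $\Delta_L \ge -\rho$ and $\Delta_H \le \rho$; only the ``anti-majority-wish'' directions $\Delta_L$ large positive and $\Delta_H$ large negative require further control.

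The crux is a median-trick lemma bounding these adversarial directions; I sketch the argument for $\Delta_L$ (the bound on $\Delta_H$ is symmetric under the swap $L\leftrightarrow H$, $F\leftrightarrow U$, $\ell\leftrightarrow h$). Assume for contradiction that $\Delta_L > (B+2)\rho$. Then on a set of realizations of probability mass $\Omega(\rho)$ above the truthful baseline, the mechanism outputs $\acc$ in world $L$; this requires either Step~5's majority of reported type~$F$, or Step~6's fraction of $h$-reports exceeding the median $\bar\delta$. Under truthful play neither event happens on most realizations: $\alpha_F<1/2$ rules out the Step~5 route, while Theorem~\ref{thm:keyInequality} combined with~\eqref{eqn:C} yields a $3c\alpha_C$ gap between the typical fraction $\alpha_F+\alpha_C P_{hL}$ and the smallest possible median $\delta_\ell = \alpha_F+\alpha_C T_{h\ell}$. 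A Chernoff-plus-pigeonhole accounting then forces a positive fraction of members of $D\cap(C\cup U)$ to misreport type, signal, or prediction in a direction that in world $L$ actively pushes the mechanism toward $\acc$. Each such deviator incurs a world-$L$ utility loss of order $P_L\Delta_L\cdot\Theta(1)$, while the only possible world-$H$ compensation for $C$- or $U$-agents is bounded by $P_H B\cdot|\Delta_H|$ in the directions that help them; a short case split on the sign of $\Delta_H$ shows that this compensation cannot swallow the world-$L$ loss without hurting some other deviator in $D\cap(F\cup C)$, producing in either branch a member of $D$ with $u_t(\Sigma')<u_t(\Sigma^*)$.

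The main obstacle is the median-trick lemma itself, because a single deviation can simultaneously manipulate (i)~the counts of type-$F$ and type-$U$ reports governing Step~5, (ii)~the fraction of $h$-reports after Step~2's auto-conversions, and (iii)~the median of predictions after Step~3's auto-assignments, all entangled with the Chernoff fluctuations of the $T-|D|$ truthful reports. I expect the cleanest proof to split on which of (i)--(iii) is responsible for the shift in $\lambda_L^{\acc}$ and, in each branch, to exhibit a concrete deviator whose ex-ante loss cannot be repaid by the available shifts in the other world; throughout, the $3c$-slack in~\eqref{eqn:C} ensures that Chernoff noise on the truthful reports stays below the signal, so the contradiction manifests at the same exponential scale $\rho$ in $T$ that already appears in Theorem~\ref{thm:msw}.
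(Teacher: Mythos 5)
Your overall architecture (proof by contradiction, the easy cases $\alpha_F>1/2$ and $\alpha_U>1/2$, the observation that Step~2 and Step~3 already convert predetermined agents' reports to their most favorable values so that their deviations are powerless in their preferred direction) matches the paper. But the crux of your argument has a genuine gap, in two related places. First, your reduction to proving $\max(|\Delta_L|,|\Delta_H|)\leq (B+2)\rho$ discards the prior weights $P_L,P_H$, and with them the argument: a deviation with $\Delta_L$ well above $(B+2)\rho$ need not hurt any deviator when $P_L$ is small. Concretely, take $P_L=0.01$, $B=10$, $\Delta_L=12\rho$ and $\Delta_H=\rho$ (the latter is permitted since $\lambda_H^{\acc}(\Sigma^\ast)\geq 1-\rho$): a contingent agent's utility change is at least $-P_L\Delta_L\cdot B+0$ and can be as large as $-P_L\Delta_L\cdot 1+P_H\Delta_H\cdot B=-0.12\rho+9.9\rho>0$, so your claim that every deviator pushing toward \acc in world $L$ ``incurs a world-$L$ utility loss'' that cannot be repaid does not hold at your threshold. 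The quantity that must be controlled is the prior-weighted error rate $I(\Sigma')=P_L\lambda_L^{\acc}(\Sigma')+P_H\lambda_H^{\rej}(\Sigma')$ of (\ref{eqn:errorrate}); the paper's proof splits on $I(\Sigma')\gtrless(2B+2)\exp(-2c^2\alpha_CT)$, and in the small-$I$ branch the factors of $P_L$ and $P_H$ cancel inside the utility formula (Claim~\ref{claim:sameasbefore}), which is exactly what your unweighted bound loses.

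Second, the step you defer to ``a short case split on the sign of $\Delta_H$'' is where the real work lives, and it does not close as sketched. In the large-$I$ branch the paper needs two separate nontrivial ingredients: Claim~\ref{claim:contingentunhappy}, a three-case analysis showing every contingent agent \emph{strictly} loses (this uses that truthful play nearly maximizes $P_L\lambda_L^{\rej}+P_H\lambda_H^{\acc}$ together with the integrality bound $1\leq v_t(\cdot,\cdot)$ differences $\leq B$), and the no-win-win Lemma~\ref{lem:conflict}, which rules out coalitions containing both a candidate-friendly and a candidate-unfriendly agent by showing that a gain of more than $2B\exp(-2c^2\alpha_CT)$ for one forces a strict loss for the other. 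Your sketch never establishes either; in particular, a candidate-unfriendly deviator with $\Delta_L>0$ can in principle be fully compensated by $\Delta_H$ as negative as $-1$, and excluding that scenario requires either the no-win-win lemma or the paper's Claim~\ref{claim:nochange} argument that a pure-$U$ coalition can only move $\lambda^{\acc}$ upward in \emph{both} worlds (since their reports are already at the \rej-extreme). Only after contingent agents are excluded and mixed $F$/$U$ coalitions are forbidden does the ``already converted'' argument finish the proof; as written, your proposal asserts the conclusion of these lemmas without supplying them.
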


\subsection{Proof Theorem~\ref{thm:SBNE2} with \texorpdfstring{$T\rightarrow\infty$}{T to infinity}}  
We defer the full proof of Theorem~\ref{thm:SBNE2} to Sect.~\ref{sect:omitted_proofs_mainTheorem}.  Here we prove the following limit version, where $T\rightarrow\infty$, which illustrates the key features while eliminating the need for both Chernoff bound analyses and some additional subtle corner cases. 

\begin{theorem}\label{thm:SBNE_limit}
For $T\rightarrow\infty$, the truthful strategy profile is a strong Bayes Nash equilibrium.
\end{theorem}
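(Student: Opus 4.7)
The plan is a case analysis on $(\alpha_F,\alpha_U)$ and, inside the main case, on which coordinate(s) of the outcome pair a deviating coalition tries to flip. A clean feature of the $T\to\infty$ limit is that, by the law of large numbers, under any (possibly mixed) strategy profile each $\lambda_W^\acc(\Sigma)$ becomes $0$ or $1$, so each agent's \emph{ex-ante} utility is fully determined by the pair $(\omega(L),\omega(H))\in\{\acc,\rej\}^2$ of outcomes that the mechanism produces in the two worlds. Under $\Sigma^\ast$ this pair equals $(\acc,\acc)$ if $\alpha_F>1/2$, $(\rej,\rej)$ if $\alpha_U>1/2$, and the majority wish $(\rej,\acc)$ when both $\alpha_F,\alpha_U<1/2$.

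When $\alpha_F>1/2$ (the case $\alpha_U>1/2$ is symmetric), Step~5 outputs \acc whenever at least half the reports equal $F$, so any outcome change forces enough $F$-agents to sit in the coalition $D$ and misreport their type. But every such $F$-agent strictly prefers \acc in both worlds, and is thereby made strictly worse off by the induced change, violating Definition~\ref{def:SBNE}. In the main case $\alpha_F,\alpha_U<1/2$, the three possible alternative outcomes $(\acc,\acc)$, $(\rej,\rej)$, and $(\acc,\rej)$ all strictly hurt every contingent agent (each moves the winner away from the majority wish in at least one world), so $C\cap D=\emptyset$. Hence $D\subseteq F$, $D\subseteq U$, or $D\subseteq F\cup U$ in the three subcases respectively.

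The subcases $(\acc,\acc)$ and $(\rej,\rej)$ are handled by a ``median-pinning'' argument. In the first, $D\subseteq F$ and $|D|\le\alpha_F T$, so all of $C\cup U$---a strict majority of size $(1-\alpha_F)T>T/2$---lies outside $D$ and reports truthful predictions taking values in $\{\alpha_C T_{h\ell}+\alpha_F,\,\alpha_C T_{hh}+\alpha_F,\,1\}$, each at least $\alpha_C T_{h\ell}+\alpha_F$. This pins the median $\bar\delta\ge\alpha_C T_{h\ell}+\alpha_F$ regardless of $D$'s reports, while the largest $h$-fraction $D$ can induce in world $L$ is $\alpha_F+\alpha_C P_{hL}<\alpha_F+\alpha_C T_{h\ell}\le\bar\delta$ by Theorem~\ref{thm:keyInequality}, so Step~6 still outputs \rej in $L$; the $(\rej,\rej)$ subcase is symmetric.

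The hard subcase is $(\acc,\rej)$, where $D\subseteq F\cup U$ and the coalition tries to flip both worlds simultaneously. Let $X_W$ and $M_W$ denote, in world $W$, the $h$-report fraction and the median prediction. Because every deviator's report is world-independent and each truthful $C$-agent contributes $\alpha_C P_{hW}$ to the $h$-fraction, $X_H-X_L=\alpha_C(P_{hH}-P_{hL})$, so $X_L>M_L$ together with $X_H<M_H$ forces $M_H-M_L>\alpha_C(P_{hH}-P_{hL})$. The main obstacle is the matching upper bound $M_H-M_L\le\alpha_C(T_{hh}-T_{h\ell})$: I would prove it by observing that the only world-dependent part of the prediction distribution is an $\alpha_C(P_{hH}-P_{hL})$ fraction of truthful $C$-agents whose prediction shifts from $\alpha_C T_{h\ell}+\alpha_F$ up to $\alpha_C T_{hh}+\alpha_F$ as $W$ moves from $L$ to $H$; the two CDFs therefore coincide outside $[\alpha_C T_{h\ell}+\alpha_F,\,\alpha_C T_{hh}+\alpha_F)$, and a direct quantile comparison confines the median shift to this interval. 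Combining with the identity $T_{hh}-T_{h\ell}=\bigl(\Pr(L\mid\ell)-\Pr(L\mid h)\bigr)(P_{hH}-P_{hL})$ from~(\ref{eqn:Tmm'_main}) and the strict inequality $\Pr(L\mid\ell)-\Pr(L\mid h)<1$ yields $\alpha_C(T_{hh}-T_{h\ell})<\alpha_C(P_{hH}-P_{hL})$, contradicting what the coalition would need and completing the proof.
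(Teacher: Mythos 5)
Your case decomposition by target outcome pair $(\omega(L),\omega(H))$ is a legitimate alternative to the paper's route, and your median-pinning arguments for the subcases $(\acc,\acc)$ and $(\rej,\rej)$ are sound: there $D$ is homogeneous, hence a strict minority, and the truthful majority pins $\bar\delta$ in $[\alpha_C T_{h\ell}+\alpha_F,\,\alpha_C T_{hh}+\alpha_F]$ while the achievable $h$-fraction stays on the wrong side of it. The gap is in the $(\acc,\rej)$ subcase. Your argument rests on the premise that ``every deviator's report is world-independent,'' which gives the identity $X_H-X_L=\alpha_C(P_{hH}-P_{hL})$ and the bound $M_H-M_L\le\alpha_C(T_{hh}-T_{h\ell})$. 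This premise is false: candidate-friendly and candidate-unfriendly agents also receive signals $S_t$ correlated with the world, and a deviator who reports type $C$ may condition both the reported signal and the reported prediction on $S_t$. Concretely, take $\alpha_F=\alpha_U=0.26$, $\alpha_C=0.48$, and let $D=F\cup U$ (a majority, which your subcase permits since you only exclude contingent agents). If every deviator reports type $C$ with the \emph{opposite} of the signal received, the $h$-report fraction becomes $X_W=\alpha_C P_{hW}+(\alpha_F+\alpha_U)(1-P_{hW})$, so $X_L>X_H$; and since the deviators are more than half the population they can set the median to any value strictly between $X_H$ and $X_L$. The mechanism then outputs $(\acc,\rej)$, so the outcome is \emph{feasible} and no counting argument on fractions and medians can rule it out.

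What actually kills this subcase is not infeasibility but incentives, and this is where the paper's proof takes a different path: its Lemma~\ref{lem:conflict_limit} (the ``no win-win'' lemma) shows that relative to $\Sigma^\ast$ any strategy profile that strictly benefits a candidate-friendly agent strictly hurts every candidate-unfriendly agent, and vice versa. (Indeed, switching $(\rej,\acc)$ to $(\acc,\rej)$ changes an $F$-agent's utility by $P_L\bigl(v_t(L,\acc)-v_t(L,\rej)\bigr)-P_H\bigl(v_t(H,\acc)-v_t(H,\rej)\bigr)$, which is positive exactly when the corresponding quantity for a $U$-agent is negative, by the monotonicity $v_t(L,\acc)-v_t(L,\rej)<v_t(H,\acc)-v_t(H,\rej)$.) Since some member of $D$ must strictly gain by Definition~\ref{def:SBNE}, $D$ must be homogeneous, hence a strict minority, and then your own median-pinning argument (equivalently, the paper's observation that Steps~2 and~3 already play the dominant report for predetermined agents) closes the proof. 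So you should replace the infeasibility argument for $(\acc,\rej)$ with this conflict-of-interest step; the rest of your plan then goes through.
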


We consider three cases: 1) $\alpha_{F}>0.5$, 2) $\alpha_{U}>0.5$ and 3) $\alpha_{F}<0.5$ and $\alpha_{U}<0.5$.

In the first case, more than half of the agents are candidate-friendly, and \acc will be announced according to Step~5 of the mechanism if these agents report truthfully.
The truthful strategy profile forms a strong Bayes Nash equilibrium, as those candidate-friendly agents receive their maximum utilities by truth-telling and the remaining agents are not able to stop the mechanism from outputting \acc regardless of what they report.
The analysis for the second case is the analogous to the first case.
It remains to consider the third case.

Under the third case, \acc is favored by the majority if the actual world is $H$, and \rej is favored by the majority if the actual world is $L$.
By the same analysis as in the proof of Theorem~\ref{thm:msw}, supposing agents report truthfully, we know that \acc will be output with probability $1$ (by taking the limit $T\rightarrow\infty$) if the actual world is $H$, and \rej will be output if the actual world is $L$.
Therefore, the contingent agents (type $C$) receive their maximum utilities, and thus have no incentive to deviate from the truthful strategy.

To conclude that truth-telling is a strong Bayes Nash equilibrium, we will show that there is no coalition of deviating agents $D$ (see the paragraph following Definition~\ref{def:SBNE}).
Let $\Sigma'$ be the strategy profile after $D$'s deviation.

Next, we show that $D$ cannot contain both a type $F$ agent and a type $U$ agent.
The following lemma shows that an increase in a type $F$ agent's (\emph{ex-ante}) utility always results a decrease in a type $U$ agent's (\emph{ex-ante}) utility, and vice versa.
This is obvious if we are dealing with \emph{ex-post} utilities, as a candidate-friendly agent and a candidate-unfriendly agent always want the opposite alternatives.
However, this becomes less obvious for \emph{ex-ante} utilities.
%We sketch the proof of this lemma here.
%In Appendix~\ref{sect:omitted_proofs}, we will formally prove a more general version of this lemma (Lemma~\ref{lem:conflict}) without taking $T\rightarrow\infty$.
%Specifically, Lemma~\ref{lem:conflict_limit} is implied by taking $T\rightarrow\infty$ in Lemma~\ref{lem:conflict}.

\begin{lemma}\label{lem:conflict_limit}
Suppose $\alpha_{F}<0.5$ and $\alpha_{U}<0.5$.
Let $\Sigma^\ast$ be the truthful strategy profile and $\Sigma'$ be an arbitrary strategy profile.
Let $t_1$ be an arbitrary candidate-friendly agent and $t_2$ be an arbitrary candidate-unfriendly agent.
We have
\begin{description}
    \item[(i)] If $u_{t_1}(\Sigma')-u_{t_1}(\Sigma^\ast)>0$, then $u_{t_2}(\Sigma')-u_{t_2}(\Sigma^\ast)<0$.
    \item[(ii)] If $u_{t_2}(\Sigma')-u_{t_2}(\Sigma^\ast)>0$, then $u_{t_1}(\Sigma')-u_{t_1}(\Sigma^\ast)<0$.
\end{description}
\end{lemma}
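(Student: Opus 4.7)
The plan is to turn the lemma into a short algebraic argument about the ex-ante utility change, and then exploit the full four-way preference chain (not just the two-way majority preference) that defines types $F$ and $U$.

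First, I would apply equation~(\ref{eqn:utsigma3_main}) to both $\Sigma^\ast$ and $\Sigma'$ and subtract, obtaining for any agent $t$
\[
u_t(\Sigma')-u_t(\Sigma^\ast) = P_L\bigl(\lambda_L^{\acc}(\Sigma')-\lambda_L^{\acc}(\Sigma^\ast)\bigr)\bigl(v_t(L,\acc)-v_t(L,\rej)\bigr) + P_H\bigl(\lambda_H^{\acc}(\Sigma')-\lambda_H^{\acc}(\Sigma^\ast)\bigr)\bigl(v_t(H,\acc)-v_t(H,\rej)\bigr).
\]
Because we are in the third case ($\alpha_F,\alpha_U<0.5$), Theorem~\ref{thm:msw} in the limit $T\to\infty$ gives $\lambda_L^{\acc}(\Sigma^\ast)=0$ and $\lambda_H^{\acc}(\Sigma^\ast)=1$. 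Writing $A:=\lambda_L^{\acc}(\Sigma')\ge 0$ and $B:=1-\lambda_H^{\acc}(\Sigma')\ge 0$, this collapses to
\[
u_t(\Sigma')-u_t(\Sigma^\ast) = P_L A\bigl(v_t(L,\acc)-v_t(L,\rej)\bigr) - P_H B\bigl(v_t(H,\acc)-v_t(H,\rej)\bigr).
\]

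The key step is to read off both the sign and the \emph{relative magnitude} of the two utility gaps for each type. For the type $F$ agent $t_1$, set $\alpha_1:=v_{t_1}(L,\acc)-v_{t_1}(L,\rej)>0$ and $\beta_1:=v_{t_1}(H,\acc)-v_{t_1}(H,\rej)>0$; the inequalities $v_{t_1}(H,\acc)>v_{t_1}(L,\acc)$ and $v_{t_1}(L,\rej)>v_{t_1}(H,\rej)$ from the type $F$ chain sum to $\beta_1-\alpha_1>0$. Symmetrically, for the type $U$ agent $t_2$, set $\alpha_2:=v_{t_2}(L,\rej)-v_{t_2}(L,\acc)>0$ and $\beta_2:=v_{t_2}(H,\rej)-v_{t_2}(H,\acc)>0$; the type $U$ chain gives $\alpha_2-\beta_2>0$. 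Thus
\[
u_{t_1}(\Sigma')-u_{t_1}(\Sigma^\ast) = P_L A\alpha_1 - P_H B\beta_1, \qquad u_{t_2}(\Sigma')-u_{t_2}(\Sigma^\ast) = P_H B\beta_2 - P_L A\alpha_2,
\]
with the crucial side information $\beta_1>\alpha_1$ and $\alpha_2>\beta_2$.

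For part (i), I would assume $u_{t_1}(\Sigma')-u_{t_1}(\Sigma^\ast)>0$ and, for contradiction, $u_{t_2}(\Sigma')-u_{t_2}(\Sigma^\ast)\ge 0$. The first forces $P_L A\alpha_1>P_H B\beta_1$ (so $P_L A>0$), while the second rearranges to $P_H B\beta_2\ge P_L A\alpha_2$ (so $P_H B>0$). Dividing the two inequalities is then legal and yields $\beta_1/\alpha_1<\beta_2/\alpha_2$, which directly contradicts $\beta_1/\alpha_1>1>\beta_2/\alpha_2$. The edge case $B=0$ is handled separately: then $u_{t_1}(\Sigma')-u_{t_1}(\Sigma^\ast)=P_L A\alpha_1>0$ forces $A>0$, whence $u_{t_2}(\Sigma')-u_{t_2}(\Sigma^\ast)=-P_L A\alpha_2<0$. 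Part (ii) follows from the same manipulation with the roles of $t_1$ and $t_2$ swapped.

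The main obstacle is recognizing the ``cross-world'' inequalities $\beta_1>\alpha_1$ and $\alpha_2>\beta_2$, which come from the full four-way ordering, not merely from the fact that $t_1$ prefers \acc and $t_2$ prefers \rej in each world. Without them, the two differences could be made simultaneously positive by trading a gain in one world against a loss in the other, and the lemma would be false; with them, the ratio argument closes cleanly.
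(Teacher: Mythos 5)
Your proof is correct and follows essentially the same route as the paper's: both reduce the utility change via equation~(\ref{eqn:utsigma3_main}) and the limit form of Theorem~\ref{thm:msw} to an expression in the two nonnegative probability shifts, extract the cross-world inequality $v_t(L,\acc)-v_t(L,\rej)<v_t(H,\acc)-v_t(H,\rej)$ from (\ref{eqn:ut_natural}), and close with the same ratio comparison (the paper phrases it as $\Gamma_L>\Gamma_H$ versus $\Gamma_L\leq\Gamma_H$ rather than comparing $\beta_1/\alpha_1$ with $\beta_2/\alpha_2$, but these are the same inequality rearranged). The only nit is that your symbol $B$ for $1-\lambda_H^{\acc}(\Sigma')$ collides with the paper's $B$ denoting the utility bound, so rename it if this is merged into the text.
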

\begin{proof}
By (\ref{eqn:utsigma3_main}), we have 
$$u_t(\Sigma')-u_t(\Sigma^\ast)=\Gamma_L(v_t(L,\acc)-v_t(L,\rej))-\Gamma_H(v_t(H,\acc)-v_t(H,\rej)),$$ 
where $\Gamma_L=P_L(\lambda_L^\acc(\Sigma')-\lambda_L^\acc(\Sigma^\ast))$ and $\Gamma_H=P_H(\lambda_H^\acc(\Sigma^\ast)-\lambda_H^\acc(\Sigma'))$.  
Theorem~\ref{thm:msw} implies $\lambda_L^\acc(\Sigma^\ast)=0$ and $\lambda_H^\acc(\Sigma^\ast)=1$ (with $T\rightarrow\infty$), which implies $\Gamma_L,\Gamma_H\geq0$.
By (\ref{eqn:ut_natural}), we have $v_t(L,\acc)-v_t(H,\acc)<0<v_t(L,\rej)-v_t(H,\rej)$, which further implies $v_{t}(L,\acc)-v_{t}(L,\rej)<v_{t}(H,\acc)-v_{t}(H,\rej)$.
Since a type $F$ agent always prefers $\acc$ and a type $U$ agent always prefers $\rej$,
\begin{eqnarray*}
0<v_{t_1}(L,\acc)-v_{t_1}(L,\rej)&<&v_{t_1}(H,\acc)-v_{t_1}(H,\rej),\qquad\qquad\mbox{ and }\\
v_{t_2}(L,\acc)-v_{t_2}(L,\rej)&<&v_{t_2}(H,\acc)-v_{t_2}(H,\rej)<0.
\end{eqnarray*}
Intuitively, this means $u_{t_1}(\Sigma')-u_{t_1}(\Sigma^\ast)$ is more sensitive to $\Gamma_H$ and $u_{t_2}(\Sigma')-u_{t_2}(\Sigma^\ast)$ is more sensitively to $\Gamma_L$.
Formally, $u_{t_1}(\Sigma')-u_{t_1}(\Sigma^\ast)>0$ implies $\Gamma_L>\Gamma_H\frac{v_{t_1}(H,\acc)-v_{t_1}(H,\rej)}{v_{t_1}(L,\acc)-v_{t_1}(L,\rej)}\geq\Gamma_H$, and  $u_{t_2}(\Sigma')-u_{t_2}(\Sigma^\ast)\geq0$ implies $\Gamma_L\leq\Gamma_H\frac{v_{t_2}(H,\acc)-v_{t_2}(H,\rej)}{v_{t_2}(L,\acc)-v_{t_2}(L,\rej)}\leq\Gamma_H$.
Thus, $u_{t_1}(\Sigma')-u_{t_1}(\Sigma^\ast)>0$ and $u_{t_2}(\Sigma')-u_{t_2}(\Sigma^\ast)\geq0$ cannot be both true.
The proof for (ii) is similar.
\end{proof}

We have seen that the deviating coalition $D$ cannot contain any contingent agents (since their utilities have already been maximized).
Thus, Lemma~\ref{lem:conflict_limit} implies $D$ can only be comprised of either candidate-friendly agents or candidate-unfriendly agents.
Finally, we show that a minority coalition comprised of only candidate-friendly agents or only candidate-unfriendly agents cannot change the outcome by misreporting.
We consider candidate-friendly agents without loss of generality.

Suppose $D$ contains only candidate-friendly agents.
Those agents cannot make the mechanism output \acc at Step~5, because fewer than $1/2$ of the agents report type $F$ no matter how agents in $D$ deviate.
To maximize the chance for the mechanism to output \acc at Step~6, those candidate-friendly agents would like to maximize the fraction of agents reporting $h$ and minimize the median of the prediction $\bar\delta$.
However, the mechanism already does this when those candidate-friendly agents play the truthful strategy: their signals are treated as $h$ at Step~2, and their predictions are treated as $0$ at Step~3.
The same analysis works when $D$ contains only candidate-unfriendly agents.
This concludes the proof for Theorem~\ref{thm:SBNE_limit}.

In fact, the arguments in the previous paragraph show that: \emph{truth-telling is a dominant strategy for each candidate-friendly agent and each candidate-unfriendly agent}.

\subsection{Proof of Theorem~\ref{thm:SBNE2}}
In this section, we formally prove Theorem~\ref{thm:SBNE2}.

\label{sect:omitted_proofs_mainTheorem}
\subsubsection{Proof Sketch for Theorem~\ref{thm:SBNE2}}

We first describe a sketch of the proof.
We consider three cases: 1) $\alpha_{F}>0.5$, 2) $\alpha_{U}>0.5$ and 3) $\alpha_{F}<0.5$ and $\alpha_{U}<0.5$.

\smallskip
\paragraph{Case 1) $\alpha_{F}>0.5$.} 
For the first case, more than half of the agents are candidate-friendly, and \acc will be announced according to Step~5 of the mechanism if these agents report truthfully.
The truthful strategy profile forms a ($0$-)strong Bayes Nash equilibrium, as those candidate-friendly agents receive their maximum utilities by truth-telling and the remaining agents are not able to stop the mechanism from outputting \acc regardless of what they report.

\paragraph{Case 2) $\alpha_{U}>0.5$.} 
The analysis for the second case is the analogous to the first.

\paragraph{Case 3) $\alpha_{F}<0.5$ and $\alpha_{U}<0.5$.} 
Under the third case, \acc is favored by the majority if the actual world is $H$, and \rej is favored by the majority if the actual world is $L$.
Given a strategy profile $\Sigma$, we define 
\begin{equation}\label{eqn:errorrate}
I(\Sigma) \equiv P_L\lambda_L^\acc(\Sigma)+P_H\lambda_H^\rej(\Sigma)
\end{equation} 
to be the error rate of a strategy: the probability the mechanism selects the alternative that is not the majority wish.

By Theorem~\ref{thm:msw}, we know that $I(\Sigma^\ast)\leq 2\exp(-2c^2\alpha_{C}T)$.
Specifically, supposing agents report truthfully, we know that \acc will be output with probability at least $1-2\exp(-2c^2\alpha_{C}T)$ if the actual world is $H$ (i.e., $\lambda_H^\acc(\Sigma^\ast)\geq1-2\exp(-2c^2\alpha_{C}T)$), and \rej will be output with probability at least $1-2\exp(-2c^2\alpha_{C}T)$ if the actual world is $L$ (i.e., $\lambda_L^\rej(\Sigma^\ast)\geq1-2\exp(-2c^2\alpha_{C}T)$).
%Therefore, the contingent agents receive very near their maximum utilities, and thus have no incentive to deviate from the truthful strategy.

To conclude that truth-telling is an $\varepsilon$-strong Bayes Nash equilibrium, we will study two cases and show that  no coalition of deviating agents $D$ exists in either case. Recall that in a deviating coalition all agents must benefit and some agent must benefit by at least  $\varepsilon=(2B^2+4B)\exp(-2c^2\alpha_CT)$.
Let $\Sigma'$ be the strategy profile after $D$'s deviation.
We consider two sub-cases:
\begin{enumerate}
    \item $I(\Sigma') < (2B+2)\exp(-2C^2\alpha_{C}T)$, and
    \item $I(\Sigma') \geq (2B+2)\exp(-2C^2\alpha_{C}T)$.
\end{enumerate}

In the first case, $I(\Sigma')$ is small, so the mechanism nearly always chooses the majority wish under $\Sigma'$. 
Therefore, the output of the mechanism does not change with high probability from profile $\Sigma^\ast$ to profile $\Sigma'$.
In particular, we have $\lambda_H^\acc(\Sigma')\approx\lambda_H^\acc(\Sigma^\ast)\approx1$ and $\lambda_L^\rej(\Sigma')\approx\lambda_L^\rej(\Sigma^\ast)\approx1$.
Thus, no agent can be much better off because all the agents have nearly the same utilities as before.
This is formally proved in Claim~\ref{claim:sameasbefore}.

In the second case, $I(\Sigma')$ is not small, so the mechanism sometimes fails to choose the majority wish under $\Sigma'$.  
Here, the contingent agents receive strictly lower utilities (Claim~\ref{claim:contingentunhappy}) and thus no contingent agent can be in the deviating coalition.  
The technical key is then to show that $D$ cannot contain both a candidate-friendly and a candidate-unfriendly agent.
Lemma~\ref{lem:conflict} shows that a significant increase in a candidate-friendly agent's (\emph{ex-ante}) utility always results a decrease in a candidate-unfriendly agent's (\emph{ex-ante}) utility, and vice versa.
This is obvious if we are dealing with \emph{ex-post} utilities, as a candidate-friendly agent and a candidate-unfriendly agent always want the opposite alternatives.
However, this becomes much less obvious for \emph{ex-ante} utilities.

Thus, any deviating coalition can only be comprised of either candidate-friendly agents or candidate-unfriendly agents.
Finally, in Claim~\ref{claim:nochange}, we show that a minority coalition comprised of only candidate-friendly agents (or only candidate-unfriendly agents) cannot change the outcome by misreporting.
This concludes the theorem.

\subsubsection{No Win-win Lemma}

A key part of the proof is the following lemma which states that it is impossible that predetermined agents of different alternatives both gain by deviating from the truthful strategy profile.  As a corollary, any deviating coalition can only contain  predetermined agents of one type.  

The proof of Lemma~\ref{lem:conflict} depends on i) truth-telling nearly always selecting the majority wish; and ii) the monotonicity of the \emph{ex-post} utilities $v_t(\cdot,\acc)$ and $v_t(\cdot,\rej)$ in the first argument (here, we mean $v_t(L,\acc)<v_t(H,\acc)$ and $v_t(L,\rej)>v_t(H,\rej)$).  In particular, Lemma~\ref{lem:conflict} does not hold if truth-telling is replaced by an arbitrary strategy profile.

\begin{lemma}\label{lem:conflict}
Suppose $\alpha_{F}<0.5$ and $\alpha_{U}<0.5$.
Let $\Sigma^\ast$ be the truthful strategy profile and $\Sigma'$ be an arbitrary strategy profile.
Let $t_1$ be an arbitrary candidate-friendly agent and $t_2$ be an arbitrary candidate-unfriendly agent.
For any $\Delta\geq2B\exp(-2c^2\alpha_{C}T)$, we have
\begin{description}
    \item[(i)] If $u_{t_1}(\Sigma')-u_{t_1}(\Sigma^\ast)>\Delta$, then $u_{t_2}(\Sigma')-u_{t_2}(\Sigma^\ast)<0$.
    \item[(ii)] If $u_{t_2}(\Sigma')-u_{t_2}(\Sigma^\ast)>\Delta$, then $u_{t_1}(\Sigma')-u_{t_1}(\Sigma^\ast)<0$.
\end{description}
\end{lemma}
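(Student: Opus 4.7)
The plan is to mirror the limit-case argument (Lemma~\ref{lem:conflict_limit}), carrying along the finite-$T$ slack. I will start from representation (\ref{eqn:utsigma3_main}) and write, for any agent $t$,
$u_t(\Sigma')-u_t(\Sigma^\ast) = \Gamma_L(v_t(L,\acc)-v_t(L,\rej)) - \Gamma_H(v_t(H,\acc)-v_t(H,\rej))$,
with $\Gamma_L := P_L(\lambda_L^\acc(\Sigma')-\lambda_L^\acc(\Sigma^\ast))$ and $\Gamma_H := P_H(\lambda_H^\acc(\Sigma^\ast)-\lambda_H^\acc(\Sigma'))$. The crucial difference from the limit setting is that Theorem~\ref{thm:msw} now only gives $\lambda_L^\acc(\Sigma^\ast)\leq\xi$ and $\lambda_H^\acc(\Sigma^\ast)\geq 1-\xi$ for $\xi := 2\exp(-2c^2\alpha_C T)$, so $\Gamma_L$ and $\Gamma_H$ can each be as small as $-\xi$ rather than guaranteed non-negative. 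This residual negativity is exactly what the hypothesis $\Delta\geq 2B\exp(-2c^2\alpha_C T)=B\xi$ is there to compensate for.

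For part (i), set $a := v_{t_1}(L,\acc)-v_{t_1}(L,\rej)$ and $b := v_{t_1}(H,\acc)-v_{t_1}(H,\rej)$; the candidate-friendly chain $v_{t_1}(H,\acc)>v_{t_1}(L,\acc)>v_{t_1}(L,\rej)>v_{t_1}(H,\rej)$ together with (\ref{eqn:ut_natural}) implies $0<a<b\leq B$. The hypothesis $\Gamma_L a - \Gamma_H b > \Delta$, combined with $\Gamma_H b\geq -B\xi$ and $\Delta\geq B\xi$, yields $\Gamma_L a > 0$, hence $\Gamma_L > 0$ strictly. The candidate-unfriendly analogues $c := v_{t_2}(L,\acc)-v_{t_2}(L,\rej)$ and $d := v_{t_2}(H,\acc)-v_{t_2}(H,\rej)$ satisfy $c<d<0$, equivalently $|c|>|d|>0$, by the same monotonicity-in-the-state argument applied to $t_2$.

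I would then finish by splitting on the sign of $\Gamma_H$. If $\Gamma_H\leq 0$, then $\Gamma_L c<0$ (since $\Gamma_L>0$, $c<0$) and $-\Gamma_H d\leq 0$ (since $\Gamma_H\leq 0$, $d<0$), so $u_{t_2}(\Sigma')-u_{t_2}(\Sigma^\ast) = \Gamma_L c - \Gamma_H d < 0$. If $\Gamma_H>0$, the hypothesis sharpens to $\Gamma_L a > \Gamma_H b$, and since $b>a>0$, $\Gamma_L > (b/a)\Gamma_H > \Gamma_H$; then $\Gamma_L|c| > \Gamma_H|c| > \Gamma_H|d|$, which is equivalent to $\Gamma_L c - \Gamma_H d < 0$. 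Part (ii) follows by the symmetric argument, with the roles of $t_1$ and $t_2$ interchanged and the signs of $a,b,c,d$ reversed.

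The main obstacle is exactly the one isolated above: the Chernoff-type failure probability in Theorem~\ref{thm:msw} makes $\Gamma_L$ and $\Gamma_H$ only ``approximately non-negative,'' so the slope-comparison step that drove the limit proof must be protected by a $\Delta$-buffer dominating the $B\xi$ correction term. The choice $\Delta\geq 2B\exp(-2c^2\alpha_C T)$ is precisely the minimum buffer that preserves the strict inequality $\Gamma_L>0$; once that is in hand, the slope argument of Lemma~\ref{lem:conflict_limit} (which only uses $0<a<b$ for the friendly agent and $|c|>|d|>0$ for the unfriendly agent) transfers essentially verbatim.
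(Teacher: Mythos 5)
Your proposal is correct and follows essentially the same route as the paper's proof: the same decomposition via (\ref{eqn:utsigma3_main}), the same monotonicity inequality $v_t(L,\acc)-v_t(L,\rej)<v_t(H,\acc)-v_t(H,\rej)$ derived from (\ref{eqn:ut_natural}), and the same case split on the sign of $\Gamma_H$ (i.e., on whether $\lambda_H^{\acc}(\Sigma')\geq\lambda_H^{\acc}(\Sigma^\ast)$), with Theorem~\ref{thm:msw} supplying the $B\xi$ bound that the $\Delta$-buffer absorbs. The only cosmetic difference is that you establish $\Gamma_L>0$ once up front rather than inside the first case.
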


\gs{I did not go through this proof yet.}

\begin{proof}
We will only show (i), as the proof for (ii) is similar.

Since $v_t(L,\acc)<v_t(H,\acc)$ and $v_t(L,\rej)>v_t(H,\rej)$ for any agent $t$ (see (\ref{eqn:ut_natural})), we have $v_t(L,\acc)-v_t(H,\acc)<0<v_t(L,\rej)-v_t(H,\rej)$, which further implies $$v_{t}(L,\acc)-v_{t}(L,\rej)<v_{t}(H,\acc)-v_{t}(H,\rej).$$
Since a candidate-friendly agent always prefers \acc and a candidate-unfriendly agent always prefers \rej, we have
\begin{equation}\label{eqn:monotonicity}
\begin{array}{l}
     0<v_{t_1}(L,\acc)-v_{t_1}(L,\rej)<v_{t_1}(H,\acc)-v_{t_1}(H,\rej),\mbox{ and }  \\
     v_{t_2}(L,\acc)-v_{t_2}(L,\rej)<v_{t_2}(H,\acc)-v_{t_2}(H,\rej)<0.
\end{array}
\end{equation}
By referring to (\ref{eqn:utsigma3_main}), this intuitively says that $t_1$'s utility difference $u_{t_1}(\Sigma')-u_{t_1}(\Sigma^\ast)$ is more sensitive to $P_H\lambda_H^\acc$ while $u_2$'s utility difference $u_{t_2}(\Sigma')-u_{t_2}(\Sigma^\ast)$ is more sensitive to $P_L\lambda_L^\acc$.

%This provides the intuition for the proof of the lemma.  
%Both agents will gain in utility if errors are decreased, but this is not possible to an non-negligible extent.  
%If errors are increased, then at least one candidate is strictly worse off.  

Suppose $u_{t_1}(\Sigma')-u_{t_1}(\Sigma^\ast)>\Delta$ as it is assumed in (i).
By (\ref{eqn:utsigma3_main}), we have
\begin{equation}\label{eqn:conflictlemma1}
%\begin{array}{l}
     P_L(\lambda_L^\acc(\Sigma')-\lambda_L^\acc(\Sigma^\ast))(v_{t_1}(L,\acc)-v_{t_1}(L,\rej))  %\\
     +P_H(\lambda_H^\acc(\Sigma')-\lambda_H^\acc(\Sigma^\ast))(v_{t_1}(H,\acc)-v_{t_1}(H,\rej))>\Delta.
%\end{array}
\end{equation}
We consider two cases: $\lambda_H^\acc(\Sigma')\geq\lambda_H^\acc(\Sigma^\ast)$ and $\lambda_H^\acc(\Sigma')<\lambda_H^\acc(\Sigma^\ast)$.

The intuitions for the remaining part of this proof is as follows.
In the first case, the probability of outputting \acc (weakly) increases under world $H$. 
Since Theorem~\ref{thm:msw} tells us $\lambda_H^\acc(\Sigma^\ast)$ is already close to $1$, the utility gain for $t_1$ due to the increased probability of outputting \acc under world $H$ is insignificant, and we must still have $\lambda_L^\acc(\Sigma')-\lambda_L^\acc(\Sigma^\ast)>0$ to ensure $u_{t_1}(\Sigma')-u_{t_1}(\Sigma^\ast)>\Delta$.
However, since the probability of outputting \acc increases under both worlds, the utility for $t_2$ will decrease.
In the second case, the probability of outputting \acc decreases under world $H$, which reduces the utility for $t_1$.
To compensate this, the probability of outputting \acc under world $L$ must increase, in order to ensure $u_{t_1}(\Sigma')-u_{t_1}(\Sigma^\ast)>\Delta$.
Moreover, we must have $P_L(\lambda_L^\acc(\Sigma')-\lambda_L^\acc(\Sigma^\ast))>P_H(\lambda_H^\acc(\Sigma^\ast)-\lambda_H^\acc(\Sigma'))$ since the utility difference for agent $t_1$ is more sensitive to $P_H\lambda_H^\acc$.
However, since the utility difference for agent $t_2$ is more sensitive to $P_L\lambda_L^\acc$, this will reduce the overall utility for $t_2$.
These are formally proved below.

% In the first case, candidate friendly agents cannot gain much by additional acceptance in world 2.  This gain then must come from world 1.  But the candidate unfriendly agent is more sensitive to 

\paragraph{Case 1: $\lambda_H^\acc(\Sigma')\geq\lambda_H^\acc(\Sigma^\ast)$.}
By Theorem~\ref{thm:msw}, we have $\lambda_H^\acc(\Sigma^\ast)\geq1-2\exp(-2c^2\alpha_{C}T)$, which implies $\lambda_H^\acc(\Sigma')-\lambda_H^\acc(\Sigma^\ast)\leq2\exp(-2c^2\alpha_{C}T)$, which further implies
$$P_H(\lambda_H^\acc(\Sigma')-\lambda_H^\acc(\Sigma^\ast))(v_{t_1}(H,\acc)-v_{t_1}(H,\rej))\leq P_H\cdot2\exp(-2c^2\alpha_{C}T)\cdot B<\Delta.$$
Putting this into (\ref{eqn:conflictlemma1}), we have $P_L(\lambda_L^\acc(\Sigma')-\lambda_L^\acc(\Sigma^\ast))(v_{t_1}(L,\acc)-v_{t_1}(L,\rej))>0$, which implies $\lambda_L^\acc(\Sigma')>\lambda_L^\acc(\Sigma^\ast)$.
We then must have $u_{t_2}(\Sigma')-u_{t_2}(\Sigma^\ast)<0$ since we have
$$
\begin{array}{l}
     u_{t_2}(\Sigma')-u_{t_2}(\Sigma^\ast)
    =P_L(\lambda_L^\acc(\Sigma')-\lambda_L^\acc(\Sigma^\ast))(v_{t_2}(L,\acc)-v_{t_2}(L,\rej)) \\
     \qquad+P_H(\lambda_H^\acc(\Sigma')-\lambda_H^\acc(\Sigma^\ast))(v_{t_2}(H,\acc)-v_{t_2}(H,\rej))
\end{array}
$$
by (\ref{eqn:utsigma3_main}), $\lambda_H^\acc(\Sigma')\geq\lambda_H^\acc(\Sigma^\ast)$ (Case 1 assumption), $\lambda_L^\acc(\Sigma')>\lambda_L^\acc(\Sigma^\ast)$ (we have just shown), $v_{t_2}(L,\acc)-v_{t_2}(L,\rej)<0$ and $v_{t_2}(H,\acc)-v_{t_2}(H,\rej)<0$ (since $t_2$ is candidate-unfriendly).

\paragraph{Case 2: $\lambda_H^\acc(\Sigma')<\lambda_H^\acc(\Sigma^\ast)$.}
By (\ref{eqn:conflictlemma1}) and $\Delta>0$, we have
$$P_L(\lambda_L^\acc(\Sigma')-\lambda_L^\acc(\Sigma^\ast))(v_{t_1}(L,\acc)-v_{t_1}(L,\rej))>P_H(\lambda_H^\acc(\Sigma^\ast)-\lambda_H^\acc(\Sigma'))(v_{t_1}(H,\acc)-v_{t_1}(H,\rej)),$$
which, by the first inequality in (\ref{eqn:monotonicity}), further implies
$$P_L(\lambda_L^\acc(\Sigma')-\lambda_L^\acc(\Sigma^\ast))>P_H(\lambda_H^\acc(\Sigma^\ast)-\lambda_H^\acc(\Sigma'))>0.$$
By the second inequality in (\ref{eqn:monotonicity}), this implies
$$P_L(\lambda_L^\acc(\Sigma')-\lambda_L^\acc(\Sigma^\ast))(v_{t_2}(L,\acc)-v_{t_2}(L,\rej))<P_H(\lambda_H^\acc(\Sigma^\ast)-\lambda_H^\acc(\Sigma'))(v_{t_2}(H,\acc)-v_{t_2}(H,\rej)),$$
which further implies
$$u_{t_2}(\Sigma')-u_{t_2}(\Sigma^\ast)=P_L(\lambda_L^\acc(\Sigma')-\lambda_L^\acc(\Sigma^\ast))(v_{t_2}(L,\acc)-v_{t_2}(L,\rej))$$
$$\qquad+P_H(\lambda_H^\acc(\Sigma')-\lambda_H^\acc(\Sigma^\ast))(v_{t_2}(H,\acc)-v_{t_2}(H,\rej))<0.$$
The lemma concludes.
\end{proof}

\begin{corollary}\label{cor:conflict}
Suppose $\alpha_{F}<0.5$ and $\alpha_{U}<0.5$.
%No coalition containing both candidate friendly and candidate unfriendly agents but no contingent agents can form a $\Delta\geq 2B\exp(-2C^2\alpha_{1.5}T)$-deviation from the truth-telling equilibrium. 
The set of deviating agents $D$ cannot contain both candidate-friendly and candidate-unfriendly agents.
\end{corollary}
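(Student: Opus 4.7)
The plan is to proceed by contradiction, reducing the corollary to a direct invocation of Lemma~\ref{lem:conflict} at the agent whose $\varepsilon$-sized utility gain certifies $D$ as a deviation. The crucial numerical observation is that the threshold in Theorem~\ref{thm:SBNE2}, namely $\varepsilon=(2B^2+4B)\exp(-2c^2\alpha_C T)$, strictly dominates the threshold $\Delta:=2B\exp(-2c^2\alpha_C T)$ appearing in Lemma~\ref{lem:conflict} (their ratio is $B+2\geq 3$), so any agent enjoying a super-$\varepsilon$ gain automatically clears the bar the lemma requires.

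Concretely, suppose for contradiction that $D$ contains both $t_1\in F$ and $t_2\in U$, and let $\Sigma'$ be the post-deviation profile. By Definition~\ref{def:SBNE}, some $t^*\in D$ satisfies $u_{t^*}(\Sigma')-u_{t^*}(\Sigma^\ast)>\varepsilon>\Delta$. Granted that $t^*$ is predetermined, first consider the case $t^*\in F$. Applying Lemma~\ref{lem:conflict}(i) with $t^*$ in the role of the candidate-friendly agent and $t_2$ in the role of the candidate-unfriendly agent, we conclude $u_{t_2}(\Sigma')-u_{t_2}(\Sigma^\ast)<0$, contradicting the membership requirement $u_{t_2}(\Sigma')\geq u_{t_2}(\Sigma^\ast)$ for $t_2\in D$. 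The symmetric case $t^*\in U$ produces a contradiction with $t_1\in D$ via Lemma~\ref{lem:conflict}(ii). Either way, $D$ cannot contain predetermined agents of both kinds.

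The main obstacle, and the reason this is not literally a one-line consequence of the lemma, is justifying that the $\varepsilon$-gainer $t^*$ cannot itself be a contingent agent; Lemma~\ref{lem:conflict} only becomes useful once we can point to a candidate-friendly or candidate-unfriendly agent whose gain already exceeds $\Delta$, and merely having $t_1,t_2\in D$ with tiny nonnegative gains is not enough. To plug this gap I would appeal to the companion fact flagged in the proof sketch of Theorem~\ref{thm:SBNE2}, namely Claim~\ref{claim:contingentunhappy}, which asserts that whenever the error rate $I(\Sigma')$ is non-negligible, every contingent agent is strictly worse off under $\Sigma'$ and therefore cannot belong to $D$. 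In the complementary regime $I(\Sigma')<(2B+2)\exp(-2c^2\alpha_C T)$ handled by Claim~\ref{claim:sameasbefore}, the outputs under $\Sigma'$ and $\Sigma^\ast$ coincide with probability so close to one that no agent's gain can reach $\varepsilon$, so no deviating coalition exists at all and the corollary holds vacuously.
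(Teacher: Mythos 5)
Your proof is correct and follows essentially the same route as the paper: the paper likewise takes the agent $t^*$ whose gain exceeds $\varepsilon\geq 2B\exp(-2c^2\alpha_C T)$ and feeds it into Lemma~\ref{lem:conflict} to conclude that no predetermined agent of the opposite type can weakly benefit. The one place you go beyond the paper is in worrying that $t^*$ might be contingent: the paper's own proof simply assumes the $\varepsilon$-gainer is predetermined (it invokes the corollary only after Claim~\ref{claim:contingentunhappy} has already excluded contingent agents from $D$), whereas you close that case explicitly via Claims~\ref{claim:contingentunhappy} and~\ref{claim:sameasbefore}; this is a legitimate tightening of the statement as written, at the cost of making the corollary depend on claims the paper proves afterwards.
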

\begin{proof}
By 3 in Definition~\ref{def:SBNE}, there must be an agent $t$ such that $u_t(\Sigma')-u_t(\Sigma^\ast)>\varepsilon\geq 2B\exp(-2c^2\alpha_{C}T)$.  Assume this agent is candidate-friendly.  Then by Lemma~\ref{lem:conflict}, for any candidate-unfriendly agent $t'$, we have $u_{t'}(\Sigma')-u_{t'}(\Sigma^\ast)<0$.  Thus no candidate-unfriendly agent can be in the deviating coalition.  

An analogous argument works if the benefiting agent is candidate-unfriendly.
\end{proof}

\subsubsection{Proof of Theorem~\ref{thm:SBNE2}}

Now we are ready to prove Theorem~\ref{thm:SBNE2}.

Suppose this is not the case.
There exists a set of deviating agents $D$ that can deviate from the truthful strategy such that all of them receive utilities that are at least their original utilities and some of them receive utilities that are $\varepsilon$ strictly higher than their original utilities.
Let $\Sigma'$ be the strategy profile after agents in $D$ deviate.

We discuss three different cases: 1) $\alpha_{F}>0.5$, 2) $\alpha_{U}>0.5$ and 3) $\alpha_{F}<0.5$ and $\alpha_{U}<0.5$.
Notice that $n$ being odd implies neither $\alpha_{F}$ nor $\alpha_{U}$ can be exactly $0.5$.

\paragraph{Case 1: $\alpha_{F}>0.5$.}  
If all agents report truthfully, \acc will be announced with probability $1$ according to Step~5 of the mechanism.
That is, $\lambda_L^\acc(\Sigma^\ast)=\lambda_H^\acc(\Sigma^\ast)=1$.
By 3 of Definition~\ref{def:SBNE}, there exists $t\in D$ such that $u_t(\Sigma')>u_t(\Sigma^\ast)+\varepsilon$.
Since $\lambda_L^\acc(\Sigma')$ and $\lambda_H^\acc(\Sigma')$ completely determine each agent's utility, we must have either $\lambda_L^\acc(\Sigma')\neq\lambda_L^\acc(\Sigma^\ast)$ or $\lambda_H^\acc(\Sigma')\neq\lambda_H^\acc(\Sigma^\ast)$.
This means either $\lambda_L^\acc(\Sigma')<1$ or $\lambda_H^\acc(\Sigma')<1$.

Since a candidate-friendly agent's utility is maximized when both $\lambda_L^\acc$ and $\lambda_H^\acc$ are $1$, a candidate-friendly agent's utility will decrease if the strategy profile is switched from $\Sigma^\ast$ to $\Sigma'$.
By 2 of Definition~\ref{def:SBNE}, $D$ does not contain any candidate-friendly agent.
However, if this is the case, there are still more than half of the agents that will report $F$ (as $\alpha_{F}>0.5$), and \acc will always be announced by Step~5 of the mechanism.
We conclude that $\lambda_L^\acc(\Sigma')=\lambda_H^\acc(\Sigma')=1$, which contradicts what we have concluded in the previous paragraph.

\paragraph{Case 2: $\alpha_{U}>0.5$.}
The analysis is similar to the previous case.

\paragraph{Case 3: $\alpha_{F}<0.5$ and $\alpha_{U}<0.5$.}

We consider two sub-cases: $I(\Sigma') < (2B+2)\exp(-2c^2\alpha_{C}T)$ and $I(\Sigma') \geq (2B+2)\exp(-2c^2\alpha_{C}T)$

Firstly, we consider $I(\Sigma') < (2B+2)\exp(-2c^2\alpha_{C}T)$.

\begin{claim}\label{claim:sameasbefore}
If $\alpha_{F}<0.5$, $\alpha_{U}<0.5$ and  $I(\Sigma') < (2B+2)\exp(-2c^2\alpha_{C}T)$, then $u_t(\Sigma')-u_t(\Sigma)<\varepsilon$ for every agent $t$.
\end{claim}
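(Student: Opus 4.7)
The plan is to bound $u_t(\Sigma')-u_t(\Sigma^\ast)$ directly using equation~(\ref{eqn:utsigma3_main}). Subtracting the two expressions, the prior-only terms cancel and I obtain
\[
u_t(\Sigma')-u_t(\Sigma^\ast) = P_L\bigl(\lambda_L^\acc(\Sigma')-\lambda_L^\acc(\Sigma^\ast)\bigr)\bigl(v_t(L,\acc)-v_t(L,\rej)\bigr) + P_H\bigl(\lambda_H^\acc(\Sigma')-\lambda_H^\acc(\Sigma^\ast)\bigr)\bigl(v_t(H,\acc)-v_t(H,\rej)\bigr).
\]
Since utilities lie in $\{0,1,\ldots,B\}$, the two $v_t$-differences each have absolute value at most $B$, so everything reduces to bounding the two probability differences in terms of the error rates $I(\Sigma^\ast)$ and $I(\Sigma')$.

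The key trick is to rewrite $\lambda_H^\acc=1-\lambda_H^\rej$, so that $P_H(\lambda_H^\acc(\Sigma')-\lambda_H^\acc(\Sigma^\ast)) = P_H\lambda_H^\rej(\Sigma^\ast)-P_H\lambda_H^\rej(\Sigma')$. After this rewrite, every term that shows up is one of the four nonnegative quantities $P_L\lambda_L^\acc(\Sigma^\ast)$, $P_L\lambda_L^\acc(\Sigma')$, $P_H\lambda_H^\rej(\Sigma^\ast)$, $P_H\lambda_H^\rej(\Sigma')$. I then use $|a-b|\le a+b$ for $a,b\ge 0$ together with the identity $I(\Sigma)=P_L\lambda_L^\acc(\Sigma)+P_H\lambda_H^\rej(\Sigma)$ from~(\ref{eqn:errorrate}) to conclude
\[
\bigl|P_L(\lambda_L^\acc(\Sigma')-\lambda_L^\acc(\Sigma^\ast))\bigr| + \bigl|P_H(\lambda_H^\acc(\Sigma')-\lambda_H^\acc(\Sigma^\ast))\bigr| \le I(\Sigma^\ast)+I(\Sigma'),
\]
and hence $|u_t(\Sigma')-u_t(\Sigma^\ast)| \le B\bigl(I(\Sigma^\ast)+I(\Sigma')\bigr)$.

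To finish, I apply Theorem~\ref{thm:msw}, which gives $I(\Sigma^\ast)\le 2\exp(-2c^2\alpha_C T)$, along with the claim's hypothesis $I(\Sigma')<(2B+2)\exp(-2c^2\alpha_C T)$, yielding
\[
|u_t(\Sigma')-u_t(\Sigma^\ast)| < B\bigl(2+(2B+2)\bigr)\exp(-2c^2\alpha_C T) = (2B^2+4B)\exp(-2c^2\alpha_C T) = \varepsilon.
\]
This is essentially a direct calculation, so I do not expect a substantial obstacle. The one delicate point is grouping the four nonnegative probability terms so the resulting constant matches $\varepsilon$ exactly: bounding the two probability differences separately via $|a-b|\le\max(a,b)$ for each would produce a looser $(4B^2+4B)$ coefficient instead of $(2B^2+4B)$, so it is important to combine $P_L\lambda_L^\acc$ with $P_H\lambda_H^\rej$ through $I(\Sigma)$ before applying the triangle inequality.
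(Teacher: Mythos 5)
Your proof is correct, and it reaches the exact constant $\varepsilon=(2B^2+4B)\exp(-2c^2\alpha_CT)$, but it takes a genuinely more streamlined route than the paper. The paper's proof first converts the hypothesis $I(\Sigma')<(2B+2)\exp(-2c^2\alpha_{C}T)$ into separate bounds $\lambda_L^\acc(\Sigma')\leq (2B+2)\exp(-2c^2\alpha_{C}T)/P_L$ and $\lambda_H^\rej(\Sigma')\leq (2B+2)\exp(-2c^2\alpha_{C}T)/P_H$ by dividing out the priors, and then performs a case analysis over the three agent types (candidate-friendly, contingent, candidate-unfriendly), choosing for each type which signed combination of the four $\lambda$-difference bounds is the relevant one given the signs of $v_t(n,\acc)-v_t(n,\rej)$. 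Your argument avoids both the division by $P_L,P_H$ and the type-by-type casework: by keeping the priors attached, rewriting $P_H(\lambda_H^\acc(\Sigma')-\lambda_H^\acc(\Sigma^\ast))$ in terms of $\lambda_H^\rej$, and applying $|a-b|\leq a+b$ to nonnegative quantities, you regroup the four terms directly into $I(\Sigma^\ast)+I(\Sigma')$ and obtain the uniform bound $|u_t(\Sigma')-u_t(\Sigma^\ast)|\leq B\left(I(\Sigma^\ast)+I(\Sigma')\right)$ for every agent simultaneously; strictness of the final inequality is inherited from the strict hypothesis on $I(\Sigma')$ since $B\geq 1$. What the paper's longer computation buys is slightly sharper per-type information (e.g., contingent agents gain at most $2B\exp(-2c^2\alpha_{C}T)$), but none of that is used elsewhere, so your version proves the claim as stated with less machinery and, as a bonus, bounds the utility change in absolute value rather than only from above.
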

The ideas behind this proof is that the outcome for $\Sigma'$ is too close to the outcome of the truthful strategy profile $\Sigma^\ast$, so no agent can get significantly more benefit.
\begin{proof}  
The proof of this claim shows that none of the three types of agents can benefit by $\varepsilon$ because nothing is substantially different from when agents play truthfully.  

By the inequality $I(\Sigma')=P_L\lambda_L^\acc(\Sigma')+P_H\lambda_H^\rej(\Sigma') < (2B+2)\exp(-2c^2\alpha_{C}T)$,
we have 
$$0\leq\lambda_H^\rej(\Sigma')\leq\frac{(2B+2)\exp(-2c^2\alpha_{C}T)}{P_H}$$ 
and 
$$0\leq\lambda_L^\acc(\Sigma')\leq\frac{(2B+2)\exp(-2c^2\alpha_{C}T)}{P_L}.$$
Since $\lambda_L^\acc(\Sigma^\ast)\leq2\exp(-2c^2\alpha_CT)$ and $\lambda_H^\rej(\Sigma^\ast)\leq2\exp(-2c^2\alpha_CT)$ by Theorem~\ref{thm:msw}, we have
$$\lambda_L^\acc(\Sigma^\ast)-\lambda_L^\acc(\Sigma')\leq \lambda_L^\acc(\Sigma^\ast)\leq2\exp(-2c^2\alpha_CT),$$
$$\lambda_L^\acc(\Sigma')-\lambda_L^\acc(\Sigma^\ast)\leq \lambda_L^\acc(\Sigma')\leq \frac{(2B+2)\exp(-2c^2\alpha_{C}T)}{P_L},$$
$$\lambda_H^\acc(\Sigma^\ast)-\lambda_H^\acc(\Sigma')\leq 1-\left(1-\lambda_H^\rej(\Sigma')\right)\leq \frac{(2B+2)\exp(-2c^2\alpha_{C}T)}{P_H},$$
and
$$\lambda_H^\acc(\Sigma')-\lambda_H^\acc(\Sigma^\ast)\leq 1-\left(1-\lambda_H^\rej(\Sigma^\ast)\right)\leq 2\exp(-2c^2\alpha_CT).$$
Now, we substitute the four inequalities into the following equation implied by (\ref{eqn:utsigma3_main}):
$$
\begin{array}{l}
     u_t(\Sigma')-u_t(\Sigma^\ast)=P_L(\lambda_L^\acc(\Sigma')-\lambda_L^\acc(\Sigma^\ast))(v_t(L,\acc)-v_t(L,\rej))  \\
     \qquad+P_H(\lambda_H^\acc(\Sigma')-\lambda_H^\acc(\Sigma^\ast))(v_t(H,\acc)-v_t(H,\rej)). 
\end{array}
$$
For any candidate-friendly agent, we have $v_t(L,\acc)>v_t(L,\rej)$ and $v_t(H,\acc)>v_t(H,\rej)$, which yields
\begin{align*}
    &u_t(\Sigma')-u_t(\Sigma^\ast)\\
    =&P_L(\lambda_L^\acc(\Sigma')-\lambda_L^\acc(\Sigma^\ast))(v_t(L,\acc)-v_t(L,\rej))+P_H(\lambda_H^\acc(\Sigma')-\lambda_H^\acc(\Sigma^\ast))(v_t(H,\acc)-v_t(H,\rej))\\
    \leq& P_L\cdot\frac{(2B+2)\exp(-2c^2\alpha_{C}T)}{P_L}\cdot B+P_H\cdot2\exp(-2c^2\alpha_{C}T))\cdot B\\
    <&(2B^2+4B)\exp(-2c^2\alpha_{C}T)=\varepsilon.
\end{align*}
For any contingent agent, we have $v_t(L,\rej)>v_t(L,\acc)$ and $v_t(H,\acc)>v_t(H,\rej)$, which yields
\begin{align*}
    &u_t(\Sigma')-u_t(\Sigma^\ast)\\
    =&P_L(\lambda_L^\acc(\Sigma^\ast)-\lambda_L^\acc(\Sigma'))(v_t(L,\rej)-v_t(L,\acc))+P_H(\lambda_H^\acc(\Sigma')-\lambda_H^\acc(\Sigma^\ast))(v_t(H,\acc)-v_t(H,\rej))\\
    \leq& P_L\cdot2\exp(-2c^2\alpha_{C}T)\cdot B+P_H\cdot2\exp(-2c^2\alpha_{C}T)\cdot B\\
    =&2B\exp(-2c^2\alpha_{C}T)<\varepsilon.
\end{align*}
For any candidate-unfriendly agent, we have $v_t(L,\rej)>v_t(L,\acc)$ and $v_t(H,\rej)>v_t(H,\acc)$, which yields
\begin{align*}
    &u_t(\Sigma')-u_t(\Sigma^\ast)\\
    =&P_L(\lambda_L^\acc(\Sigma^\ast)-\lambda_L^\acc(\Sigma'))(v_t(L,\rej)-v_t(L,\acc))+P_H(\lambda_H^\acc(\Sigma^\ast)-\lambda_H^\acc(\Sigma'))(v_t(H,\rej)-v_t(H,\acc))\\
    \leq& P_L\cdot2\exp(-2c^2\alpha_{C}T)\cdot B+P_H\cdot\frac{(2B+2)\exp(-2c^2\alpha_{C}T)}{P_H}\cdot B\\
    <&(2B^2+4B)\exp(-2c^2\alpha_{C}T)=\varepsilon.
\end{align*}
We conclude none of the agents has a utility gain of at least $\varepsilon$, which contradict 3 of Definition~\ref{def:SBNE}.
\end{proof}

Claim~\ref{claim:sameasbefore} implies that, in the first case $I(\Sigma') <  (2B+2)\exp(-2c^2\alpha_{C}T)$, there does not exist a deviating coalition $D$ where an agent in $D$ can receive an utility gain of at least $\varepsilon$, which contradicts to our assumption about $D$ at the beginning. 

Next, we consider the second case $I(\Sigma') \geq  (2B+2)\exp(-2c^2\alpha_{C}T)$.  
This case is more complicated. 
We first show that no contingent agent in $\Sigma'$ can do as well as in the truthful profile $\Sigma^\ast$.

\begin{claim}\label{claim:contingentunhappy}
If $\alpha_{F}<0.5$, $\alpha_{U}<0.5$ and $I(\Sigma') \geq (2B+2)\exp(-2c^2\alpha_{C}T)$, then $u_t(\Sigma')-u_t(\Sigma^\ast)<0$ for every contingent agent $t$.
\end{claim}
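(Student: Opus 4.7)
The plan is to rewrite a contingent agent's ex-ante utility so that the error probabilities $\lambda_L^\acc$ and $\lambda_H^\rej$ appear explicitly as penalty terms, and then play the lower bound on $I(\Sigma')$ from the hypothesis against the upper bound on $I(\Sigma^\ast)$ from Theorem~\ref{thm:msw}. The integer-utility assumption will do the essential work: it lets the penalty under $\Sigma'$ be bounded \emph{below} by $1 \cdot I(\Sigma')$ (since each utility gap is at least $1$), while the penalty under $\Sigma^\ast$ is only bounded \emph{above} by $B \cdot I(\Sigma^\ast)$.

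Concretely, using $\lambda_L^\acc + \lambda_L^\rej = 1 = \lambda_H^\acc + \lambda_H^\rej$, equation~(\ref{eqn:utsigma1_main}) rewrites as
\begin{equation*}
u_t(\Sigma) \;=\; P_L v_t(L,\rej) + P_H v_t(H,\acc) \;-\; P_L \lambda_L^\acc(\Sigma)\, a_t \;-\; P_H \lambda_H^\rej(\Sigma)\, b_t,
\end{equation*}
where $a_t := v_t(L,\rej) - v_t(L,\acc)$ and $b_t := v_t(H,\acc) - v_t(H,\rej)$. For a contingent agent both $a_t$ and $b_t$ are positive integers bounded by $B$, so $1 \leq a_t, b_t \leq B$. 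Then
\begin{equation*}
u_t(\Sigma^\ast) - u_t(\Sigma') \;=\; \bigl[P_L \lambda_L^\acc(\Sigma')\, a_t + P_H \lambda_H^\rej(\Sigma')\, b_t\bigr] \;-\; \bigl[P_L \lambda_L^\acc(\Sigma^\ast)\, a_t + P_H \lambda_H^\rej(\Sigma^\ast)\, b_t\bigr].
\end{equation*}
The first bracket is at least $I(\Sigma') \geq (2B+2)\exp(-2c^2 \alpha_C T)$ (using $a_t, b_t \geq 1$ together with the hypothesis), while the second bracket is at most $B \cdot I(\Sigma^\ast) \leq 2B \exp(-2c^2 \alpha_C T)$ (using $a_t, b_t \leq B$ together with Theorem~\ref{thm:msw}). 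Subtracting yields $u_t(\Sigma^\ast) - u_t(\Sigma') \geq 2\exp(-2c^2 \alpha_C T) > 0$, which is exactly the desired strict inequality $u_t(\Sigma') - u_t(\Sigma^\ast) < 0$.

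I do not anticipate any real obstacle; once the utility is put into penalty form the argument collapses to a single inequality. The only subtle point is recognizing why the hypothesis threshold is the particular value $(2B+2)\exp(-2c^2 \alpha_C T)$: it is chosen precisely so that $I(\Sigma')$ dominates the worst-case penalty $B \cdot I(\Sigma^\ast)$ by a strictly positive slack of $2\exp(-2c^2 \alpha_C T)$, which converts a $\leq 0$ conclusion into a strict $< 0$ and so gives the contingent agent a \emph{strict} loss.
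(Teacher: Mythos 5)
Your proof is correct. Every step checks out: the rewriting of (\ref{eqn:utsigma1_main}) into the penalty form $u_t(\Sigma)=P_Lv_t(L,\rej)+P_Hv_t(H,\acc)-P_L\lambda_L^\acc(\Sigma)a_t-P_H\lambda_H^\rej(\Sigma)b_t$ is a valid identity; for a contingent agent the gaps satisfy $1\leq a_t,b_t\leq B$ by integrality and boundedness of the \emph{ex-post} utilities; the $\Sigma'$ penalty is at least $I(\Sigma')\geq(2B+2)\exp(-2c^2\alpha_CT)$; the $\Sigma^\ast$ penalty is at most $B\cdot I(\Sigma^\ast)\leq 2B\exp(-2c^2\alpha_CT)$ by Theorem~\ref{thm:msw}; and the subtraction leaves a strictly positive slack of $2\exp(-2c^2\alpha_CT)$.

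The paper's proof rests on exactly the same two ingredients---the bounds $1\leq a_t,b_t\leq B$ and the gap between $I(\Sigma^\ast)$ and $I(\Sigma')$---but executes them differently. It first folds the two bounds into the single inequality $P_L\lambda_L^\rej(\Sigma^\ast)+P_H\lambda_H^\acc(\Sigma^\ast)\geq P_L\lambda_L^\rej(\Sigma')+P_H\lambda_H^\acc(\Sigma')+2B\exp(-2c^2\alpha_CT)$ and then splits into three cases according to whether $\lambda_H^\acc$ or $\lambda_L^\rej$ increased under $\Sigma'$, bounding the utility change separately in each case. Your version avoids the case analysis entirely: by keeping the error probabilities $\lambda_L^\acc$ and $\lambda_H^\rej$ (rather than the success probabilities) as the penalty variables, both penalty terms are nonnegative, so you can lower-bound the $\Sigma'$ penalty by $1\cdot I(\Sigma')$ and upper-bound the $\Sigma^\ast$ penalty by $B\cdot I(\Sigma^\ast)$ in one stroke. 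This is shorter and makes transparent why the hypothesis threshold is exactly $(2B+2)\exp(-2c^2\alpha_CT)$. The two arguments reach the same conclusion; yours is the more economical writeup, while the paper's case split makes explicit which direction each $\lambda$ can move, a bookkeeping style it reuses elsewhere (e.g., in the proof of Lemma~\ref{lem:conflict}).
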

The ideas behind this proof is that those contingent agents already receive almost optimal utilities in $\Sigma^\ast$; therefore, if the error rate of the strategy $\Sigma'$ is high enough, the utilities of the contingent agents will decrease.
\begin{proof}
By Theorem~\ref{thm:msw}, we have
\begin{align}\label{eqn:Sigmastar>sigmaprime}
    P_L\lambda_L^\rej(\Sigma^\ast)+P_H\lambda_H^\acc(\Sigma^\ast) &\geq P_L(1-2\exp(-2c^2\alpha_{C}T))+P_H(1-2\exp(-2c^2\alpha_{C}T))\nonumber\\
    &=1-2\exp(-2c^2\alpha_{C}T)\nonumber\\
    &=2B\exp(-2c^2\alpha_{C}T)+1-(2B+2)\exp(-2c^2\alpha_{C}T)\nonumber\\
    &\geq2B\exp(-2c^2\alpha_{C}T)+1-I(\Sigma')\nonumber\\
    &=2B\exp(-2c^2\alpha_{C}T)+(P_L-P_L\lambda_L^\acc(\Sigma'))+(P_H-P_H\lambda_H^\rej(\Sigma'))\nonumber\\
    &= P_L\lambda_L^\rej(\Sigma')+P_H\lambda_H^\acc(\Sigma')+2B\exp(-2c^2\alpha_{C}T)
\end{align}
%By rewriting (\ref{eqn:utsigma1_main}) as (noticing $\lambda_L^\acc(\Sigma)+\lambda_H^\acc(\Sigma)=1$)
%$$
%    \begin{array}{l}
%         u_t(\Sigma)=P_Lv_t(L,\acc)+P_Hv_t(H,\rej)+P_L\lambda_L^\rej(\Sigma)(v_t(L,\rej)-v_t(L,\acc))  \\
%         \qquad\qquad +P_H\lambda_H^\acc(\Sigma)(v_t(H,\acc)-v_t(H,\rej)),
%   \end{array}
%$$
By (\ref{eqn:utsigma2_main}), we have
\begin{equation}\label{eqn:differenceBetweenSigma}
\begin{array}{l}
     u_t(\Sigma')-u_t(\Sigma^\ast)
  =  P_L(v_t(L,\rej)-v_t(L,\acc))(\lambda_L^\rej(\Sigma')-\lambda_L^\rej(\Sigma^\ast))  \\
     \qquad\qquad+P_H(v_t(H,\acc)-v_t(H,\rej))(\lambda_H^\acc(\Sigma')-\lambda_H^\acc(\Sigma^\ast))
\end{array}
\end{equation}
We will show $u_t(\Sigma')-u_t(\Sigma^\ast)<0$ for an arbitrary contingent agent $t$.
Recall that $v_t(L,\rej)-v_t(L,\acc)>0$ and $v_t(H,\acc)-v_t(H,\rej)>0$.
We consider three cases:
\begin{itemize}
    \item If $\lambda_H^\acc(\Sigma')\leq\lambda_H^\acc(\Sigma^\ast)$ and $\lambda_L^\rej(\Sigma')\leq\lambda_L^\rej(\Sigma^\ast)$, then one of these two inequalities must be strict by (\ref{eqn:Sigmastar>sigmaprime}). Equation (\ref{eqn:differenceBetweenSigma}) then implies $u_t(\Sigma')-u_t(\Sigma^\ast)<0$.
    \item If $\lambda_H^\acc(\Sigma')>\lambda_H^\acc(\Sigma^\ast)$, then we have  $P_L\lambda_L^\rej(\Sigma^\ast)-P_L\lambda_L^\rej(\Sigma')>2B\exp(-2c^2\alpha_{C}T)$ by (\ref{eqn:Sigmastar>sigmaprime}).
    Since $\lambda_H^\acc(\Sigma')\leq1$ and $\lambda_H^\acc(\Sigma^\ast)\geq1-2\exp(-2c^2\alpha_{C}T)$, we have $\lambda_H^\acc(\Sigma')-\lambda_H^\acc(\Sigma^\ast)\leq2\exp(-2c^2\alpha_{C}T)$.
    We also have $v_t(H,\acc)-v_t(H,\rej)\leq B$ and $v_t(L,\rej)-v_t(L,\acc)\geq1$ (recall that  $v_t(L,\acc)$, $v_t(L,\rej)$, $v_t(H,\rej)$ and $v_t(H,\acc)$ are integers bounded by $B$).
    Putting those into (\ref{eqn:differenceBetweenSigma}), we have
    $$u_t(\Sigma')-u_t(\Sigma^\ast)<1\cdot(-2B\exp(-2c^2\alpha_{C}T))+P_H\cdot B\cdot2\exp(-2c^2\alpha_{C}T)<0.$$
    \item If $\lambda_L^\rej(\Sigma')>\lambda_L^\rej(\Sigma^\ast)$, then we have $P_H\lambda_H^\acc(\Sigma^\ast)-P_H\lambda_H^\acc(\Sigma')>2B\exp(-2c^2\alpha_{C}T)$ by (\ref{eqn:Sigmastar>sigmaprime}).
    Similar to the second case, we have $\lambda_L^\rej(\Sigma')-\lambda_L^\rej(\Sigma^\ast)\leq2\exp(-2c^2\alpha_{C}T)$,  $v_t(H,\acc)-v_t(H,\rej)\geq1$ and $v_t(L,\rej)-v_t(L,\acc)\leq B$.
    Putting those into (\ref{eqn:differenceBetweenSigma}), we have
    $$u_t(\Sigma')-u_t(\Sigma^\ast)<P_L\cdot B\cdot2\exp(-2c^2\alpha_{C}T)+1\cdot(-2B\exp(-2c^2\alpha_{C}T))<0.$$
\end{itemize}
Putting these three cases together, we have $u_t(\Sigma')-u_t(\Sigma^\ast)<0$ for an arbitrary contingent agent $t$. 
\end{proof}

Therefore, $D$ cannot contain any contingent agents by 2 of Definition~\ref{def:SBNE}.
Corollary~\ref{cor:conflict} says that $D$ cannot simultaneously contain an candidate-friendly agent and a candidate-unfriendly agent.  Thus any $D$ must contain either only candidate-friendly agents or only candidate-unfriendly agents.

The following claim states that neither type of predetermined agents alone are powerful enough to change the outcome to their favor.  This concludes the proof as we have shown there is no deviating coalition. 

\gs{ the below claim was completely wrong.  I changed it and the proof.}
\bt{I see. There was a typo, the inequality was mistyped to equality.}
\bt{I rewrite the lemma and the proof below. I think the original proof is for the version of the mechanism where we have converted the predictions to the extreme points for type $F$ and $U$ agents.}

\begin{claim}  \label{claim:nochange}
Suppose $\alpha_{F}<0.5$ and $\alpha_{U}<0.5$.  If $D$ contains only candidate-friendly agents, then $\lambda_L^\acc(\Sigma') \leq  \lambda_L^\acc(\Sigma^*)$ and  $\lambda_H^\acc(\Sigma') \leq  \lambda_H^\acc(\Sigma^*)$.  If $D$ contains only candidate-unfriendly agents, then $\lambda_L^\rej(\Sigma') \leq  \lambda_L^\rej(\Sigma^*)$ and  $\lambda_H^\rej(\Sigma') \leq  \lambda_H^\rej(\Sigma^*)$.
\end{claim}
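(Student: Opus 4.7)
The plan is to argue pointwise over signal realizations: for every realization of the agents' signals, I will show that if the mechanism outputs $\acc$ under $\Sigma'$ then it also outputs $\acc$ under $\Sigma^*$. Taking expectations over realizations conditional on each world $n$ then yields $\lambda_n^\acc(\Sigma') \leq \lambda_n^\acc(\Sigma^*)$, which is what is claimed. The symmetric argument handles a candidate-unfriendly-only $D$.

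The first step is to pin down the extremality of a candidate-friendly agent's truthful report. Reporting type $F$ is processed by Steps~1--3 as $\bar{s}_t = h$ and $\bar{\delta}_t = 0$, and contributes to the type-$F$ count in Step~5. Any alternative report (type $U$, or type $C$ with either signal) can only (a) weakly decrease the number of agents whose $\bar s_t = h$, (b) weakly increase $\bar \delta_t$ (from $0$ to some value in $[0,1]$), (c) weakly decrease the type-$F$ count, and (d) weakly increase the type-$U$ count. Thus, pointwise in the signal realization, moving from $\Sigma^*$ to $\Sigma'$ weakly shrinks the $h$-count, weakly enlarges every F-in-$D$ agent's prediction, weakly shrinks the type-$F$ count, and weakly grows the type-$U$ count.

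Next I would invoke the elementary monotonicity of the median: if $(a_1,\ldots,a_T)$ and $(b_1,\ldots,b_T)$ satisfy $a_i \leq b_i$ for each $i$, then the $k$-th order statistic of the $a_i$'s is at most that of the $b_i$'s, because $\#\{i : a_i \leq t\} \geq \#\{i : b_i \leq t\}$ for every threshold $t$. Applied to the multisets of predictions, this gives $\bar\delta$ under $\Sigma^* \leq \bar\delta$ under $\Sigma'$. Combined with the fact that the $h$-count under $\Sigma^*$ is at least that under $\Sigma'$, the Step~6 inequality ``$h$-count $> \bar\delta$'' is preserved from $\Sigma'$ to $\Sigma^*$.

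Finally, I would do the case analysis of how $\Sigma'$ produces $\acc$. If Step~5 outputs $\acc$ under $\Sigma'$, then more than half of the agents report type $F$ in $\Sigma'$; but the type-$F$ count under $\Sigma^*$ is weakly larger and equals $\alpha_F T < T/2$, a contradiction, so this subcase is vacuous. If Step~6 outputs $\acc$ under $\Sigma'$, then under $\Sigma^*$ the type-$F$ count is $\alpha_F T < T/2$ and the type-$U$ count is $\alpha_U T < T/2$, so Step~5 does not trigger, and Step~6 takes over; by the monotonicity observations above, ``$h$-count $> \bar\delta$'' still holds under $\Sigma^*$, so $\Sigma^*$ also outputs $\acc$. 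The candidate-unfriendly case is symmetric: type-$U$ is now the extremal truthful report (producing $\bar s_t = \ell$ and $\bar\delta_t = 1$), and the roles of $h$-count/$\ell$-count and of the two directions of the median are swapped. The only real subtlety is bookkeeping the interaction of Step~5 and Step~6 cleanly, which the case split above handles; the underlying reason the claim works is that the truthful reports of predetermined agents are already the most favorable inputs they can possibly feed into the mechanism.
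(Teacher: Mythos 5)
Your proof is correct and takes essentially the same route as the paper's: the paper's (much terser) argument likewise notes that the deviators cannot trigger Step~5 in their favor and that the conversions in Steps~2--3 already turn a truthful type-$F$ (resp.\ type-$U$) report into the extremal signal/prediction pair, so no deviation can help at Step~6. Your pointwise coupling over signal realizations and the order-statistic monotonicity of the median simply make that extremality argument rigorous.
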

\begin{proof}
Consider candidate-friendly agents without loss of generality.
Since contingent agents and candidate-unfriendly agents, which constitute more than half of the population, are truth-telling, those candidate-friendly agents cannot make the mechanism announce \acc at Step~5, since they cannot make more than half of agents report type $F$.
To maximize the probability that the mechanism announce \acc at Step~6, those candidate-friendly agents would like to maximize the fraction of agents reporting signal $h$ and minimize the median $\bar\delta$.
However, the mechanism's conversion of signals (Step~2) and predictions (Step~3) already does these for candidate-friendly agents.
\end{proof}

Thus, we have proved that, for the second case $I(\Sigma') \geq  (2B+2)\exp(-2c^2\alpha_{C}T)$, no such deviating set $D$ exists, which contradicts to our assumption for the existence of $D$ at the beginning.

\subsection{An Alternative Mechanism}
\label{sect:alternative_mechanism}
As we have remarked right below Mechanism~\ref{mechanism:M=N=2}, we present an alternative mechanism that achieves the same theoretical properties.
The mechanism is shown in Mechanism~\ref{mechanism:alternative}.

\begin{algorithm}
\caption{The Wisdom-of-the-Crowd-Voting Mechanism (an alternative)}
\label{mechanism:alternative}
\begin{algorithmic}[1]
    \STATE Each agent $t$ reports to the mechanism the signal (s)he receives (either $\ell$ or $h$), denoted by $\bar{s}_i\in\{\ell,h\}$, his/her type ($F$, $U$ or $C$), his/her posterior belief of the fraction of agents who will report signal $h$, denoted by $\bar{\delta}_t$.
    \STATE If agent $t$ reports type $F$, his reported signal will be automatically treated as $\bar{s}_t=h$; 
    if agent $t$ reports type $U$, his reported signal will be automatically treated as $\bar{s}_t=\ell$. 
    \emph{The prediction $\bar{\delta}_t$ in the previous step should be made with this treatment being considered, and the mechanism makes this clear to the agents.}
    \STATE Compute the \emph{median} of the reported $\bar{\delta}_t$, denoted by $\bar\delta$.
    \STATE If more than half of the agents report type $F$, announce \acc being the winning alternative; if more than half of the agents reports type $U$, announce \rej being the winning alternative.
    \STATE If the number of agents reporting $\bar{s}_t=h$ is more than the median $\bar\delta$, announce \acc being the winning alternative; otherwise, announce \rej being the winning alternative.
\end{algorithmic}
\end{algorithm}

The only difference between this mechanism and Mechanism~\ref{mechanism:M=N=2} is that we ask all the agents to report their predictions without any changes or treatments afterwards.

Correspondingly, the questionnaire becomes the followings.

\begin{enumerate}
    \item Choose one of the followings:
    \begin{enumerate}
        \item I definitely want to accept this candidate.
        \item I definitely want to reject this candidate.
        \item After talking to the candidate, I am more inclined to accept him/her than before.
        \item After talking to the candidate, I am more inclined to reject him/her than before.
    \end{enumerate}
    \item What percentage of the faculty members do you believe will choose (a) or (c) in the first question?
\end{enumerate}

Theorem~\ref{thm:msw} still holds for Mechanism~\ref{mechanism:alternative}.
If $\alpha_F>0.5$ or $\alpha_U>0.5$, the mechanism outputs the majority wish (\acc or \rej respectively) at Step~4 with probability $1$ as before.
If $\alpha_F,\alpha_U<0.5$, we still have $\bar\delta\in[\alpha_CT_{h\ell}+\alpha_F,\alpha_CT_{hh}+\alpha_F]$.
This is actually easier to see: agents' predictions are now either $\alpha_CT_{h\ell}+\alpha_F$ or $\alpha_CT_{hh}+\alpha_F$.
The remaining part of the proof is the same as before.

Theorem~\ref{thm:SBNE2} still holds for Mechanism~\ref{mechanism:alternative}.
In fact, all parts of the proof are the same as before, except for Claim~\ref{claim:nochange} where we can only prove a weaker statement, which is, nevertheless, sufficient to show Theorem~\ref{thm:SBNE2}.

\begin{claim}  \label{claim:nochange--}
Suppose $\alpha_{F}<0.5$ and $\alpha_{U}<0.5$.  If $D$ contains only candidate-friendly agents, then $\lambda_L^\acc(\Sigma') \leq  \lambda_L^\acc(\Sigma^*)+2\exp(-2c^2\alpha_CT)$ and  $\lambda_H^\acc(\Sigma') \leq  \lambda_H^\acc(\Sigma^*)+2\exp(-2c^2\alpha_CT)$.  If $D$ contains only candidate-unfriendly agents, then $\lambda_L^\rej(\Sigma') \leq  \lambda_L^\rej(\Sigma^*)+2\exp(-2c^2\alpha_CT)$ and  $\lambda_H^\rej(\Sigma') \leq  \lambda_H^\rej(\Sigma^*)+2\exp(-2c^2\alpha_CT)$.
\end{claim}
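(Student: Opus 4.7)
The plan is to mirror the proof of Claim~\ref{claim:nochange} but to account for the fact that in Mechanism~\ref{mechanism:alternative} candidate-friendly agents are no longer forced to report $\bar\delta_t=0$ and so could in principle try to manipulate the median. By symmetry I only need to handle $D\subseteq F$; the $D\subseteq U$ case is completely analogous. Moreover the state-$H$ inequality $\lambda_H^\acc(\Sigma')\leq\lambda_H^\acc(\Sigma^\ast)+2\exp(-2c^2\alpha_CT)$ comes for free from the trivial bound $\lambda_H^\acc(\Sigma')\leq 1$ combined with $\lambda_H^\acc(\Sigma^\ast)\geq 1-2\exp(-2c^2\alpha_CT)$ (Theorem~\ref{thm:msw}). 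So the real work is bounding $\lambda_L^\acc(\Sigma')$.

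Since $|F|<T/2$ and only agents in $F$ deviate, fewer than half of the reported types are $F$, so Step~4 of Mechanism~\ref{mechanism:alternative} cannot output \acc. Thus \acc can arise only at Step~5, which needs the fraction of $h$-reports (after the Step~2 override) to exceed the median $\bar\delta$. I would first observe that F agents cannot push the $h$-report fraction above its truthful value: each F agent can either report type $F$ (Step~2 then sets $\bar s_t=h$) or report some other type with a signal, and no such option contributes more than $1$ to the $h$-count that reporting type $F$ already achieves. Since all non-F agents are truthful, the resulting fraction of $h$-reports under $\Sigma'$ equals $\alpha_F+\alpha_C\pi$, where $\pi$ is the empirical $h$-rate among the contingent agents, and $\pi$ is independent of the choice of $\Sigma'$.

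Next I would bound the median from below by $\alpha_F+\alpha_C T_{h\ell}$. Every non-F agent reports truthfully, so their $\bar\delta_t$ belongs to $\{\alpha_F+\alpha_C T_{h\ell},\,\alpha_F+\alpha_C T_{hh}\}$, both of which are at least $\alpha_F+\alpha_C T_{h\ell}$. Consequently, the number of reported predictions strictly below $\alpha_F+\alpha_C T_{h\ell}$ is at most $|F|=\alpha_F T$, and since $T$ is odd and $\alpha_F<1/2$, this is at most $(T-1)/2$. Hence strictly fewer than half of the $T$ reports lie below $\alpha_F+\alpha_C T_{h\ell}$, which forces $\bar\delta\geq\alpha_F+\alpha_C T_{h\ell}$ irrespective of what the F agents report for their predictions.

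Finally, Theorem~\ref{thm:keyInequality} and the definition of $c$ in~(\ref{eqn:C}) give $T_{h\ell}-P_{hL}\geq 3c$, and a Chernoff bound on the $\alpha_C T$ conditionally i.i.d.\ contingent signals (exactly as in the proof of Theorem~\ref{thm:msw}) yields, in state $L$, $\pi\leq P_{hL}+c$ with probability at least $1-2\exp(-2c^2\alpha_C T)$. On this event the $h$-report fraction satisfies $\alpha_F+\alpha_C\pi\leq\alpha_F+\alpha_C(P_{hL}+c)<\alpha_F+\alpha_C T_{h\ell}\leq\bar\delta$, so Step~5 outputs \rej. This gives $\lambda_L^\acc(\Sigma')\leq 2\exp(-2c^2\alpha_C T)\leq\lambda_L^\acc(\Sigma^\ast)+2\exp(-2c^2\alpha_C T)$, as claimed. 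The only delicate step is the median lower bound; it relies crucially on the fact that the \emph{truthful} predictions take only two discrete values, so that strictly fewer than half of the deviators cannot drag the median below the lower of the two.
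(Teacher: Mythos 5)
Your proposal is correct and follows essentially the same route as the paper's own proof: the trivial bound in state $H$ via Theorem~\ref{thm:msw}, the observation that a minority of type-$F$ reports cannot trigger the type-majority step, the Chernoff bound giving an $h$-report fraction of at most $\alpha_F+\alpha_C(P_{hL}+c)$ in state $L$, and the median lower bound $\bar\delta\geq\alpha_F+\alpha_C T_{h\ell}$ from the fact that the more-than-half truthful agents all report one of the two values $\alpha_F+\alpha_C T_{h\ell}$ or $\alpha_F+\alpha_C T_{hh}$. The only cosmetic slip is saying the $h$-report fraction ``equals'' $\alpha_F+\alpha_C\pi$ when deviating $F$ agents could report $\ell$ and make it smaller; since only the upper bound is used, this does not affect the argument.
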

\begin{proof}
We focus on the case that $D$ contains only candidate-friendly agents.  The candidate-unfriendly case is analogous.   

First of all, since there are strictly less than half of the agents reporting type $F$ (those type $U$ and type $C$ agents, which contribute more than half of the population, report their types truthfully), those candidate-friendly agents in $D$ cannot make the mechanism output \acc at Step~5 of the mechanism.
Therefore, they can only attempt to make the mechanism output \acc at Step~6 with a higher probability.

Suppose the actual world is $H$.
We need to prove $\lambda_H^\acc(\Sigma') \leq  \lambda_H^\acc(\Sigma^*)+2\exp(-2c^2\alpha_CT)$.
This is trivial: Theorem~\ref{thm:msw} implies $\lambda_H^\acc(\Sigma^*)\geq1-2\exp(-2c^2\alpha_CT)$, so the right-hand side of the inequality is at least $1$, making the inequality always hold.

Suppose the actual world is $L$.
We need to prove $\lambda_L^\acc(\Sigma') \leq  \lambda_L^\acc(\Sigma^*)+2\exp(-2c^2\alpha_CT)$.
It suffices to show $\lambda_L^\acc(\Sigma') \leq  2\exp(-2c^2\alpha_CT)$, which is equivalent to
\begin{equation}\label{eqn:nochange_finalgoal}
\lambda_L^\rej(\Sigma')\geq1-2\exp(-2c^2\alpha_CT).
\end{equation}
Supposing $\Sigma'$ is played, we will prove the following two observations:
\begin{enumerate}
    \item With probability at least $1-2\exp(-2c^2\alpha_CT)$, the fraction of agents reporting signal $h$ is at most $\alpha_C(P_{hL}+c)+\alpha_F$;
    \item The median of the prediction $\bar\delta$ falls into the interval $[\alpha_CT_{h\ell}+\alpha_F,\alpha_CT_{hh}+\alpha_F]$ (with probability $1$).
\end{enumerate}
To show the first observation, all the candidate-unfriendly agents will report signal $\ell$ (after the conversion in Step~2).
For the contingent agents, each of them receives signal $h$ with probability $P_{hL}$.
By a Chernoff bound, with probability at least $1-2\exp(-2c^2\alpha_CT)$, the fraction of the contingent agents receiving $h$ is at most $P_{hL}+c$.
Even if all the candidate-friendly agents report $h$, the overall fraction of agents reporting $h$ is at most $\alpha_C(P_{hL}+c)+\alpha_F$ with probability at least $1-2\exp(-2c^2\alpha_CT)$.

To show the second observation, the prediction $\bar{\delta}_t$ that a truthful agent will report is either $\alpha_CT_{h\ell}+\alpha_F$ (if (s)he receive signal $\ell$) or $\alpha_CT_{hh}+\alpha_F$ (if (s)he receive signal $h$).
Since there are more than half of truth-telling agents, the median $\bar\delta$ is always within the interval $[\alpha_CT_{h\ell}+\alpha_F,\alpha_CT_{hh}+\alpha_F]$.

Finally, by noticing $\alpha_C(P_{hL}+c)+\alpha_F<\alpha_CT_{h\ell}+\alpha_F$ (implied by Theorem~\ref{thm:keyInequality} and (\ref{eqn:C})), the two observations imply that the fraction of agents reporting signal $h$ is strictly less than $\bar\delta$ with probability at least $1-2\exp(-2c^2\alpha_CT)$, which implies (\ref{eqn:nochange_finalgoal}) by our design in Step~6 of the mechanism.
\end{proof}

To conclude Theorem~\ref{thm:SBNE2}, for each agent $t$ in $D$ that contains only candidate-friendly agents, we have
\begin{align*}
    & u_t(\Sigma')-u_t(\Sigma^\ast)\\
    =&P_L(\lambda_L^\acc(\Sigma')-\lambda_L^\acc(\Sigma^\ast))(v_t(L,\acc)-v_t(L,\rej))+P_H(\lambda_H^\acc(\Sigma')-\lambda_H^\acc(\Sigma^\ast))(v_t(H,\acc)-v_t(H,\rej))\tag{by (\ref{eqn:utsigma3_main})}\\
    \leq &P_L\cdot 2\exp(-2c^2\alpha_CT) \cdot B+P_H\cdot 2\exp(-2c^2\alpha_CT)\cdot B\tag{by Claim~\ref{claim:nochange--}}\\
    =& 2B\exp(-2c^2\alpha_CT)<\varepsilon.
\end{align*}
Thus, no agent in $D$ satisfies 3 in Definition~\ref{def:SBNE}.
A similar analysis holds for the case where $D$ contains only candidate-unfriendly agents.

\subsubsection{Comparison of the Two Mechanisms}
Both Mechanism~\ref{mechanism:M=N=2} and Mechanism~\ref{mechanism:alternative} achieve the same set of theoretical properties.

The advantage for Mechanism~\ref{mechanism:M=N=2} is that it is ``slightly more truthful'' in the sense that Claim~\ref{claim:nochange} is stronger than Claim~\ref{claim:nochange--}.
In fact, under Mechanism~\ref{mechanism:M=N=2}, we have seen that truth-telling is a dominant strategy for both candidate-friendly and candidate-unfriendly agents, while this nice property is lost in Mechanism~\ref{mechanism:alternative}.
Under Mechanism~\ref{mechanism:alternative}, the dominant strategy for a candidate-friendly agent (candidate-unfriendly agent resp.) is to report prediction $0$ ($1$ resp.), which is no longer a truthful strategy.
Nevertheless, we have seen that the truthful strategy is good enough so that a deviation to the dominant strategy does not provide a utility gain of at least $\varepsilon$.

Mechanism~\ref{mechanism:alternative} wins by a little bit for its simplicity and symmetry.
It is easier to explain Mechanism~\ref{mechanism:alternative} to the users in practice.
Notice that Mechanism~\ref{mechanism:M=N=2} essentially converts the prediction reported from each candidate-friendly agent (candidate-unfriendly agent resp.) to $0$ ($1$ resp.).
Converting the predictions may seem to be less natural than converting the signals for users.
Especially, for those users who are not familiar to the idea of ``surprisingly popular'', they may not be able to realize that converting their predictions to the opposite extreme is helpful for them, and they may be more skeptical of Mechanism~\ref{mechanism:M=N=2} due to this.
In addition, Mechanism~\ref{mechanism:alternative} treats the reported predictions symmetrically, which may be more acceptable to the users in practical implementations.

\gs{I added this.  Note sure this is the best place for it.}
\bt{I think the paragraph below may be out of place. However, I do not find a better place to put it.}
Another related question is how exactly to phrase the ballot in practice.  In both Mechanism~\ref{mechanism:M=N=2} and \ref{mechanism:alternative}, we mimic the questions of \citet{prelec2017solution} and ask for a forecast.  However, it may be preferable in practice to ask, as in Mechanism~\ref{mechanism:general}, for a fractional threshold of (a) and (c) responses above which the agents would prefer to accept.  While the outcomes would be mathematically equivalent, one or the other or a third alternative might work better in practice.  This is, however, beyond the scope of this paper.

\section{Unknown/Partially Known Distribution of Agent Types}
\label{sect:unknowndistribution}
%Before generalizing our result to the setting with general non-binary signals and states, we take a detour and discuss the necessity of the assumption that the distribution of agent types is common knowledge.
As we mentioned in Section~\ref{sect:prelim_main}, we assume the distribution of agent types, $\alpha_{F},\alpha_{U},\alpha_{C}$, is a common knowledge among the agents.
This assumption is natural by its own in many scenarios including our candidate hiring example (if a theory candidate is applying at a computer science department, those theory faculty members are more inclined to accept the candidate than the faculty members in AI, software, hardware; moreover, the numbers of the theory, AI, software, hardware faculty members are public information).
In this section, we will see that this assumption is also necessary for the existence of a mechanism that satisfies Theorem~\ref{thm:msw} and Theorem~\ref{thm:SBNE2}.

Before describing our impossibility result, we first formally define the model with unknown agent types.
Let $\Delta_{3}=\{(x_1,x_2,x_3)\mid \forall i:x_i \in [0,1] ;x_1+x_2+x_3=1\}$.
The distribution of types, $(\alpha_{F},\alpha_{U},\alpha_{C})$, is then an element of $\Delta_{3}$.
To model an unknown/partially known distribution of agent types, let $\mathcal{D}_{\Delta_{3}}$ be a distribution over $\Delta_{3}$ where each agent believes the distribution of the agent types, $(\alpha_{F},\alpha_{U},\alpha_{C})$, is drawn from $\mathcal{D}_{\Delta_{3}}$. 

\gs{added this}
Note that while the fraction of types is not known, the prior over the worlds, $P_L, P_H$, and the signal structures conditioned on types, $P_{\ell L}, P_{\ell H}, P_{h L}, P_{h H} $ are still common knowledge.

%We would like to remark on the following subtlety.
%We have assume that all the agents have the same belief on $\mathcal{D}_{\Delta_{N+1}}$, even if they may have different thresholds.
%The validity of this assumption is based on our assumption that $T$ is sufficiently large, so that each individual agent has almost the same belief on $\mathcal{D}_{\Delta_{N+1}}$ before and after knowing his/her own threshold.
%As a result, $\mathcal{D}_{\Delta_{N+1}}$ must have this property: for any $\vartheta=0.5,1.5,\ldots,N+0.5$, if $\Pr_{\mathcal{D}_{\Delta_{N+1}}}(\alpha_\vartheta=0)>0$, then $\Pr_{\mathcal{D}_{\Delta_{N+1}}}(\alpha_\vartheta=0)=1$.
%Suppose this is not the case, and $\Pr_{\mathcal{D}_{\Delta_{N+1}}}(\alpha_\vartheta=0)$ is neither $0$ nor $1$.
%An agent with threshold $\vartheta$ will know $\alpha_\vartheta>0$ for sure, i.e., (s)he will believe $\Pr_{\mathcal{D}_{\Delta_{N+1}}}(\alpha_\vartheta=0)=0$, and thus has a different belief on $\mathcal{D}_{\Delta_{N+1}}$ than other agents.

Next, we describe a natural property that is shared by most social choice mechanism, including the one in this paper.

\begin{definition}
A mechanism is \emph{anonymous} if it always outputs the same alternative for any two collections of reports $\mathbf{r}^{(1)}=(r_1^{(1)},\ldots,r_T^{(1)})\in\calR^T, \mathbf{r}^{(2)}=(r_1^{(2)},\ldots,r_T^{(2)})\in\calR^T$ such that $\mathbf{r}^{(1)}$ is a permutation of $\mathbf{r}^{(2)}$.
\end{definition}

In other words, an anonymous mechanism cannot decide the output alternative based on agents' identities.

We have the following strong impossibility result.

\begin{theorem}\label{thm:impossibility}
Under the setting with an unknown distribution of agent types, there exists a constant $\tau>0$ such that no anonymous mechanism always outputs the alternative favored by more than half of the agents with probability more than $1-\tau$ in any $\tau$-strong symmetric Bayes Nash equilibrium. 
%This is true even with $M=N=2$.
\end{theorem}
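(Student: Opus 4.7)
The plan is to exhibit, for every anonymous mechanism $M$, a ``confused'' $\tau$-strong symmetric Bayes Nash equilibrium in which two different scenarios with opposite majority wishes produce statistically indistinguishable empirical distributions of reports, forcing $M$ to err with probability at least $1/4 - o(1)$.

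For the hard instance, I would fix $P_L = P_H = \tfrac{1}{2}$, $P_{hH} = 0.8$, $P_{hL} = 0.2$ (respecting positive correlation), and let $\mathcal{D}_{\Delta_3}$ place probability $\tfrac{1}{2}$ each on $\vec\alpha^{(1)} = (0.1, 0.4, 0.5)$ and $\vec\alpha^{(2)} = (0.4, 0.1, 0.5)$. Direct arithmetic shows that the ``pro-$\acc$'' mass $\alpha_F + \alpha_C P_{h \mid W}$ equals exactly $0.5$ both at $(\vec\alpha^{(1)}, H)$ and at $(\vec\alpha^{(2)}, L)$, while the respective majority wishes are $\acc$ and $\rej$.

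Given any $M$, I would view the induced interaction as a two-super-player zero-sum game in which the ``\acc-side'' (types $F$ and $C$-with-signal-$h$, both preferring $\acc$ under the posterior $\Pr[H \mid h] = 0.8$) maximizes $\Pr[M = \acc]$ and the ``\rej-side'' maximizes $\Pr[M = \rej]$; von Neumann's minimax theorem supplies a saddle point $(\sigma^{\acc}, \sigma^{\rej})$, and $\Sigma^\dagger$ is defined as the symmetric profile in which every \acc-preferring agent plays $\sigma^{\acc}$ and every \rej-preferring agent plays $\sigma^{\rej}$. By anonymity and Chernoff concentration, the multisets of reports in the two tied scenarios agree up to $o(1)$ total variation; if $M$ outputs $\acc$ with probability $q$ there, the per-scenario errors are $1-q$ and $q$, averaging to $\tfrac{1}{2}$. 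Weighted by the $\tfrac{1}{4}$ prior mass assigned to each, the unconditional failure probability is $\geq 1/4 - o(1)$, and any $\tau < 1/4$ closes the theorem.

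The main obstacle is verifying $\Sigma^\dagger$ is $\tau$-strong for every $M$. I would split by the composition of a deviating coalition $D$: in case (i), where $D$ lies entirely within one side, a partial sub-coalition deviation is (by law of large numbers, for large $T$) equivalent to super-player $1$ playing a mixed strategy differing from $\sigma^{\acc}$, and the maximin property $\max_{\sigma_1} u_1(\sigma_1, \sigma^{\rej}) = V$ achieved at $\sigma^{\acc}$ ensures $\Pr[M = \acc]$ weakly decreases; in case (ii), where $D$ is mixed, a direct adaptation of Lemma~\ref{lem:conflict} rules out weak-Pareto improvements because any strict gain for an \acc-preferrer forces a strict loss for every \rej-preferrer; finally, the $\tau$ slack is chosen to absorb the $O(1/\sqrt{T})$ Chernoff fluctuations scaled by the utility bound $B$, mirroring the quantitative bookkeeping in the proof of Theorem~\ref{thm:SBNE2}. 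The most delicate step is case (i), where I must carefully translate partial sub-coalition deviation into a super-player-level mixed-strategy perturbation in order to invoke the maximin inequality.
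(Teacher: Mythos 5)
There is a genuine gap, and it is structural rather than technical. The theorem (read as the paper proves it, and as its corollary about the truthful profile requires) asserts that for \emph{every} anonymous mechanism and \emph{every} symmetric strategy profile $\Sigma$ that attains accuracy $1-\tau$, that $\Sigma$ fails to be a $\tau$-strong Bayes Nash equilibrium. Your plan instead constructs one particular profile $\Sigma^\dagger$ and argues it is simultaneously an equilibrium and inaccurate. That establishes the existence of a bad equilibrium, not the non-existence of a good one; a mechanism designer could still hope to coordinate agents on some other accurate equilibrium, and the corollary (truth-telling cannot be both accurate and $\tau$-strong) would not follow. The paper's proof avoids this by quantifying over an \emph{arbitrary} accurate symmetric $\Sigma$ and then defining the deviation relative to $\Sigma$ itself: the candidate-friendly agents (present only in one of the two type-distribution settings) masquerade as contingent agents with fabricated $50/50$ signals and play $\Sigma$'s own contingent strategy, which makes the post-deviation report multiset in environment $LX$ statistically close (Lemma~\ref{lem:tvd}) to the truthful report multiset in $HY$, where accuracy forces the mechanism to accept. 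That ``mimic the honest strategy on fake signals'' trick is what lets the argument apply uniformly to every accurate profile, and it is absent from your proposal.

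Even taken on its own terms, the verification that $\Sigma^\dagger$ is a $\tau$-strong equilibrium does not go through. The game is not zero-sum between two ``super-players'': agents have heterogeneous cardinal utilities bounded by $B$, contingent agents' payoffs depend on the realized world rather than on $\Pr[\mathcal{M}=\acc]$ alone, and Definition~\ref{def:SBNE} is stated in terms of individual \emph{ex-ante} utilities, so a minimax saddle point of an aggregated zero-sum surrogate need not preclude profitable coalitional deviations in the actual Bayesian game. More pointedly, your case (ii) invokes ``a direct adaptation of Lemma~\ref{lem:conflict},'' but the paper explicitly notes that this lemma relies on the baseline profile nearly always selecting the majority wish ($\lambda_H^\acc\approx 1$, $\lambda_L^\acc\approx 0$) and ``does not hold if truth-telling is replaced by an arbitrary strategy profile''; your $\Sigma^\dagger$ errs with probability about $1/4$ by design, so the conflict-of-interest argument is unavailable exactly where you need it. (On the positive side, your tied-scenario construction with $\alpha_F+\alpha_C P_{h\mid W}=0.5$ in both environments and matched binomial variances is elegant and would give $o(1)$ total-variation indistinguishability rather than the paper's $0.123$ bound; it just cannot carry the quantifier over all accurate equilibria.)
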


Since the truthful strategy profile is symmetric, we have the following corollary about the impossibility of a truthful mechanism.

\begin{corollary}
Under the setting with unknown distribution of agent types defined above, there exists a constant $\tau>0$ such that no anonymous mechanism satisfies both of the followings:
\begin{itemize}
    \item the mechanism outputs the alternative favored by more than half of the agents with probability more than $1-\tau$;
    \item under the mechanism, the truthful strategy profile is a $\tau$-strong Bayes Nash equilibrium. 
\end{itemize} 
%This is true even with $M=N=2$.
\end{corollary}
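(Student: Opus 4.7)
My plan is to prove the impossibility by exhibiting a specific adversarial distribution $\mathcal{D}_{\Delta_{3}}$ over the simplex of type fractions, together with a small constant $\tau>0$, such that for any anonymous mechanism $\mathcal{M}$ and any symmetric $\tau$-strong Bayes Nash equilibrium $\sigma$, the mechanism fails to output the majority wish with probability exceeding $1-\tau$. I would put equal mass on two realizations $R_1$ and $R_2$ of the vector $(\alpha_F,\alpha_U,\alpha_C)$ whose majority wishes disagree in at least one world $W^{\star}$. A natural first try is the pair $R_1=(p,q,r)$ and $R_2=(q,p,r)$ with $p>1/2>q$, so that $R_1$'s majority wish is always $\acc$ and $R_2$'s is always $\rej$; one can also mix in a third contingent-dominated realization so as to rule out trivial mechanisms.

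The proof then splits into two cases depending on the structure of the symmetric equilibrium strategy $\sigma$, which (since symmetric) can be thought of as inducing a single response distribution $\sigma_F,\sigma_U,\sigma_C$ per type. In the first case, suppose $\sigma_F$, $\sigma_U$, $\sigma_C$ produce distinguishable report distributions, so that the mechanism can in principle tell $R_1$ apart from $R_2$ via the multiset of reports. I would then exhibit a coalition deviation: for instance, the coalition of all candidate-friendly agents in the realization where they are the losing side can each play $\sigma_C$ using their own received signal as input. Because signals are i.i.d.\ from $P_{\cdot W}$ conditional on the world and independent of type, this deviation produces a multiset of reports whose distribution exactly matches the truthful multiset for a realization that has different predetermined/contingent fractions. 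By anonymity, the mechanism outputs the same alternative as in that other realization; with $R_1,R_2$ chosen so that this output is the deviating coalition's preferred alternative, the gain is a constant bounded below (depending on $P_L,P_H$ and the utility gap), which for small enough $\tau$ contradicts the $\tau$-strong condition of Definition~\ref{def:SBNE}.

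In the second case, suppose the equilibrium strategy fails to distinguish the types, i.e.\ $\sigma_F,\sigma_U,\sigma_C$ collapse to the same response distribution on the signal space. Then by anonymity plus the fact that all agents' signals share the same world-conditional distribution $P_{\cdot W}$, the distribution of the multiset of reports depends only on $W$, not on $(\alpha_F,\alpha_U,\alpha_C)$. Hence the mechanism's output in $(R_1,W^{\star})$ coincides with its output in $(R_2,W^{\star})$, but the majority wishes differ, so the mechanism is wrong on at least one of the two realizations. Since each realization carries mass $1/2$, the failure probability is at least $1/2$, and choosing $\tau<1/2$ finishes the case.

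The main obstacle, which I expect to be the delicate technical step, is the first case: ensuring that the direction of the mimicry is actually beneficial to the deviating coalition rather than harmful. The majority-wish structure is highly constrained (contingent agents prefer $\acc$ in $H$ and $\rej$ in $L$, so replacing $F$ by $C$ cannot flip the wish in $L$ toward $\acc$), so the realizations $R_1,R_2$ and the deviating type must be chosen in coordination with the world $W^{\star}$ in which they disagree. I would also need to handle the quantitative bound on $\tau$ carefully: the coalition's gain must be a constant independent of $T$, which requires a Chernoff argument to show that the mechanism's output distribution under the deviated profile is within $o(1)$ of the output distribution on the mimicked realization, plus a constant-gap calculation using the monotonicity of ex-post utilities, in the same spirit as the no-win-win Lemma~\ref{lem:conflict}.
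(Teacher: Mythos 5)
There is a genuine gap, concentrated exactly where you flag the ``delicate technical step.'' The paper's corollary is a one-line consequence of Theorem~\ref{thm:impossibility} (the truthful profile is symmetric), so the real content is the theorem, which you are re-deriving; but your construction and your quantitative claim both break down. First, the construction: with $R_1=(p,q,r)$, $R_2=(q,p,r)$ and $p>\frac12>q$, the losing predetermined coalition in either realization has fraction $q<\frac12$, while the non-deviating agents (a fraction $p+r>\frac12$ of the population, including the $p$-fraction predetermined majority) keep reporting truthfully. The post-deviation report multiset therefore still contains a more-than-half share of the winning side's truthful reports and cannot be confused, under anonymity, with the truthful multiset of the other realization; the mimicry has no beneficial direction here. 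The paper instead puts \emph{contingent-majority} realizations in the support of $\mathcal{D}_{\Delta_3}$ (e.g.\ $(\alpha_F,\alpha_C,\alpha_U)=(1/3,2/3,0)$ versus $(0,1,0)$), so that correctness forces the mechanism's output to track the \emph{world} via the signal reports, and the candidate-friendly agents can profitably masquerade as contingent agents to push the signal statistics toward the high-world profile.

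Second, and more fundamentally, your closeness claim is false as stated: you cannot get the deviated report distribution within $o(1)$ of the mimicked realization's truthful distribution by a Chernoff argument, because the deviating coalition controls only its own reports, not those of the truthful contingent agents who draw signals from the \emph{actual} world. In the paper's instance the two relevant count distributions are $\Bin(T/3,1/2)$ (environment $LX$ after deviation) and $\Bin(T,1/6)$ (environment $HY$ truthful); these have equal means but different variances, and their total variation distance tends to a \emph{constant} ($<0.123$, Lemma~\ref{lem:tvd}), not to $0$. The argument succeeds only because this constant is bounded away from $1$: the mechanism's required $99\%$ accept rate in $HY$ then transfers to an accept rate of at least $99\%-12.3\%$ in the deviated $LX$, which is enough for a constant utility gain. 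Your two-case dichotomy (fully distinguishable versus fully collapsed strategies) also does not cover the partially-informative intermediate regime; the paper's proof avoids any such case split by deriving the mechanism's behavior on the confusable environments directly from the $1-\tau$ accuracy requirement and then applying the total variation bound. To repair your proposal you would need to (i) replace $R_1,R_2$ by contingent-majority realizations whose majority wishes differ across worlds, (ii) tune the signal parameters so the deviated and mimicked count distributions share a mean, and (iii) replace the $o(1)$ claim by an explicit constant total-variation estimate.
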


\subsection{Proof of Theorem~\ref{thm:impossibility}}
\label{append:proof4}
Consider an anonymous mechanism and an arbitrary symmetric strategy profile $\Sigma$.
%For $\vartheta\in\{0.5,1.5,\ldots,N+0.5\}$, let $\alpha_\vartheta$ denote the fraction of agents having threshold $\vartheta$.
%Those $\alpha_\vartheta\cdot T$ agents will report based on the signal they received, and there can only be $2$ different reports $\ell$ or $h$.
Let $\beta_{F,\ell}$ be the fraction of candidate-friendly agents that receive signal $\ell$.
Let $\beta_{F,h},\beta_{U,\ell},\beta_{U,h},\beta_{C,\ell}$ and $\beta_{C,h}$ have similar meanings.
The mechanism can only see how many different reports there are, and how many agents report each of them;
in particular, the mechanism cannot see who reports which.
The following proposition follows immediately from the above remarks.

\begin{proposition}\label{prop:anonymous}
Fix a symmetric strategy profile $\Sigma$.
If a mechanism is anonymous, then the values $\{\alpha_F,\alpha_U,\alpha_C\}\cup\{\beta_{F,\ell},\beta_{F,h},\beta_{U,\ell},\beta_{U,h},\beta_{C,\ell},\beta_{C,h}\}$ completely determine the output of the mechanism.
\end{proposition}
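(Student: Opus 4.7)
The plan is to unpack the definitions carefully and show that, once we condition on a symmetric strategy profile $\Sigma$ together with all the aggregate counts, the entire joint distribution of reports seen by the mechanism is determined up to a permutation of agents, and hence anonymity pins down the output distribution.

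First I would make the notion of symmetry explicit: a symmetric strategy profile here means that any two agents of the same type use the same (possibly mixed) strategy $\sigma_\tau:\calS\to\Delta(\calR)$, for $\tau\in\{F,U,C\}$. Consequently, conditioned on type $\tau$ and signal $s\in\{\ell,h\}$, the report distribution is a fixed $\sigma_\tau(s)\in\Delta(\calR)$, independent of the agent's identity.

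Next I would translate the given parameters into a count of agents in each $(\text{type},\text{signal})$ bucket. Since $|F|=\alpha_F T$, $|U|=\alpha_U T$, $|C|=\alpha_C T$, and a $\beta_{\tau,s}$ fraction of type-$\tau$ agents receive signal $s$, the number of agents in bucket $(\tau,s)$ equals $\alpha_\tau\beta_{\tau,s}T$, a quantity determined by the listed parameters. Writing out the joint distribution of the tuple of reports $(R_1,\ldots,R_T)$ (where $R_t$ is the report of agent $t$), the $R_t$'s are independent conditional on types and signals, and each $R_t$ is drawn from $\sigma_{\tau_t}(s_t)$. Thus the distribution of the multiset $\{\!\{R_1,\ldots,R_T\}\!\}$ equals the distribution of independent draws from the mixture determined solely by the bucket counts, hence by $\{\alpha_\tau\}\cup\{\beta_{\tau,s}\}$.

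Finally I would invoke anonymity: by definition, if $\mathbf{r}^{(1)}$ and $\mathbf{r}^{(2)}$ are permutations of each other, the mechanism's output coincides, so the output is a function of the multiset $\{\!\{R_1,\ldots,R_T\}\!\}$ only (together with any internal randomness of the mechanism, which is independent of the agents). Combining with the previous step, the distribution over outputs is a function of $\{\alpha_F,\alpha_U,\alpha_C\}\cup\{\beta_{F,\ell},\beta_{F,h},\beta_{U,\ell},\beta_{U,h},\beta_{C,\ell},\beta_{C,h}\}$, which is exactly the claim.

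I do not expect any serious obstacle; the only subtlety is making sure the argument handles mixed strategies and a randomized mechanism, which is handled by phrasing the conclusion in terms of the distribution of the output rather than a deterministic output, and by noting that agents' randomization is i.i.d.\ within each $(\tau,s)$ bucket so only the bucket sizes matter.
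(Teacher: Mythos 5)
Your proof is correct and follows essentially the same route as the paper, which simply remarks that an anonymous mechanism only sees how many agents submit each report and declares the proposition immediate; your write-up just fills in the details (bucket counts $\alpha_\tau\beta_{\tau,s}T$ determine the multiset distribution of reports under a symmetric profile, and anonymity makes the output a function of that multiset). Your care about mixed strategies and randomized mechanisms, phrased via the output distribution, also matches the paper's own later remark that for randomized mechanisms the same data determines the \emph{probability} of each outcome.
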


We also need the following technical lemma.
\begin{lemma}\label{lem:tvd}
The total variation distance between the two binomial distributions $\Bin(T, 1/6)$ and $\Bin(T/3, 1/2)$ is less than $0.123$ for sufficiently large $T$.
\end{lemma}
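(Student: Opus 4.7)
The plan is to reduce the claim to a calculation about two centered Gaussians via a local central limit theorem. The key observation is that both $\Bin(T,1/6)$ and $\Bin(T/3,1/2)$ have the same mean $T/6$, but different variances: $\sigma_1^2 = T \cdot (1/6)(5/6) = 5T/36$ for the first, and $\sigma_2^2 = (T/3)(1/2)(1/2) = 3T/36$ for the second. So the two PMFs are centered at the same point and differ essentially only in scale.

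First I would invoke a local limit theorem (equivalently, the Stirling approximation applied to the binomial coefficients) to show that, uniformly for $k$ in the bulk $|k - T/6| = O(\sqrt{T \log T})$, we have $\Pr[\Bin(T,1/6) = k] = \phi_{\sigma_1}(k-T/6)(1+o(1))$, and similarly for the other distribution, where $\phi_\sigma$ denotes the $\mathcal N(0,\sigma^2)$ density. A standard Chernoff tail argument shows the contribution to $d_{TV}$ from outside this bulk region is $o(1)$. Summing the pointwise estimates and applying a Riemann-sum argument gives
\[
d_{TV}\bigl(\Bin(T,1/6),\Bin(T/3,1/2)\bigr) \longrightarrow d_{TV}\bigl(\mathcal N(0,\sigma_1^2),\mathcal N(0,\sigma_2^2)\bigr) \quad\text{as } T\to\infty.
\]
By scale invariance of TV, the right-hand side equals $d_{TV}(\mathcal N(0,1),\mathcal N(0,\sigma^2))$ with $\sigma^2 = \sigma_2^2/\sigma_1^2 = 3/5$.

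Next I would compute this limiting TV explicitly. Setting $\sigma = \sqrt{3/5}$ and $\phi_\sigma(x) = \sigma^{-1}\phi(x/\sigma)$, the two densities cross exactly at $\pm x^\ast$ where $\phi(x^\ast) = \phi_\sigma(x^\ast)$. Solving gives
\[
x^{\ast 2} = \frac{2\sigma^2 \ln(1/\sigma)}{1-\sigma^2} = \frac{3}{2}\ln\!\frac{5}{3}.
\]
Since the narrower density $\phi_\sigma$ dominates on $(-x^\ast,x^\ast)$ and is dominated in the tails, the standard identity $d_{TV} = \tfrac{1}{2}\int|f-g|$ simplifies to
\[
d_{TV}\bigl(\mathcal N(0,1),\mathcal N(0,\sigma^2)\bigr) = 2\bigl(\Phi(x^\ast/\sigma) - \Phi(x^\ast)\bigr).
\]
Numerically, $x^\ast \approx 0.87535$ and $x^\ast/\sigma \approx 1.13007$, so the TV limit is approximately $2(0.87078 - 0.80931) \approx 0.12294$. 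For sufficiently large $T$ this is bounded above by $0.123$.

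The main obstacle is the numerical tightness: the limit $\approx 0.12294$ sits only about $6 \times 10^{-4}$ below the target threshold $0.123$, so the local CLT error terms must be controlled with explicit rates (rather than just asymptotic $o(1)$) if one wants a clean finite-$T$ statement. A convenient way is to use a quantitative local CLT (e.g.\ an explicit Stirling bound with error $O(1/T)$ per point), combined with a Chernoff tail bound, to obtain $|d_{TV}(T) - d_{TV}(\infty)| = O(1/\sqrt{T})$. Since the paper only needs the bound to hold ``for sufficiently large $T$,'' this qualitative version of the argument suffices, and the quantitative refinement is optional.
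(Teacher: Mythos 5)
Your proposal is correct and follows essentially the same route as the paper: reduce via the central limit theorem to the total variation distance between two Gaussians with common mean $T/6$ and variances $5T/36$ and $T/12$, locate the crossing points of the two densities, and evaluate the resulting integral numerically to obtain $\approx 0.12294 < 0.123$. The differences are cosmetic --- you express the answer through $\Phi$ at the normalized crossing point $x^\ast=\sqrt{\tfrac{3}{2}\ln\tfrac{5}{3}}$ rather than as a direct integral, and you are somewhat more explicit than the paper about why the discrete-to-continuous step is legitimate (a local limit theorem in the bulk plus Chernoff tail bounds, which the paper compresses into ``by the Central Limit Theorem \dots plus $o(1)$'').
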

\begin{proof}
By Central Limit Theorem, the total variation distance between $\Bin(T, 1/6)$ and $\Bin(T/3, 1/2)$ is at most the total variation distance between $\mathcal{N}(T/6, 5T/36)$ and $\mathcal{N}(T/6, T/12)$ plus $o(1)$, which, by shifting the mean of the Gaussian distribution, is the total variation distance between $\mathcal{N}(0, 5T/36)$ and $\mathcal{N}(0, T/12)$ plus $o(1)$.

Let $f(x)$ and $g(x)$ be the probability density function for $\mathcal{N}(0, 5T/36)$ and $\mathcal{N}(0, T/12)$ respectively.
To calculate the total variation distance, firstly, straightforward calculations reveal that $f(x)<g(x)$ on $\left(-\sqrt{\frac{5}{24}\ln\frac{5}{3}T},\sqrt{\frac{5}{24}\ln\frac{5}{3}T}\right)$ and $f(x)>g(x)$ on $\left(-\infty,-\sqrt{\frac{5}{24}\ln\frac{5}{3}T}\right)\cup\left(\sqrt{\frac{5}{24}\ln\frac{5}{3}T},\infty\right)$.
Therefore, the total variation distance between $\mathcal{N}(0, 5T/36)$ and $\mathcal{N}(0, T/12)$ is
\begin{align*}
    \int_{-\sqrt{\frac{5}{24}\ln\frac{5}{3}T}}^{\sqrt{\frac{5}{24}\ln\frac{5}{3}T}} g(x)-f(x)dx&=\int_{-\sqrt{\frac{5}{24}\ln\frac{5}{3}T}}^{\sqrt{\frac{5}{24}\ln\frac{5}{3}T}}\frac1{\sqrt{2\pi \frac{T}{12}}}e^{-\frac12\frac{x^2}{T/12}}-\frac1{\sqrt{2\pi \frac{5T}{36}}}e^{-\frac12\frac{x^2}{5T/36}}dx\\
    &=\int_{-\sqrt{\frac{5}{24}\ln\frac{5}{3}}}^{\sqrt{\frac{5}{24}\ln\frac{5}{3}}}\frac1{\sqrt{2\pi \frac{1}{12}}}e^{-\frac12\frac{y^2}{1/12}}-\frac1{\sqrt{2\pi \frac{5}{36}}}e^{-\frac12\frac{y^2}{5/36}}dy\tag{where $y=x/\sqrt{T}$}\\
    &<0.12295.\tag{Calculated by computer}
\end{align*}
Thus, the total variation distance between $\Bin(T, 1/6)$ and $\mathcal{N}(T/6, 5T/36)$ is at most $0.12295+o(1)$, which implies the lemma.
\end{proof}

Now we are ready to present the proof of Theorem~\ref{thm:impossibility}.
\begin{proof}[Proof of Theorem~\ref{thm:impossibility}]
We consider the following instance. %with two worlds and two signals ($M=N=2$).

The prior distribution of the two worlds (world $L$ and world $H$) is given by $P_L=0.98$ and $P_H=0.02$.
The probability distribution of the two signals under each of the two worlds is given by $P_{\ell L}=1$, $P_{\ell H}=0$, $P_{hL}=5/6$, $P_{hH}=1/6$.
%For each agent $t$ with threshold $\theta_t$, we have $v_t(n,\acc)=1$ and $v_t(n,\rej)=0$ if $n>\theta_t$, and we have $v_t(n,\acc)=0$ and $v_t(n,\rej)=1$ if $n<\theta_t$.
For each candidate-friendly agent $t$, we have $v_t(H,\acc)=3,v_t(H,\rej)=0,v_t(L,\acc)=2$ and $v_t(L,\rej)=1$.
For each contingent agent $t$ we have $v_t(H,\acc)=3,v_t(H,\rej)=0,v_t(L,\acc)=1$ and $v_t(L,\rej)=2$.
For each candidate-unfriendly agent $t$, we have $v_t(H,\acc)=1,v_t(H,\rej)=2,v_t(L,\acc)=0$ and $v_t(L,\rej)=3$.
Lastly, $\mathcal{D}_{\Delta_3}$ is defined as follows: with probability $1/2$ we are in setting $X$ and the fractions of agents with types $F,C,U$ are $\alpha_{F}^{(1)}=1/3,\alpha_{C}^{(1)}=2/3,\alpha_{U}^{(1)}=0$ respectively; with probability $1/2$ we are in setting $Y$, the fractions of agents with types $F,C,U$ are $\alpha_{F}^{(2)}=0,\alpha_{C}^{(2)}=1,\alpha_{U}^{(2)}=0$ respectively.
This finishes the description of the instance.
Note there are 2 worlds and 2 settings yielding 4 possible environments which we label $LX$, $LY$, $HX$, and $HY$.

Let $\tau = 0.001$. 
Suppose there exists a mechanism $\mathcal{M}$ that outputs the majority wish with probability at least $1 - \tau$ in a symmetric strategy profile $\Sigma$.   
We will show that $\Sigma$ cannot be a $\tau$-strong Bayes Nash equilibrium.    

Notice that the contingent agents are the majority in all the four settings.
In both environments $HX$ and $HY$, each of which happens with probability $1\%$, the majority wish is always to accept.
Thus, the mechanism must accept with probability at least $99\%$ in each environment for otherwise it will be far from being achieving $1-\tau$ accuracy.  
Similarly, in both environments $LX$ and $LY$,  each of which occurs with probability $49\%$, the mechanism must accept with probability at most $1\%$ for otherwise it will be far from being achieving $1-\tau$ accuracy. 

To show that $\Sigma$ cannot be a $\tau$-strong Bayes Nash equilibrium, consider that the set of deviating agents are all the candidate-friendly agents. 
Those candidate-friendly agents pretend they are contingent agents such that signal $\ell$ is received with probability $1/2$ and signal $h$ with probability $1/2$.
These candidate-friendly agents will follow the strategy of the real contingent agents according to $\Sigma$.
Let $\Sigma'$ be the resultant strategy profile.
We aim to show that those deviating candidate-friendly agents can increase their utilities significantly in $\Sigma'$.

For an intuitive argument, in $LX$, the mechanism sees that all the agents are contingent, and the fraction of agents receiving signal $h$ follows distribution $\Bin(T/3,1/2)$; in $HY$, the mechanism also sees that all the agents are contingent, and the fraction of agents receiving signal $h$ follows distribution $\Bin(T,1/6)$.
Lemma~\ref{lem:tvd} implies that the mechanism cannot distinguish between environments $LX$ and $HY$ with probability more than $87.7\%$.
Before deviating, the mechanism will output $\acc$ with probability at most $1\%$ in $LX$; after deviating, the mechanism will output $\acc$ with probability at least $99\%\cdot 87.7\%>1\%$ in $LX$ by confusing $LX$ with $HY$.
The candidate-friendly agents will benefit from deviating.

To make the arguments in the previous paragraph more rigorous, let $\mathcal{M}(X)=1$ if the mechanism outputs \acc when all the agents are contingent, agents play according to $\Sigma$, and the number of agents receiving signal $h$ is $X$.
Let $\mathcal{M}(X)=0$ if the output is \rej under the same circumstance.
This definition is well-defined due to Proposition~\ref{prop:anonymous}.

In $HY$, the mechanism outputs \acc with probability $\E_{X\sim\Bin(T,1/6)}[\mathcal{M}(X)]$.
In $LX$, when the candidate-friendly agents deviate to $\Sigma'$, the mechanism outputs \acc with probability $\E_{X\sim\Bin(T/3,1/2)}[\mathcal{M}(X)]$.
Lemma~\ref{lem:tvd} implies\bt{Do we need to cite any theorem for the inequality below?} $$\left|\E_{X\sim\Bin(T,1/6)}[\mathcal{M}(X)]-\E_{X\sim\Bin(T/3,1/2)}[\mathcal{M}(X)]\right|<0.123.$$
Since we have shown that the mechanism outputs \acc with probability at least $99\%$ in $HY$, the mechanism outputs \acc with probability at least $86.7\%$ in $LX$ when the candidate-friendly agents deviate.
Since environment $LX$ happens with probability $0.49$, the expected utility for each candidate-friendly agent $t$ is at least $0.49\times 86.7\%\times v_t(L,\acc)=0.84966$.
However, without deviating, \acc will be output with probability at most $0.98\cdot1\%+0.02=0.0298$, and the expected utility for each candidate-friendly agent $t$ is upper-bounded by $0.0298\times v_t(H,\acc)=0.0894$.
We have seen that the candidate-friendly agents receive a utility gain of at least $0.76026>\tau$.
\end{proof}

As a remark, our impossibility result Theorem~\ref{thm:impossibility} holds even for randomized mechanism.
If mechanism can be randomized, Proposition~\ref{prop:anonymous} becomes that $\{\alpha_F,\alpha_U,\alpha_C\}\cup\{\beta_{F,\ell},\beta_{F,h},\beta_{U,\ell},\beta_{U,h},\beta_{C,\ell},\beta_{C,h}\}$ completely determines the \emph{probability} that the mechanism outputs \acc (or \rej).
In the proof of Theorem~\ref{thm:impossibility}, $\mathcal{M}(X)$ becomes the \emph{probability} that the mechanism outputs \acc, rather than either $0$ or $1$.
The remaining part of the proof is exactly the same.

\section{Remarks, Limitations and Future Work}
\label{sect:conclusion}
We presented a mechanism that elicits and aggregates the information and preferences of voters over two alternatives. In particular, voters' truthfully reporting their signals forms a strong Bayes Nash equilibrium, and in this case the mechanism outputs the alternative that is favored by the majority with overwhelming probability.   

We have assumed agents are Bayesian.
Although this assumption may not be completely realistic in practice, we believe the theoretical properties of our mechanism still hold to a certain extent.
For example, agents may not exactly predict $\alpha_CT_{hh}+\alpha_F$ or $\alpha_CT_{h\ell}+\alpha_F$ in practice, but it is reasonable to assume that their predictions are roughly around these two numbers, or in between.
If so, all the theoretical properties will still hold.
\citet{prelec2017solution} also assume Bayesian agents in the theoretical analysis of the surprisingly popular method, but their empirical experiments with human subjects suggest the method still works in practice.

We would also like to remark that, although our analysis assume $T$ is large, the failure probability in Theorem~\ref{thm:msw} and $\varepsilon$ in Theorem~\ref{thm:SBNE2} are exponentially small in $T$, making our mechanism applicable to the scenario with a small number of agents.

Our mechanism can be extended to the setting where a fixed fraction $\tau$ of acceptance votes is required to adopt a policy. For example, in many countries, constitutional amendments require a $2/3$ majority to pass.
To do this, we only need to change Step~4 of Mechanism~\ref{mechanism:M=N=2} such that $\bar\delta$ is the prediction with rank $\tau T$, and change Step~5 such that \acc is announced if more than $\tau$ fraction of agents report type $F$  and \rej is announced if more than $1 - \tau$ fraction of agents report type $U$. 

One limitation is that our mechanism only deals with two alternatives.
While this is natural in many scenarios (accept/reject, election with two candidates), extending our results to more than two alternatives is an interesting future direction, but it faces a multitude of hurdles: the median technique will not straightforwardly work, the ``surprisingly popular'' formalism faces impossibility results~\citep{prelec2017solution}, and  Gibbard-Satterthwaite social choice impossibility results apply.

Another interesting future direction is deployment. 
It would be interesting to test this mechanism in the real world, and then test to see if participants are, in aggregate, more happy when this mechanism is used as compared with a majority vote mechanism.
For example, groups could choose a movie to watch where different participants have different information about the potential movies.  In particular, not everyone has seen both movies.  Participants could be surveyed afterwards about how enjoyable the movie was.
In general, our Wisdom-of-the-Crowd-Voting mechanism could be tested in any place that currently uses majority voting to better aggregate information. 

Of course, it would suffer from some of the same drawbacks of majority voting: that the majority can impose their will on the minority.  It is not clear if either one of these enjoys fairness properties not included by the other, but that would be another direction of future inquiry.  %It would be interesting to test this mechanism in the real world, and then test to see if participants are, in aggregate, more pleased with the alternatives selected by this mechanism as compared with those selected by a majority vote.  

%We mention two possible future directions.  The first is deployment.  It would be interesting to test this mechanism in the real world, and then test to see if participants are, in aggregate, more happy when this mechanism is used as compared with a majority vote mechanism.   For example, groups could choose a movie to watch where different participants have different information about the potential movies.  In particular, not everyone has seen both movies.  Participants could be surveyed afterwards about how enjoyable the movie was.  
 
%The second direction is to understand under what more general models the mechanism works.  For example, what if the agents are not true Bayesians?  When should we still expect this mechanism to work if the agents are performing heuristic calculations?  Additionally, are there modifications that should be made to the mechanism to make it increasingly robust for use with non-Bayesian agents? 

\bibliographystyle{plainnat}
\bibliography{reference}

\newpage
\appendix
\section{Comparison with Feddersen and Pesendorfer's Work}
\label{append:comparison}
\citet{Feddersen1994} consider a two-alternative setting similar to our model.
As mentioned before, \citet{Feddersen1994} consider the standard majority voting where each agent votes for an alternative, while assuming agents play a (Bayes) Nash equilibrium strategy profile.
We, on the other hand, design a more sophisticated mechanism to incentivize truth-telling.

Other than this difference, the state space, the signal space and the space of agents' types in \citet{Feddersen1994} are all continuous.
For Feddersen and Pesendorfer's continuous setting, in the Nash equilibrium, agents' strategies have three types: always vote for one alternative, always vote for the other, and vote the alternative based on the signal.
These are similar to our three types: candidate-friendly, contingent, candidate-unfriendly.
However, due to continuity, each agent needs to compute his/her type by solving an equation with a complicated Riemann integral (while agents' know their types directly according to their utility functions in our setting).
A phenomenon in their setting due to continuity is that the fraction of contingent voters in the Nash equilibrium approaches zero when the number of the voters goes to infinity.

Although agents can be classified by three types in both settings, we would like to clarify a fundamental difference in the motivation behind this classification.
In our setting, each agent's type reflects his/her preference over the two alternatives.
In Feddersen and Pesendorfer's setting, each agent ``chooses'' a type in a specific way so that the majority voting scheme outputs the correct alternative with high probability.
Therefore, in their setting, agents' types are chosen for collaboratively aggregating information, and should not be viewed as reflections of their preferences.
Although an agent's preference does affect his/her choice, the purpose for  choosing a type is for information aggregation, not for reflecting the preference.

At a high level, our mechanism includes some novel techniques, including the surprisingly popular technique and the median trick, to ensure the output of the correct alternative in the setting with strategic agents.
In Feddersen and Pesendorfer's setting, it may be surprisingly that the simple majority voting scheme is already enough for output the correct alternative. The reason behind this is that certain implicit techniques for guaranteeing the correct output are ``embedded'' into agents' strategic behaviors.
In other words, the agents are the ones who work out those techniques, not the mechanism.
That is why we mentioned in the introduction that the agents in Feddersen and Pesendorfer's setting need to have much more sophistication compared with the agents in our setting.

Another difference is that they are considering a Nash equilibrium strategy profile, while our mechanism satisfies the much stronger criterion that truth-telling strategies form a \emph{strong} Bayes Nash equilibrium.

\section{Extension to Non-binary Worlds and Signals}
\label{append:nonbinary}
In Sect.~\ref{sect:prelim}, we generalized the model to the setting with non-binary worlds and non-binary signals.
In Sect.~\ref{sect:nonbinaryWorlds}, we present our mechanism for the setting with binary signals and non-binary worlds.
In Sect.~\ref{sect:nonbinary-signals}, we show that the generalization to non-binary signals is straightforward.

\subsection{Model and Preliminaries}
\label{sect:prelim}
In our non-binary model, as in our binary model, $T$ agents are voting for two \emph{alternatives}, \acc and \rej (corresponding to ``accept'' and ``reject'').  However, in our non-binary model there is a set of $N$ possible \emph{worlds} (or \emph{states}) $\calW=\{1,\ldots,N\}$, where the higher the value the more \acc is preferred to  \rej.
Agents do not know which world is the actual world that they are in.
They have a prior common belief on the likelihood of each world.
Let $W$ be the actual world which is viewed as a random variable.
Let $P_n=\Pr(W=n)$ be the prior over worlds.
Each agent knows the values of $P_1,\ldots,P_n$ as prior beliefs.
We further assume $P_n>0$ for each $n$, for otherwise we can remove world $n$ from $\calW$ without loss of generality.

Each agent will then receives a \emph{signal} from the set $\calS=\{1,\ldots,M\}$.
Let $S_t$ be the random variable representing the signal that agent $t$ receives.
Given $W=n$, for any $n$, the signals agents receive have the same distribution and are conditionally independent.
Let $P_{mn}=\Pr(S_t=m\mid W=n)$ be the probability that signal $m$ will be received (by an arbitrary agent $t$) if the actual world is $n$.
The set of values $\{P_{mn}:m=1,\ldots,M; n=1,\ldots,N\}$ is known by all the agents.
Signals are positively correlated to the worlds:
\begin{equation}\label{eqn:postive_correlation}
    \Pr\left(S_t\geq m\mid W=n_1\right)=\sum_{m'=m}^MP_{m'n_1}>\Pr\left(S_t\geq m\mid W=n_2\right)=\sum_{m'=m}^MP_{m'n_2}
\end{equation}
for any worlds $n_1>n_2$, any signal $m$, and any agent $t$.

\gs{I added this.}
The remaining definitions for the non-binary model in this section are rather analogous to the binary case, but we include them for completeness.  

Each agent $t$ has a \emph{utility function} $v_t:\calW\times\{\acc,\rej\}\to\{0,1,\ldots,B\}$.
As mentioned earlier, a higher value of $W$ indicates \acc is more preferable: $v_t(n_1,\acc)>v_t(n_2,\acc)$ and $v_t(n_1,\rej)<v_t(n_2,\rej)$ for any $n_1,n_2\in\calW$ with $n_1>n_2$.
Since we can always rescale agents' utilities, for simplicity, we assume without loss of generality that agents' utilities are integers and bounded by $B\in\mathbb{Z}^+$.
Agents, with their prior beliefs and receiving signals, will have posterior beliefs about the distribution of $W$ and react to the mechanism in a way maximizing their expected utilities.

We assume $v_t(n,\acc)\neq v_t(n,\rej)$ for each agent $t$ and each $n\in\calW$, so that agents always strictly prefer one alternative over the other.  Given a world $n$, let $T(\acc, n)=\{t\mid v_t(\acc,n)>v_t(\rej,n)\}$ be the set of agents that prefer $\acc$ in world $n$ and let  $\alpha^{\acc}_n = \frac{|T(\acc, n)|}{|T|}$ be the fraction of agents that prefer $\acc$ in world $n$.  We can similarly define $T(\rej, n)$ and $\alpha^{\rej}_n = \frac{|T(\rej, n)|}{|T|} = 1 - \alpha^{\acc}_n$.  \gs{If we ever use these, put them in line.}
Naturally, $\alpha_n^\acc$ is increasing in $n$ (when the underlying quality of the candidate increases, more agents prefer \acc) and $\alpha_n^\rej$ is decreasing in $n$.
As before, we assume that the $\alpha^{\acc}_n$ and $\alpha^{\rej}_n$ are common knowledge, which is natural in many scenarios, including the faculty candidate hiring example.
If this assumption does not hold, results in Sect.~\ref{sect:unknowndistribution} show that we cannot achieve the truthful guarantee even under the setting $M=N=2$.

%Throughout this paper, we assume that the $\alpha^{\acc}_n$ and $\alpha^{\rej}_n$ are common knowledge.   
%However, in Section~\ref{sect:unknowndistribution}, we discuss the scenario where this is not the case.

\gs{should we generically define $M(n)$ or only for the true world?} \gs{Is $\alpha_n$ okay terminology?}

For any world $n$, let 
$$M(n) =  \left\{\begin{array}{cc} \acc & \alpha^{\acc}_{n} > \frac{1}{2} \\  \rej & \mbox{otherwise} \end{array}\right.$$  
be the majority preference if the actual world were $n$.  We  assume that $T$ is an odd number to avoid ties.  % The majority , which is exactly the most common \emph{majority vote} in the social choice literature.
\bt{I added this definition}\gs{ok}
\begin{definition}
Given a utility profile $\{v_1,\ldots,v_T\}$ and letting $n^\ast$ be the actual world, we say $M(n^\ast)$ is \emph{the majority wish}.
\end{definition}
The goal is to output the majority wish $M(n^\ast)$, the alternative that is preferred by at least half of the agents in the actual world.

Our results will sometimes require $T$, the number of agents, to be sufficiently large, and it may be helpful to think of $T\rightarrow\infty$.  However, we will always assume that the parameters of the model: $B$, $\{P_n\}_{n\in \calW}$,  $\{P_{mn}\}_{m\in\calS, n\in \calW}$, and $\{\alpha_n^\acc,\alpha_n^\rej\}_{n \in \calW}$, do not depend on $T$ in any way.

In the faculty candidate hiring example, the worlds $\calW=\{1,\ldots,N\}$ describe the quality of the candidate, with $1$ being the worst and $N$ being the best.
The signals $S_t\in\{1,\ldots,M\}$ correspond to the impression of this candidate, with $S_t=1$ being the worst impression and $S_t=M$ being the best impression.
It is natural to assume that $S_t$'s are positively correlated to $W$, which agrees with our model.

\subsubsection{Candidate-Friendly, Contingent and Candidate-Unfriendly Agents}
Let $\calL=\{n\in\calW\mid \alpha_n^\acc<\frac12\}$ and $\calH=\{n\in\calW\mid\alpha_n^\acc>\frac12\}$.
Since we cannot have $\alpha_n^\acc=\frac12$ for an odd $T$, $\{\calL,\calH\}$ is a partition of $\calW$.
In addition, since $\alpha_n^\acc$ is increasing in $n$, there exists a threshold such that all those $n$ below the threshold belong to $\calL$ and all those $n$ above belong to $\calH$.
Indeed, $\calL$ is the set of ``low quality'' worlds where \rej is preferred, and $\calH$ is the set of ``high quality'' world is preferred.

For each agent $t$, define $\calL_t=\{n\in\calW\mid v_t(n,\rej)>v_t(n,\acc)\}$ and $\calH_t=\{n\in\calW\mid v_t(n,\acc)>v_t(n,\rej)\}$.
Then $\calL_t$ is the set of those low quality worlds based on agent $t$'s utility where $\rej$ is preferred, and $\calH_t$ is the set of those high quality worlds where $\acc$ is preferred for $t$.
Since $v_t(n,\acc)-v_t(n,\rej)$ is increasing in $n$ (the first term is increasing and the second term is decreasing), each agent $t$ also has a personal threshold that separate $\calW$ to $\calL_t,\calH_t$.
We can define the candidate-friendly agents, contingent agents and candidate-unfriendly agents based on whether the personal threshold is below, equal to, or above the average threshold.

We say an agent $t$ is candidate-friendly if $\calH_t\cap\calL\neq\emptyset$.
This says that there exists a world $n\in\calW$ where the fraction of agents preferring \acc is below $1/2$ (i.e., $\alpha_n^\acc<\frac12$) but $t$ still prefers \acc.
Equivalently, an agent $t$ is candidate-friendly if $\calL_t\subsetneq\calL$, or $\calH\subsetneq\calH_t$.
Correspondingly, an agent $t$ is candidate-unfriendly if $\calL_t\cap\calH\neq\emptyset$, or equivalently, $\calL\subsetneq\calL_t$, or $\calH_t\subsetneq\calH$.
An agent $t$ is contingent if $\calL_t=\calL$, or equivalently, $\calH_t=\calH$.
We still use $F$, $C$, $U$ to denote the three types of agents, and use $\alpha_F,\alpha_C,\alpha_U$ to denote their fractions as before.
As a remark, a candidate-friendly agent (candidate-unfriendly agent resp.) does not always prefer \acc (\rej resp.) like before, (s)he merely has a threshold below (above resp.) the average.

Let $L=\max\{n\in\calL\}$ be the maximum world where \rej is preferred by the majority, and $H=\min\{n\in\calH\}$ be the minimum world where \acc is preferred by the majority.
We clearly have $H=L+1$.
For each agent $t$, let $L_t=\max\{n\in\calL_t\}$ be the maximum world where \rej is preferred, and let $H_t=\min\{n\in\calH_t\}$ be the minimum world where \acc is preferred.
Set $L_t=0$ if $\calL_t=\emptyset$ and $H_t=N+1$ if $\calH_t=\emptyset$.
Clearly, $H_t=L_t+1$.

In the binary setting, in the proof of Theorem~\ref{thm:msw}, we have discussed three cases: 1) $\alpha_F>0.5$, 2) $\alpha_U>0.5$ and 3) $\alpha_F<0.5$ and $\alpha_U<0.5$.
However, in the non-binary setting here, by the way we define the three types of agents, we are always in the third case.
\begin{proposition}\label{prop:nonbinarymedian}
$\alpha_F<0.5$ and $\alpha_U<0.5$.
\end{proposition}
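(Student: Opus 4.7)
The plan is to show each inclusion directly from the definitions: every candidate-friendly agent prefers $\acc$ in world $L$, and every candidate-unfriendly agent prefers $\rej$ in world $H$. Since by definition fewer than half of the agents prefer $\acc$ in world $L$ (that is what $L\in\calL$ means) and fewer than half prefer $\rej$ in world $H$, the two desired bounds follow immediately.

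\textbf{Step 1: Rephrasing the type conditions in terms of thresholds.} Recall that each agent $t$ has a personal threshold $L_t$ with $\calL_t=\{1,\dots,L_t\}$ and $\calH_t=\{L_t+1,\dots,N\}$ (using $L_t=0$ or $L_t=N$ at the extremes), and the global threshold $L$ satisfies $\calL=\{1,\dots,L\}$, $\calH=\{L+1,\dots,N\}$. Then agent $t$ is candidate-friendly iff $\calL_t\subsetneq\calL$, which is equivalent to $L_t<L$; and agent $t$ is candidate-unfriendly iff $\calL\subsetneq\calL_t$, equivalently $L_t>L$ (i.e., $H_t>H$).

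\textbf{Step 2: Bounding $\alpha_F$.} If $t$ is candidate-friendly, then $L_t<L$, so $L\in\calH_t$, i.e., $v_t(L,\acc)>v_t(L,\rej)$. Hence every candidate-friendly agent belongs to the set $T(\acc,L)$ of agents preferring $\acc$ in world $L$. Therefore $\alpha_F\le\alpha_L^\acc$. Since $L\in\calL$, the defining inequality gives $\alpha_L^\acc<\tfrac12$, yielding $\alpha_F<\tfrac12$.

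\textbf{Step 3: Bounding $\alpha_U$.} Symmetrically, if $t$ is candidate-unfriendly, then $L_t>L$, so $H\in\calL_t$ and $v_t(H,\rej)>v_t(H,\acc)$. Hence every candidate-unfriendly agent belongs to $T(\rej,H)$, giving $\alpha_U\le\alpha_H^\rej=1-\alpha_H^\acc$. Since $H\in\calH$, we have $\alpha_H^\acc>\tfrac12$, hence $\alpha_U<\tfrac12$.

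\textbf{Where the difficulty (if any) lies.} There is no real technical obstacle; the argument is unpacking the definitions of candidate-friendly/unfriendly together with the fact that $L$ is the threshold separating $\calL$ and $\calH$. The one place to be a bit careful is the boundary cases $L=0$ or $L=N$ (i.e., when $\calL$ or $\calH$ is empty): in those cases the definition of candidate-friendly (resp.\ candidate-unfriendly) makes the corresponding set empty, so $\alpha_F=0$ (resp.\ $\alpha_U=0$), and the bound $<\tfrac12$ holds trivially.
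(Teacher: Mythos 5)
Your proposal is correct and follows essentially the same route as the paper: both arguments observe that every candidate-friendly agent has $L_t<L$ and hence prefers $\acc$ in world $L$, so $\alpha_F\le\alpha_L^\acc<\tfrac12$, with the symmetric argument for $\alpha_U$. The only cosmetic difference is that the paper upgrades the inclusion to the equality $\alpha_F=\alpha_L^\acc$ (which is not needed for the bound), while you add an explicit treatment of the boundary cases $\calL=\emptyset$ or $\calH=\emptyset$.
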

\begin{proof}
We will only show $\alpha_F<0.5$, as the proof for $\alpha_U<0.5$ is similar.
For a candidate-friendly agent $t$, we have $\calL_t\subsetneq\calL$, so $L_t<L$.
Therefore, $t$ would prefer $\acc$ if the actual world is $L$.
Conversely, if an agent would prefer $\acc$ in the world $L\in\calL$, (s)he must be a candidate-friendly agent by our definition.
Therefore, the set of candidate-friendly agents is exactly the set of agents who would prefer $\acc$ in world $L$.
Thus, $\alpha_F=\alpha_L^\acc$.
The proposition follows from $\alpha_L^\acc<\frac12$ implied by the definition of $\calL$.
\end{proof}

\subsubsection{Additional Notations}  
Given a strategy profile $\Sigma=\{\sigma_1,\ldots,\sigma_T\}$ and a mechanism $\mathcal{M}$, let $\lambda_n^\mathcal{\acc, M}(\Sigma)$ be the probability that alternative \acc is announced as the winner given the actual world is $n$, then $\lambda_n^\mathcal{\rej, M}(\Sigma)=1-\lambda_n^\mathcal{\acc, M}(\Sigma)$ is the probability that alternative \rej wins given the actual world is $n$.
We will omit the superscript $\mathcal{M}$ when it is clear what mechanism we are discussing.

All the agents' \emph{ex-ante} utilities depend exclusively on $\lambda_1^\acc(\Sigma),\ldots,\lambda_N^\acc(\Sigma)$ (or $\lambda_1^\rej(\Sigma),\ldots,\lambda_N^\rej(\Sigma)$), and each agent $t$'s utility is given by
\begin{equation}
u_t(\Sigma)=\sum_{n=1}^NP_n\left(\lambda_n^\acc(\Sigma)v_t(n,\acc)+\lambda_n^\rej(\Sigma)v_t(n,\rej)\right)\label{eqn:ut_1}
\end{equation}
By substituting $\lambda_n^\mathcal{\rej}(\Sigma)=1-\lambda_n^\mathcal{\acc}(\Sigma)$,
\begin{equation}
u_t(\Sigma)=\sum_{n=1}^NP_nv_t(n,\rej)+\sum_{n=1}^NP_n\lambda_n^\acc(\Sigma)(v_t(n,\acc)-v_t(n,\rej)).\label{eqn:ut_2}
\end{equation}

We will always use $\Sigma^\ast=\{\sigma_1^\ast,\ldots,\sigma_T^\ast\}$ to denote the truthful strategy profile.

\gs{add notation table.}

\gs{Also update above with new lambda.}

\bt{done}

Table~\ref{tab:notation} lists all the frequently used notations.

\begin{table}[ht]
    \centering
    \begin{tabular}{ll}
    \hline
    notation & meaning\\
    \hline
        $\calW=\{1,\ldots,N\}$ & the set of all worlds \\
        $\calS=\{1,\ldots,M\}$ & the set of all signals\\
       %$S_t$ & the signal agent $t$ receives\\
       % $W$ & the actual world\\
        $P_n$ & the prior belief for the probability the actual world is $n$\\
        $P_{mn}$ & the probability of receiving signal $m$ under world $n$\\
        %$T_{m'm}$ & the posterior belief for another agent to receive $m'$ given that signal $m$ is received\\
        $v_t(n,\acc),v_t(n,\rej)$ & the (\emph{ex-post}) utility for agent $t$ for alternative \acc, \rej if the actual world is $n$\\
        %$B$ & the upper bound for all agents' (\emph{ex-post}) utilities\\
        $u_t(\Sigma)$ & the (\emph{ex-ante}) expected utility for agent $t$ given strategy profile $\Sigma$\\
        $\alpha_n^\acc,\alpha_n^\rej$ & the fraction of agents preferring \acc, \rej under world $n$\\
        $M(n)$ & the majority favored alternative under world $n$\\
        $\calL$ & the set of worlds where more than half of the agents prefer \rej\\
        $\calH$ & the set of worlds where more than half of the agents prefer \acc\\
        $\calL_t$ & the set of worlds where \rej is preferred for agent $t$\\
        $\calH_t$ & the set of worlds where \acc is preferred for agent $t$\\
        $L$ & the maximum world where \rej is preferred by the majority\\
        $H$ & the minimum world where \acc is preferred by the majority\\
        $L_t$ & the maximum world where \rej is  preferred for agent $t$\\
        $H_t$ & the minimum world where \acc is preferred for agent $t$\\
        $F,C,U$ & candidate-friendly agents, contingent agents, candidate-unfriendly agents\\
        $\alpha_F,\alpha_C,\alpha_U$ & fractions of the three types of agents\\
        $\lambda_n^\acc(\Sigma),\lambda_n^\rej(\Sigma)$ & the probability a given mechanism outputs \acc, \rej for strategy profile $\Sigma$\\
        $\Sigma^\ast$ & the truthful strategy profile\\
        \hline
    \end{tabular}
    \caption{Table of notations.}
    \label{tab:notation}
\end{table}

\subsection{Non-binary Worlds}
\label{sect:nonbinaryWorlds}
In this section, we consider the generalization to the setting with more than two worlds $N>2$, while keeping the binary signal assumption $M=2$.
We will see in the next section that the generalization to non-binary signals is simple.
For this section, we will use $\ell$ to denote signal $1$ and $h$ to denote signal $2$.

In the case $M=N=2$, we have asked each agent his/her received signal, type, and posterior belief on the fraction of agents who will report signal 1.
In the case $N>2$ here, while it is still natural to ask an agent for his/her signal, asking for a posterior prediction and keeping the mechanism as before will not work here.
In particular, this will make Theorem~\ref{thm:msw} fail.
To reason this intuitively, it is easy to see that, if the mechanism is required to output the alternative favored by the majority, the mechanism will output $\acc$ if the actual world is in $\calH$ and output $\rej$ if the actual world is in $\calL$.
To ensure this, we need to make sure the median of the posterior prediction is between $P_{hL}$ and $P_{hH}$.
However, while this is true for $N=2$ as Theorem~\ref{thm:keyInequality} suggests (in fact, all the possible posterior predictions, $T_{h\ell}$ and $T_{hh}$, are between $P_{hL}$ and $P_{hH}$), this is not necessarily true for $N>2$.

As a solution to this issue, for each contingent agent $t$, we ask him/her for a value between $P_{hL_t}$ and $P_{hH_t}$ (agents with $\calL_t=\emptyset$ report a value between $0$ and $P_{h1}$ and agents with  $\calH_t=\emptyset$ report a value between $P_{hN}$ and $1$), and the median of these values will be between $P_{hL}$ and $P_{hH}$.
A natural way to ask an agent for this value can be, \emph{please give a percentage value $q$ such that you would like alternative \acc if the fraction of agents reporting signal $h$ is more than $q$ percent, and you would like alternative \rej otherwise.}

Our mechanism is presented in Mechanism~\ref{mechanism:general}.\footnote{We can consider the same modification as in the binary case (Section~\ref{sect:alternative_mechanism}).}

\begin{algorithm}
\caption{The Wisdom-of-the-Crowd-Voting Mechanism for $N>2$ and $M=2$}
\label{mechanism:general}
\begin{algorithmic}[1]
    \STATE Each agent $t$ reports to the mechanism the signal (s)he receives (either $\ell$ or $h$), the type (either $F$, $C$, or $U$).
    \STATE If agent $t$ reports type $F$, his/her reported signal will be automatically treated as $h$; if agent $t$ reports type $U$, his reported signal will be automatically treated as $\ell$.
    \STATE If an agent reports type $C$, ask him/her to report a value $q_t\in[0,1]$ such that (s)he would like \acc \emph{if and only if} the fraction of agents who report $h$ is more than $q_t$.  \emph{The value $q_t$ should be given with the treatment in the previous step being considered, and the mechanism makes this clear to the agents.} For an agent with type $F$, set $q_t=0$. For an agent with type $U$, set $q_t=1$.
    \STATE Compute the \emph{median} of those $q_t$, denoted by $\bar{q}$.
    \STATE If the fraction of the agents reporting $h$ is more than the median $\bar{q}$, announce \acc being the winning alternative; otherwise, announce \rej being the winning alternative.
\end{algorithmic}
\end{algorithm}

In our faculty candidate hiring example, the questionnaire corresponding to the mechanism looks like the following:

\begin{enumerate}
    \item Choose one of the following: as compared with the average faculty member, independent of the candidate's qualification:
    \begin{enumerate}
        \item[I] I am more predisposed toward rejection;   %less inclined to accEven if more than half of the faculty members prefer \acc, I may still prefer \rej;
        \item[II] I am more predisposed toward acceptance;
        \item[III] Neither I nor II, i.e., I am with the average faculty member.
        %Even if more than half of the faculty members prefer \acc, I may still prefer \rej;
        %\item[II] Either \acc or \rej is suitable if backed by the majority.  %I am happy with \acc if more than half of the faculty members prefer \acc, and I am happy with \rej if more than half of the faculty members prefer \rej;
        %\item[III] I may still prefer \acc even if more than half of the faculty members prefer \rej.
    \end{enumerate}
    \item What is your impression of this candidate during the individual interview between you and this candidate?
    \begin{enumerate}
        \item I had a good impression.
        \item I did not have a good impression.
    \end{enumerate}
   
    % If your choice is I, your provision ballot will be \rej; if you choice is II, your provisional ballot will be \acc; if your choice is III, your provisional ballot will be cast according to your Q1 answer. 
    %If your choice is I, your answer in Q1 will be treated as (b); if your choice is II, your answer in Q1 will be treated as (a).
    \item Your ``provisional ballot'' will be cast as follows: 
    \begin{itemize}
        \item Accept: if you choose II in Question 1, or, if you choose III in Question 1 and (a) in Question 2;
        \item Reject: if you choose I in Question 1, or, if you choose III in Question 1 and (b) in Question 2.
    \end{itemize}
    \item If you answered III for Question 1, what fraction of provisional ballots do you predict will be cast for accept?  
    %
    % Please give a percentage value $q$ such that you would like alternative $\acc$ if the fraction of provision ``accept''  ballots  is greater than $q$, and you would like alternative $\rej$ if otherwise. %Please answer this question assuming the treatment in the previous question.
\end{enumerate}

\medskip

Next, we will prove that the mechanism outputs the alternative favored by the majority with high probability, and the truthful strategy profile is an $\varepsilon$-strong Bayes Nash equilibrium for $\varepsilon=o(1)$.
For simplicity and clarity in describing the ideas behind the proofs, we will not perform the Chernoff bound analyses as in Section~\ref{sect:omitted_proofs_mainTheorem}, and we will assume $T\rightarrow\infty$ as it is in Sect.~\ref{sect:twoworldstwosignals}.
As a result, for any world $n$, the fractions of agents receiving signal $\ell$ and $h$ are, almost surely, $P_{\ell n}$ and $P_{hn}$ respectively.

\begin{theorem}\label{thm:maxsw_general}
Suppose $T\rightarrow\infty$. If all the agents play the truthful strategy, then our mechanism outputs an alternative favored by more than half of the agents.
\end{theorem}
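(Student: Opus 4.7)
The plan is to exploit the fact that, in the limit $T\to\infty$, the fraction of $h$-reports (after the Step~2 conversion) is deterministically $f_n:=\alpha_F+\alpha_CP_{hn}$ when the actual world is $n$, and then argue that the median threshold $\bar q$ from Step~4 always falls in the narrow interval that forces Step~5 to announce $M(n^\ast)$.

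First I would invoke Proposition~\ref{prop:nonbinarymedian} to get $\alpha_F<1/2$ and $\alpha_U<1/2$. Since $T$ is odd, once the $T$ reported thresholds are sorted, the bottom $\alpha_FT$ entries (all equal to $0$, from type~$F$) and the top $\alpha_UT$ entries (all equal to $1$, from type~$U$) are each a strict minority, so $\bar q$ must equal some contingent agent's $q_t$. Next I would characterize a truthful contingent's report: since $\calL_t=\calL$ and $\calH_t=\calH$ for a contingent agent, we have $L_t=L$ and $H_t=H=L+1$, and such an agent strictly prefers \acc iff $n\in\calH$, iff $P_{hn}\geq P_{hH}$, iff $f_n\geq f_H$. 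Because $P_{hn}$ is strictly increasing in $n$ by~(\ref{eqn:postive_correlation}) specialized to $M=2$, the realized fraction takes a distinct deterministic value in each world, so any $q_t\in[f_L,f_H)$ makes ``fraction $>q_t$'' equivalent to ``$n\in\calH$'' and hence satisfies the Step~3 ``if and only if'' condition defining truthfulness. In particular every contingent's $q_t$, and therefore the median $\bar q$, lies in $[f_L,f_H)$.

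Finally I would compare the realized fraction to $\bar q$ in each world. If $n^\ast\in\calH$, then $f_{n^\ast}\geq f_H>\bar q$ (strict by $\bar q<f_H$), so Step~5 announces \acc $=M(n^\ast)$. If $n^\ast\in\calL$, then $f_{n^\ast}\leq f_L\leq\bar q$, so the fraction is not strictly greater than $\bar q$, and Step~5 announces \rej $=M(n^\ast)$. In both cases the output is the majority wish.

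The main subtle point is the strictness $\bar q<f_H$: a contingent agent is not permitted to truthfully report $q_t=f_H$, because then Step~5 would announce \rej in world $H$ while the agent strictly prefers \acc, violating Step~3's ``iff''. This is what makes the upper endpoint of the interval open, and it is exactly what is needed for the strict comparison $f_H>\bar q$ above. The analogous worry at the lower endpoint does not arise because Step~5 outputs \rej on equality, which is already the desired outcome in world $L$. For the analogous finite-$T$ statement a Chernoff bound, as in Theorem~\ref{thm:msw}, would replace the deterministic $f_n$ by fractions that concentrate within radius $o(1)$ of $f_n$, yielding the majority wish except on an exponentially small event.
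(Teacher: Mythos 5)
Your proof is correct and follows essentially the same route as the paper's: compute the deterministic fraction of $h$-reports $\alpha_F+\alpha_C P_{hn}$ in world $n$, use Proposition~\ref{prop:nonbinarymedian} to pin the median $\bar q$ between $\alpha_F+\alpha_C P_{hL}$ and $\alpha_F+\alpha_C P_{hH}$, and compare. Your only departure is a more careful treatment of the interval endpoints (the half-open interval $[f_L,f_H)$ of admissible truthful reports versus the paper's open interval), which sharpens but does not change the argument.
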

\begin{proof}
Suppose all the agents report truthfully.
Let $n$ be the actual world.
We need to show that \acc is announced if and only if $n\in\calH$.
We assume $n\in\calH$ without loss of generality, as the analysis for $n\in\calL$ is similar.
When all the agents play the truthful strategy profile $\Sigma^\ast$, the fraction of agents reporting signal $h$ is $\alpha_{F}+\alpha_{C}\cdot P_{hn}$.

On the other hand, for any contingent agent $t$, (s)he prefers \acc if $n\in\calH_t=\calH$, and (s)he prefers \rej if $n\in\calL_t=\calL$.
If (s)he were asked to give a value such that (s)he would like \acc if and only if the fraction of agents who \emph{receive} signal $h$ is less than this value, (s)he would have report a value between $P_{hL}$ and $P_{hH}$.
Now, considering that, as instructed by the mechanism, those $\alpha_{F}\cdot T$ (resp. $\alpha_{U}\cdot T$) agents will always \emph{report} signal $h$ (resp. signal $\ell$) regardless of what they receive,
(s)he will report $q_t\in(\alpha_{F}+\alpha_C\cdot P_{hL},\alpha_{F}+\alpha_C\cdot P_{hH})$.
By our mechanism, for any candidate-friendly agent $t$, we have $q_t=0$, and for any candidate-unfriendly agent $t$, we have $q_t=1$.

Since $\alpha_F<0.5$ and $\alpha_U<0.5$ (Proposition~\ref{prop:nonbinarymedian}), it is then easy to see that the median $\bar{q}$ is between $\alpha_{F}+\alpha_{C}\cdot P_{hL}$ and $\alpha_{F}+\alpha_{C}\cdot P_{hH}$, which is less than the fraction of agents reporting signal $h$ (which is $\alpha_{F}+\alpha_{C}\cdot P_{hn}$ as computed earlier).
The last step of our mechanism will make sure \acc is output.
\end{proof}

As a remark, if we do not assume $T\rightarrow\infty$, to show that the statement in Theorem~\ref{thm:maxsw_general} fails with an exponentially low probability, we need to make an extra assumption that the median $\bar{q}$ is not exponentially close to the two endpoints $\alpha_{F}+\alpha_{C}\cdot P_{hL}$ and $\alpha_{F}+\alpha_{C}\cdot P_{hH}$.
This is a natural assumption, as an agent's reported value should not depend on $T$.
In addition, in practice, it is natural to expect that most agents will report values that are around the midpoint of the interval $(\alpha_{F}+\alpha_{C}\cdot P_{hL_t},\alpha_{F}+\alpha_{C}\cdot P_{hH_t})$.

Next, we show that our mechanism satisfies the truthful property.
Again, we consider $T\rightarrow\infty$.

\begin{theorem}\label{thm:truthful_generalN}
Suppose $T\rightarrow\infty$. The truthful strategy profile forms a strong Bayes Nash Equilibrium.
\end{theorem}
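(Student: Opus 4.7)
The plan is to mirror the three-case structure of Theorem~\ref{thm:SBNE_limit}, using Theorem~\ref{thm:maxsw_general} to conclude that in the limit $\lambda_n^\acc(\Sigma^\ast) = 1$ for $n \in \calH$ and $\lambda_n^\acc(\Sigma^\ast) = 0$ for $n \in \calL$. Hence for any competing profile $\Sigma'$, the signed difference $\Delta_n := \lambda_n^\acc(\Sigma') - \lambda_n^\acc(\Sigma^\ast)$ is non-positive on $\calH$ and non-negative on $\calL$. Setting $g_t(n) := v_t(n, \acc) - v_t(n, \rej)$, which is strictly increasing in $n$ by the utility monotonicity assumption, every contingent agent satisfies $g_t < 0$ on $\calL$ and $g_t > 0$ on $\calH$ since $\calL_t = \calL$ and $\calH_t = \calH$. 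So $\Delta_n g_t(n) \leq 0$ pointwise, giving $u_t(\Sigma') - u_t(\Sigma^\ast) = \sum_n P_n \Delta_n g_t(n) \leq 0$ with strict inequality whenever $\Sigma'$ induces any output change at all. A coalition containing a contingent agent therefore either strictly hurts that agent (violating item~2 of Definition~\ref{def:SBNE}) or leaves every utility unchanged (violating item~3).

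The centerpiece would be a no-win-win lemma for predetermined agents, generalizing Lemma~\ref{lem:conflict_limit}. For $t_1 \in F$ and $t_2 \in U$, the type definitions force $H_{t_1} \leq L$ and $L_{t_2} \geq H$, so $g_{t_1}(L), g_{t_1}(H) > 0$ while $g_{t_2}(L), g_{t_2}(H) < 0$. Strict monotonicity then yields $g_{t_1}(H) > g_{t_1}(L)$ and $|g_{t_2}(L)| > |g_{t_2}(H)|$, so $g_{t_1}(L)/|g_{t_2}(L)| < g_{t_1}(H)/|g_{t_2}(H)|$. I would pick positive $\alpha, \beta$ with $\beta/\alpha$ strictly between these two ratios and set $w(n) := \alpha g_{t_1}(n) + \beta g_{t_2}(n)$. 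The ratio bounds force $w(L) < 0 < w(H)$, and the strict monotonicity of $w$ extends this to $w < 0$ on $\calL$ and $w > 0$ on $\calH$. Reapplying the pointwise argument from the contingent step to $w$ gives $\sum_n P_n \Delta_n w(n) \leq 0$, which is exactly $\alpha [u_{t_1}(\Sigma') - u_{t_1}(\Sigma^\ast)] + \beta [u_{t_2}(\Sigma') - u_{t_2}(\Sigma^\ast)] \leq 0$. Both utility changes must therefore vanish whenever both are simultaneously non-negative, ruling out any mixed F+U deviating coalition.

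It remains to rule out pure-F and pure-U coalitions. For $D \subseteq F$, note that the mechanism converts each truthful F report to the effective pair $(q_t, s_t) = (0, h)$, which is the extremal input biasing the mechanism toward \acc (smallest possible $q_t$ and signal $h$). Any deviation by an F agent therefore leaves the effective pair unchanged or strictly weakens it, so $\lambda_n^\acc(\Sigma') \leq \lambda_n^\acc(\Sigma^\ast)$ for every $n$; combined with the truthful baseline this forces $\Delta_n = 0$ on $\calL$ and $\Delta_n \leq 0$ on $\calH$. Since $H_t \leq L$ makes $g_t > 0$ throughout $\calH$ for every $t \in F$, each F agent has $u_t(\Sigma') - u_t(\Sigma^\ast) \leq 0$, precluding any strict gain. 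The symmetric argument handles $D \subseteq U$.

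The hard part will be the no-win-win lemma: unlike the binary setting, an F and a U agent can disagree at several ``threshold worlds,'' and a priori a carefully engineered mixed deviation might simultaneously push $\lambda_n^\acc$ upward on some worlds of $\calL$ (favoring F) and downward on some worlds of $\calH$ (favoring U). The convex combination $w$ finesses this by collapsing the two-agent question into a single synthetic ``contingent-type'' inequality, and its feasibility rests crucially on the strict monotonicity of $g_{t_1}, g_{t_2}$ together with the sandwich $H_{t_1} \leq L < H \leq L_{t_2}$ built into the type definitions.
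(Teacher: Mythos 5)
Your proposal is correct, and its overall architecture coincides with the paper's: (i) contingent agents already have $\Delta_n g_t(n)\le 0$ pointwise and so cannot join a deviating coalition; (ii) a no-win-win lemma rules out coalitions mixing $F$ and $U$ agents; (iii) the extremality of the converted report $(q_t,s_t)=(0,h)$ (resp.\ $(1,\ell)$) together with the median construction rules out pure-$F$ (resp.\ pure-$U$) coalitions. Where you genuinely depart from the paper is in the proof of the no-win-win step, i.e.\ the analogue of Lemma~\ref{lem:conflict_general_N}. The paper restricts the sum $\sum_n P_n\Delta_n g_{t_1}(n)$ to the two disagreement zones $\calL\cap\calH_{t_1}$ and $\calH\cap\calL_{t_2}$, uses monotonicity of $g_{t_1}$ to replace $g_{t_1}(n)$ by its value at the single threshold world $L$, extracts the scalar inequality $\sum_{n\in\calL\cap\calH_{t_1}}P_n\Delta_n>\sum_{n\in\calH\cap\calL_{t_2}}P_n(-\Delta_n)$, and then reruns the same chain with $g_{t_2}(L)<0$ to force $u_{t_2}$ down. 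You instead choose $\alpha,\beta>0$ with $\beta/\alpha$ strictly between $g_{t_1}(L)/|g_{t_2}(L)|$ and $g_{t_1}(H)/|g_{t_2}(H)|$ so that $w=\alpha g_{t_1}+\beta g_{t_2}$ is a strictly increasing function that is negative on $\calL$ and positive on $\calH$ --- a ``synthetic contingent agent'' --- and then reapply the pointwise sign argument from step (i) to get $\alpha\bigl(u_{t_1}(\Sigma')-u_{t_1}(\Sigma^\ast)\bigr)+\beta\bigl(u_{t_2}(\Sigma')-u_{t_2}(\Sigma^\ast)\bigr)\le 0$. Both arguments rest on the same two facts (strict monotonicity of $g_t$ and the sandwich $H_{t_1}\le L<H\le L_{t_2}$), but yours unifies the contingent and predetermined cases under one sign computation and delivers the symmetric conclusion in a single pass, at the cost of the small extra step of exhibiting feasible $\alpha,\beta$; the paper's version is a more pedestrian chain of inequalities but also records the quantitative inequality on the $\Delta_n$'s, which is what it reuses in the finite-$T$ binary analysis. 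One housekeeping remark: when ruling out a mixed coalition you should pair the \emph{strictly} gaining agent (necessarily predetermined, by step (i)) with an arbitrary member of the opposite predetermined type, so that the combination $\alpha\delta_{t_1}+\beta\delta_{t_2}\le 0$ contradicts $\delta_{t_1}>0$, $\delta_{t_2}\ge 0$; your write-up gets the conclusion right but this pairing is worth making explicit.
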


The ideas behind the proof of Theorem~\ref{thm:truthful_generalN} is similar as before.
Firstly, those contingent agents will not deviate from the truthful strategy, as their utilities have already been maximized.
Secondly, we can prove a lemma similar to Lemma~\ref{lem:conflict_limit} showing that there is a conflict of interest between an arbitrary candidate-friendly agent and an arbitrary candidate-unfriendly agent.
This shows that the set of deviating agents $D$ can only contain either candidate-friendly agents or candidate-unfriendly agents.
This further implies that more than half of the agents are truth-telling.
Finally, the use of median in our mechanism ensures that less than half of the agents' deviating cannot change the output alternative in their favored direction.

\begin{lemma}\label{lem:conflict_general_N}
Let $\Sigma^\ast$ be the truthful strategy profile and $\Sigma'$ be an arbitrary strategy profile. Let $t_1$ be an arbitrary candidate-friendly agent and $t_2$ be an arbitrary candidate-unfriendly agent.
Suppose $T\rightarrow\infty$.
We have
\begin{enumerate}
    \item If $u_{t_1}(\Sigma')>u_{t_1}(\Sigma^\ast)$, then $u_{t_2}(\Sigma')<u_{t_2}(\Sigma^\ast)$.
    \item If $u_{t_1}(\Sigma')<u_{t_1}(\Sigma^\ast)$, then $u_{t_2}(\Sigma')>u_{t_2}(\Sigma^\ast)$
\end{enumerate}
\end{lemma}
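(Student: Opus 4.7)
The plan is to adapt the proof of the binary analogue Lemma~\ref{lem:conflict_limit} using the same ``sensitivity ratio'' idea, but aggregating across multiple worlds via a term-by-term comparison over a double index set. Setting $\delta_t(n) := v_t(n,\acc)-v_t(n,\rej)$ (which is strictly increasing in $n$) and $\Delta_n := P_n(\lambda_n^\acc(\Sigma')-\lambda_n^\acc(\Sigma^\ast))$, equation~(\ref{eqn:ut_2}) gives
\[
u_t(\Sigma')-u_t(\Sigma^\ast) \;=\; \sum_{n\in\calW}\Delta_n\,\delta_t(n).
\]
By Theorem~\ref{thm:maxsw_general} and the $T\to\infty$ assumption, truthful play achieves $\lambda_n^\acc(\Sigma^\ast)=0$ for $n\in\calL$ and $=1$ for $n\in\calH$, so $\Delta_n\geq 0$ on $\calL$ and $\Delta_n\leq 0$ on $\calH$. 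Since $t_1$ is candidate-friendly and $t_2$ is candidate-unfriendly, we have $L_{t_1}<H_{t_1}\leq L<H\leq L_{t_2}<H_{t_2}$, and the two index sets
\[
B := \{H_{t_1},\ldots,L\}\subseteq\calL,\qquad C := \{H,\ldots,L_{t_2}\}\subseteq\calH
\]
are both nonempty and strictly separated: every $n\in B$ is strictly less than every $n'\in C$. On $B\cup C$ we have $\delta_{t_1}>0>\delta_{t_2}$; on $\calL\setminus B$ both $\delta_{t_1},\delta_{t_2}<0$, and on $\calH\setminus C$ both are $>0$.

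For part~(i), suppose for contradiction that $u_{t_1}(\Sigma')>u_{t_1}(\Sigma^\ast)$ and $u_{t_2}(\Sigma')\geq u_{t_2}(\Sigma^\ast)$ (the contrapositive of what we want). A sign check on the four regions above shows that the only nonnegative contribution to $t_1$'s utility gain comes from $\sum_{n\in B}\Delta_n\delta_{t_1}(n)$, and the only nonnegative contribution to $t_2$'s gain comes from $\sum_{n\in C}(-\Delta_n)|\delta_{t_2}(n)|$; every other term in the sum decomposition is nonpositive for the respective agent. Dropping the nonpositive terms yields
\[
\sum_{n\in B}\Delta_n\delta_{t_1}(n)\;>\;\sum_{n\in C}(-\Delta_n)\delta_{t_1}(n),\qquad \sum_{n\in C}(-\Delta_n)|\delta_{t_2}(n)|\;\geq\;\sum_{n\in B}\Delta_n|\delta_{t_2}(n)|,
\]
the first inequality being strict. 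Multiplying the two inequalities and expanding both sides as double sums over $(n,n')\in B\times C$ gives
\[
\sum_{n\in B,\,n'\in C}\Delta_n(-\Delta_{n'})\Bigl[\delta_{t_1}(n)|\delta_{t_2}(n')|-\delta_{t_1}(n')|\delta_{t_2}(n)|\Bigr]\;>\;0.
\]
But for every pair $(n,n')\in B\times C$ we have $n<n'$, so by strict monotonicity of $\delta_t$, $0<\delta_{t_1}(n)<\delta_{t_1}(n')$ and $|\delta_{t_2}(n)|>|\delta_{t_2}(n')|>0$, whence each bracketed factor is strictly negative while each weight $\Delta_n(-\Delta_{n'})$ is $\geq 0$. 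Thus the left-hand side is $\leq 0$, a contradiction.

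The main obstacle is the passage from the binary single-term inequality of Lemma~\ref{lem:conflict_limit} to the multi-world case: the trick is to reduce everything to the term-by-term comparison above, which is valid precisely because $B$ and $C$ are strictly separated index sets (guaranteed by the type definitions). One small technicality is the degenerate case in which $\sum_{n\in C}(-\Delta_n)|\delta_{t_2}(n)|=0$, i.e., $\Delta_n=0$ for all $n\in C$; tracing through the sign analysis, all other nonnegative sources for $t_2$ are then also pinned to zero, forcing $\Delta_n=0$ throughout $\calW$, so $t_1$ cannot strictly gain either, contradiction. Part~(ii) follows from the identical argument with the roles of $t_1$ and $t_2$ (and the sets $B$ and $C$) interchanged.
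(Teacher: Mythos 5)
Your proof of part~(1) is correct, but it takes a genuinely different route from the paper's. Writing $\Delta_n=P_n(\lambda_n^\acc(\Sigma')-\lambda_n^\acc(\Sigma^\ast))$ and $\delta_t(n)=v_t(n,\acc)-v_t(n,\rej)$ as in your write-up, the paper also discards the nonpositive contributions from $\calL_{t_1}$ and $\calH_{t_2}$, but then finishes with a single linear bound rather than a product of inequalities: by monotonicity of $\delta_t$ it replaces $\delta_t(n)$ by $\delta_t(L)$ on both $B$ and $C$ (in the correct direction on each, since $n\le L$ on $B$ and $n\ge H>L$ on $C$), obtaining $u_t(\Sigma')-u_t(\Sigma^\ast)\le \delta_t(L)\bigl(\sum_{n\in B}\Delta_n-\sum_{n'\in C}(-\Delta_{n'})\bigr)$ for \emph{both} agents. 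Since $\delta_{t_1}(L)>0$, a strict gain for $t_1$ forces the bracketed scalar to be strictly positive, and since $\delta_{t_2}(L)<0$, that same scalar forces $u_{t_2}(\Sigma')-u_{t_2}(\Sigma^\ast)<0$. This collapses the whole comparison to one quantity evaluated at the pivot world $L$ and needs no degenerate-case analysis. Your cross-term argument over $B\times C$ is a valid alternative that more directly generalizes the ``sensitivity ratio'' idea of Lemma~\ref{lem:conflict_limit}; the price is the extra case $\Delta_{n'}=0$ for all $n'\in C$, which you handle correctly. Both proofs rest on the same three structural facts: truth-telling pins $\lambda_n^\acc(\Sigma^\ast)$ at $0$ on $\calL$ and $1$ on $\calH$, the sets $B$ and $C$ are nonempty and strictly separated, and $\delta_t$ is strictly increasing.

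One caveat on part~(2): interchanging the roles of $t_1$ and $t_2$ in your argument proves ``if $u_{t_2}(\Sigma')>u_{t_2}(\Sigma^\ast)$ then $u_{t_1}(\Sigma')<u_{t_1}(\Sigma^\ast)$,'' i.e., the analogue of part~(ii) of Lemma~\ref{lem:conflict_limit}, which is exactly what the proof of Theorem~\ref{thm:truthful_generalN} needs. It does \emph{not} prove item~(2) as literally printed (``if $u_{t_1}(\Sigma')<u_{t_1}(\Sigma^\ast)$ then $u_{t_2}(\Sigma')>u_{t_2}(\Sigma^\ast)$''), and that implication is in fact false in general: a profile under which the mechanism always outputs \rej makes $t_1$ strictly worse off, yet it can also make $t_2$ worse off whenever $\calH_{t_2}\neq\emptyset$ and the worlds in $\calH_{t_2}$ carry enough weight. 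The printed item~(2) appears to be a typo for the symmetric statement; your role-swap proves the intended (and downstream-required) version, but you should state explicitly which implication the swap yields.
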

The proof of this lemma is more involved than that of Lemma~\ref{lem:conflict_limit}, but the ideas behind are similar.
\begin{proof}
We will only prove (1), as the proof for (2) is similar.
For ease of notation, in this proof, we assume $t_1=1$ and $t_2=2$ without loss of generality.
Suppose $u_1(\Sigma')>u_1(\Sigma^\ast)$, and we aim to show $u_2(\Sigma')<u_2(\Sigma^\ast)$.
Theorem~\ref{thm:maxsw_general} implies that $\lambda_n(\Sigma^\ast)=0$ for all $n\in\calL$ and $\lambda_n(\Sigma^\ast)=1$ for all $n\in\calH$.
Firstly, we show that
\begin{equation}\label{eqn:conflict_general_N_1}
    \sum_{n\in\calL\cap\calH_1}P_n\left(\lambda_n^\acc(\Sigma')-\lambda_n^\acc(\Sigma^\ast)\right)>\sum_{n\in\calH\cap\calL_2}P_n\left(\lambda_n^\acc(\Sigma^\ast)-\lambda_n^\acc(\Sigma')\right).
\end{equation}
This is because
\begin{align*}
    0&<u_1(\Sigma')-u_1(\Sigma^\ast)\tag{by our assumption}\\
    &=\sum_{n=1}^NP_n\left(\lambda_n^\acc(\Sigma')-\lambda_n^\acc(\Sigma^\ast)\right)(v_1(n,\acc)-v_1(n,\rej))\tag{by \eqref{eqn:ut_2}}\\
    &\leq\sum_{n\in\calW\setminus(\calL_1\cup\calH_2)}P_n\left(\lambda_n^\acc(\Sigma')-\lambda_n^\acc(\Sigma^\ast)\right)(v_1(n,\acc)-v_1(n,\rej))\tag{$\dag$}\\
    &=\sum_{n\in\calL\cap\calH_1}P_n\left(\lambda_n^\acc(\Sigma')-\lambda_n^\acc(\Sigma^\ast)\right)(v_1(n,\acc)-v_1(n,\rej))\\
    &\qquad-\sum_{n\in\calH\cap\calL_2}P_n\left(\lambda_n^\acc(\Sigma^\ast)-\lambda_n^\acc(\Sigma')\right)(v_1(n,\acc)-v_1(n,\rej))\\
    &\leq\sum_{n\in\calL\cap\calH_1}P_n\left(\lambda_n^\acc(\Sigma')-\lambda_n^\acc(\Sigma^\ast)\right)(v_1(L,\acc)-v_1(L,\rej))\\
    &\qquad-\sum_{n\in\calH\cap\calL_2}P_n\left(\lambda_n^\acc(\Sigma^\ast)-\lambda_n^\acc(\Sigma')\right)(v_1(L,\acc)-v_1(L,\rej))\tag{$\ddag$},
\end{align*}
which implies \eqref{eqn:conflict_general_N_1}, where both Step ($\dag$) and ($\ddag$) are based on the following facts.
In particular, ($\dag$) is based on the first two facts, and ($\ddag$) is based on the first and the third facts.
\begin{itemize}
    \item for $n\in\calL$, $\lambda_n^\acc(\Sigma')-\lambda_n^\acc(\Sigma^\ast)=\lambda_n^\acc(\Sigma')-0\geq0$; for $n\in\calH$, $\lambda_n^\acc(\Sigma^\ast)-\lambda_n^\acc(\Sigma')=1-\lambda_n^\acc(\Sigma')\geq0$.
    \item $v_1(n,\acc)-v_1(n,\rej)$ is negative for $n\in\calL_1$ and is positive for $n\in\calH_2$. Notice that this is also true for $v_2$.
    \item the expression $v_1(n,\acc)-v_1(n,\rej)$ is increasing in $n$. This is true for any agent $t$. In particular, for each agent $t$, $v_t(n,\acc)$ is increasing in $n$ and $v_t(n,\rej)$ is decreasing in $n$.
\end{itemize}
Next, we show that \eqref{eqn:conflict_general_N_1} implies $u_2(\Sigma')<u_2(\Sigma^\ast)$.
By the same calculations and analyses above, we have
\begin{align*}
    u_2(\Sigma')-u_2(\Sigma^\ast) 
    \leq&\sum_{n\in\calL\cap\calH_1}P_n\left(\lambda_n^\acc(\Sigma')-\lambda_n^\acc(\Sigma^\ast)\right)(v_2(L,\acc)-v_2(L,\rej))\\
    &-\sum_{n\in\calH\cap\calL_2}P_n\left(\lambda_n^\acc(\Sigma^\ast)-\lambda_n^\acc(\Sigma')\right)(v_2(L,\acc)-v_2(L,\rej))\tag{same calculations above}\\
    <&0,\tag{by  $v_2(L,\acc)-v_2(L,\rej)<0$ and \eqref{eqn:conflict_general_N_1}}
\end{align*}
which implies the lemma.
\end{proof}

Now we are ready to prove Theorem~\ref{thm:truthful_generalN}.
\begin{proof}[Proof of Theorem~\ref{thm:truthful_generalN}]
Suppose otherwise and there is a set of deviating agents $D$.
Let $\Sigma'$ be the profile after the deviation of agents in $D$.
Firstly, we show that $D$ cannot contain a contingent agent.
Notice that such an agent's utility has already been maximized by the truthful profile $\Sigma^\ast$.
Suppose $\lambda_n^\acc(\Sigma')\neq\lambda_n^\acc(\Sigma^\ast)$ for certain $n$.
It must be that $\lambda_n^\acc(\Sigma')>\lambda_n^\acc(\Sigma^\ast)=0$ if $n\in\calL$, and $\lambda_n^\acc(\Sigma')<\lambda_n^\acc(\Sigma^\ast)=1$ if $n\in\calH$.
It is then easy to see that this agent's utility will decrease, which contradicts to 2 of Definition~\ref{def:SBNE}.
Suppose $\lambda_n^\acc(\Sigma')=\lambda_n^\acc(\Sigma^\ast)$ for all $n$.
We have $u_t(\Sigma')=u_t(\Sigma^\ast)$ for every agent $t$.
This already contradicts to 3 of Definition~\ref{def:SBNE}.

Next, Lemma~\ref{lem:conflict_general_N} ensures that $D$ cannot contain both a candidate-friendly agent and a candidate-unfriendly agent.
Assume without loss of generality that $D$ only contains candidate-friendly agents.
%In this case, all the contingent and candidate-unfriendly agents are truth-telling, and notice that there are more than $\frac12T$ such agents.
In order to maximize the chance that \acc is output, those candidate-friendly agents need to maximize the fraction of agents reporting signal $h$ and minimize the median $\bar{q}$.
However, the mechanism always does this for them in the truth-telling profile $\Sigma^\ast$: Step~2 makes sure they report signal $h$, and Step~3 makes sure they report $q_t=0$.
Therefore, those candidate-friendly agents' utilities are maximized by truth-telling, which contradicts to our assumption for $D$.
\end{proof}

\subsection{Non-binary Signals} \label{sect:nonbinary-signals}
There is a simple reduction from the non-binary signal setting to the binary-signal setting.
Suppose the signal space is $\{1,\ldots,M\}$.
To reduce it to a binary signal space $\{\ell,h\}$, we set an arbitrary non-integer number $s_\top$ between $1$ and $M$.
All the signals less than $s_\top$ are reduced to the ``bad'' signal $\ell$, and all the signals greater than $s_\top$ are reduced to the ``good'' signal $h$.
The mechanisms in the previous sections can be adapted to the setting here.
The mechanisms are the same as before, except for the following change: whenever the mechanism asks an agent for a binary signal in the previous setting, the mechanism asks the agent whether the signal (s)he received is less than or more than $s_\top$, which corresponds to signal $\ell$ and $h$ respectively.

For the mechanism in Section~\ref{sect:nonbinaryWorlds}, all the properties, including that the mechanism outputs the alternative favored by more than half of the agents and that the truth-telling strategy profile form a strong Bayes Nash Equilibrium, continue to hold in the non-binary signal setting with exactly the same proofs.

If we are dealing with binary world $\calW=\{L,H\}$, for the mechanism in Section~\ref{sect:twoworldstwosignals}, it is easy to see that these properties also continue to hold here if we prove the following inequality that is similar to the one in Theorem~\ref{thm:keyInequality}:
\begin{equation}\label{eqn:key_nonbinary}
    \forall m\in\{1,\ldots,M\}:P_{h H}>T_{h m}>P_{h L},
\end{equation}
where $P_{hL}$ and $P_{hH}$ are the probabilities that a signal above $s_\top$ is received if the actual world is $L$ and $H$ respectively, and $T_{hm}$ is the probability that an agent who receives signal $m$ believes that another agent will receive a signal that is more than $s_\top$.
Intuitively, if (\ref{eqn:key_nonbinary}) holds, all the agents' posterior predictions are still between $P_{hL}$ and $P_{hH}$, and the majority wish will still be ``surprisingly popular''.
The proof of (\ref{eqn:key_nonbinary}) is by straightforward Bayesian analysis, and is left to the readers.

\end{document}